\newcommand{\cE}{\mathcal{E}}
\newcommand{\cF}{\mathcal{F}}
\newcommand{\cH}{\mathcal{H}}
\newcommand{\cI}{\mathcal{I}}
\newcommand{\cL}{\mathcal{L}}
\newcommand{\cO}{\mathcal{O}}
\newcommand{\cS}{\mathcal{S}}
\newcommand{\C}{\mathbb{C}}
\newcommand{\R}{\mathbb{R}}
\newcommand{\Id}{\mathds{1}}
\newcommand{\E}[2][]{ \ifthenelse{\isempty{#1}}
  {\mathbf{\mathbb{E}}\left[#2\right]}
  {\mathbf{\mathbb{E}}_{#1}\left[#2\right]} }
\newcommand{\tens}{\otimes}
\newcommand{\tr}[1]{\mathrm{Tr}[#1]}
\newcommand{\Ptr}[2]{\mathrm{Tr}_{#1}[#2]}
\newcommand{\ket}[1]{\lvert#1\rangle}
\newcommand{\bra}[1]{\langle#1\rvert}
\newcommand{\braket}[2]{\langle#1\rvert#2\rangle}
\newcommand{\ketbra}[2]{\lvert#1\rangle\langle#2\rvert}
\newcommand{\abs}[1]{\left\lvert#1\right\vert}
\newcommand{\ws}{\hspace{0.5em}}
\newcommand{\norm}[2]{\left\|#1\right\|_{#2}}
\def\norm{\@ifnextchar[{\@with}{\@without}}
\def\@with[#1]#2{\left\|#2\right\|_{#1}}
\def\@without#1{\left\|#1\right\|}
\theoremstyle{plain}
\newtheorem{thm}{Theorem}
\newtheorem{lem}{Lemma}
\newenvironment{customlem}[1]
{\innercustomlem}
{\endinnercustomlem}
\newenvironment{customthm}[1]
{\innercustomthm}
{\endinnercustomthm}
\definecolor{codegreen}{rgb}{0,0.6,0}
\definecolor{codegray}{rgb}{0.5,0.5,0.5}
\definecolor{codepurple}{rgb}{0.58,0,0.82}
\definecolor{backcolour}{rgb}{0.95,0.95,0.92}
\lstdefinestyle{mystyle}{
    backgroundcolor=\color{backcolour},
    commentstyle=\color{codegreen},
    keywordstyle=\color{magenta},
    numberstyle=\tiny\color{codegray},
    stringstyle=\color{codepurple},
    basicstyle=\ttfamily\footnotesize,
    breakatwhitespace=false,
    breaklines=true,
    captionpos=b,
    keepspaces=true,
    numbers=left,
    numbersep=5pt,
    showspaces=false,
    showstringspaces=false,
    showtabs=false,
    tabsize=2
}
\renewcommand{\arraystretch}{1.4}
\definecolor{byzantium}{rgb}{0.44, 0.16, 0.39}
\newcommand{\ubar}[1]{\underaccent{\bar}{#1}}
\renewcommand{\eqref}[1]{Eq.~(\ref{#1})} 
\newcommand{\figref}[1]{Fig.~\ref{#1}} 
\newcommand{\tabref}[1]{Table~\ref{#1}} 
\newcommand{\secref}[1]{Sec.~\ref{#1}} 
\newcommand{\appref}[1]{Appendix~\ref{#1}} 
\begin{document}
\title{Toward Reliability in the NISQ Era: Robust Interval Guarantee for Quantum Measurements on Approximate States}

\author{Maurice Weber} 
\affiliation{Department of Computer Science, ETH Zürich, Universitätstrasse 6, 8092 Zürich, Switzerland} 

\author{Abhinav Anand}
\affiliation{Chemical Physics Theory Group, Department of Chemistry, University of Toronto, Canada.}

\author{Alba Cervera-Lierta}
\affiliation{Chemical Physics Theory Group, Department of Chemistry, University of Toronto, Canada.}
\affiliation{Department of Computer Science, University of Toronto, Canada.}

\author{Jakob S. Kottmann}
\affiliation{Chemical Physics Theory Group, Department of Chemistry, University of Toronto, Canada.}
\affiliation{Department of Computer Science, University of Toronto, Canada.}

\author{Thi Ha Kyaw}
\affiliation{Chemical Physics Theory Group, Department of Chemistry, University of Toronto, Canada.}
\affiliation{Department of Computer Science, University of Toronto, Canada.}

\author{Bo Li}
\affiliation{Department of Computer Science, University of Illinois, Urbana, Illinois 61801, USA}

\author{Alán Aspuru-Guzik}
\email{aspuru@utoronto.ca}
\affiliation{Chemical Physics Theory Group, Department of Chemistry, University of Toronto, Canada.}
\affiliation{Department of Computer Science, University of Toronto, Canada.}
\affiliation{Vector Institute for Artificial Intelligence, Toronto, Canada.}
\affiliation{Canadian  Institute  for  Advanced  Research  (CIFAR)  Lebovic  Fellow,  Toronto,  Canada}

\author{Ce Zhang}
\email{ce.zhang@inf.ethz.ch}
\affiliation{Department of Computer Science, ETH Zürich, Universitätstrasse 6, 8092 Zürich, Switzerland}

\author{Zhikuan Zhao}
\email{zhikuan.zhao@inf.ethz.ch}
\affiliation{Department of Computer Science, ETH Zürich, Universitätstrasse 6, 8092 Zürich, Switzerland}

\begin{abstract}
Near-term quantum computation holds potential across multiple application domains. However, imperfect preparation and evolution of states due to algorithmic and experimental shortcomings, characteristic in the near-term implementation, would typically result in measurement outcomes deviating from the ideal setting. It is thus crucial for any near-term application to quantify and bound these output errors.
We address this need by deriving robustness intervals which are guaranteed to contain the output in the ideal setting. The first type of interval is based on formulating robustness bounds as semi-definite programs, and uses only the first moment and the fidelity to the ideal state. Furthermore, we consider higher statistical moments of the observable and generalize bounds for pure states based on the non-negativity of Gram matrices to mixed states, thus enabling their applicability in the NISQ era where noisy scenarios are prevalent. 
Finally, we demonstrate our results in the context of the variational quantum eigensolver (VQE) on noisy and noiseless simulations.
\end{abstract}

\maketitle
\section{Introduction}

Today's quantum computers are characterized by a low count of noisy qubits performing imperfect operations in a limited coherence time. In this era of quantum computing, the noisy intermediate-scale quantum (NISQ) era \cite{preskill2018quantum}, researchers and practitioners alike strive to heuristically harness limited quantum resources in order to solve classically difficult problems and also to benchmark and potentially develop new quantum subroutines. A typical pattern of these NISQ algorithms \cite{bharti2021noisy}, exemplified by the seminal variational quantum eigensolver (VQE) \cite{peruzzo2014variational} and quantum approximate optimization algorithm (QAOA) \cite{farhi2014quantum}, consists of the preparation of ansatz states with a parameterized unitary circuit followed by useful classical output being extracted by means of quantum measurements, more generally as expectation values of quantum observables through repeated measurements.

The promising potential of these NISQ algorithms spans across a wide spectrum of applications, ranging from quantum chemistry, many-body physics, and machine learning to optimization and finance~\cite{bharti2021noisy}. However, as a consequence of their heuristic nature and the prevalent imperfections in near-term implementation, NISQ algorithms in practice typically produce outputs deviating from the exact and ideal setting. This unfortunate hindrance practically arises from various sources such as circuit noise and decoherence~\cite{preskill2018quantum}, limited expressibility of ansatze~\cite{nakaji2021expressibility,sim2019expressibility}, barren plateaus during optimization in variational hybrid quantum-classical algorithms \cite{mcclean2018barren,wang2020noise,marrero2020entanglement}, measurement noise and other experimental imperfections \cite{wecker2015progress,huggins2021efficient}. 
To determine the usefulness of a given NISQ application, it is thus crucial to quantify the error on the final output in the presence of a multitude of the aforementioned sources of imperfection.

In this work, we endeavour to systematically certify the reliability of quantum algorithms by deriving robustness bounds for expectation values of observable on approximations of a target state. 
To that end, based on analytical solutions to a semidefinite program (SDP), we present lower and upper bounds to expectation values of quantum observables which depend only on the fidelity with the target state and post-processing of previously obtained measurement results.
Furthermore, we take into account higher statistical moments of the observable by generalizing the Gramian method for pure states~\cite{weinhold1968lower} to generic density operators, thus extending its application to bounding output errors resulting from noisy circuits. 
Although the focus of our investigation is on errors arising from circuit imperfection, the underlying techniques are also valid for other sources of errors such as algorithmic shortcomings.
Finally, we apply these bounds to numerically obtain robustness intervals on simulated noisy and noiseless VQE for ground state energy estimation of electronic structure Hamiltonians of several molecules.
The robustness certification protocol resulting from this work is integrated with the open source \textsc{Tequila}~\cite{kottmann2021tequila} library.

The remainder of the paper is organised as follows. In~\secref{sec:robustness_bounds}, we present our main results, namely, the bounds based on SDP and the Gramian method.
In~\secref{sec:applications}, we present our numerical simulations and explain the applicability of our bounds in the context of VQE.~\secref{sec:implementation} highlights the implementation in~\textsc{Tequila} and concluding remarks are given in ~\secref{sec:discussion}.

\section{Robustness Intervals}
\label{sec:robustness_bounds}
The goal of this work is to provide techniques to compute intervals which are guaranteed to contain the expectation value of an observable $A$ under an ideal, but unavailable, target state $\sigma$. Any such interval is referred to as a robustness interval. 
More formally, instead of having access to the state $\sigma$, we assume access to the approximate state $\rho$ and which is further assumed to have at least fidelity $1-\epsilon$ with the target state $\sigma$. 
Given these assumptions, we define a robustness interval to be an interval $\cI = [\ubar{\chi},\,\Bar{\chi}] \subseteq \R$ for which it is guaranteed that
\begin{align}
    \label{eq:robustness_interval}
    \ubar{\chi}(\epsilon,\,\rho,\,A) \leq \tr{A\sigma} &\leq \bar{\chi}(\epsilon,\,\rho,\,A)
\end{align}
and which is a function of the infidelity $\epsilon$, the observable $A$, and the state $\rho$.

\subsection{Notation.}
The Hilbert space corresponding to the quantum system of interest is denoted by $\cH \equiv \C^d$ with dimension $d = 2^n$.
We use the Dirac notation for quantum states, i.e. elements of $\cH$ are written as kets $\ket{\psi} \in \cH$ with the dual written as a bra $\bra{\psi}$. The space of linear operators acting on elements of $\cH$ is denoted by $\cL(\cH)$ and elements thereof are written in capital letters $A \in \cL(\cH)$.
The set of density operators on $\cH$ is written as $\cS(\cH)\subset \cL(\cH)$ and lower case greek letters are used to denote its elements $\sigma \in \cS(\cH)$ which are positive semidefinite and have trace equal to $1$. 
For an element $A \in \cL(\cH)$ we write $A \geq 0$ if it is positive semidefinite, $A^T$ stands for its transpose, and $A^\dagger$ is the adjoint.
We also use the Loewner partial order on the space of Hermitian operators, i.e. for two Hermitian operators $A,\,B \in \cL(\cH)$, we write $A \geq B$ if and only if $A-B \geq 0$.
Expectation values of observables, i.e. Hermitian operators $A\in\cL(\cH)$, are written as $\langle A \rangle_\sigma = \tr{A\sigma}$ for some $\sigma\in\cS(\cH)$. The variance of an observable is given by $(\Delta A_\sigma)^2 = \langle A^2\rangle_\sigma - \langle A\rangle_\sigma^2$. We write $\norm[1]{A} = \tr{\abs{A}}$ with $\abs{A} = \sqrt{A^\dagger A}$ for the trace norm of an operator $A\in\cL(\cH)$. The fidelity between quantum states $\sigma,\,\rho\in\cS(\cH)$ is defined as $\cF(\rho,\,\sigma)= \max_{\psi_\rho,\,\psi_\sigma}\abs{\braket{\psi_\rho}{\psi_\sigma}}^2$ where the maximum is taken over all purifications of $\rho$ and $\sigma$. For pure states, the fidelity reduces to the squared overlap $\cF(\ketbra{\psi}{\psi},\,\ketbra{\phi}{\phi}) = \abs{\braket{\psi}{\phi}}^2$. Finally, the real part of a complex number $z\in\C$ is written as $\Re(z)$ and the imaginary part as $\Im(z)$.

\subsection{Summary of technical results}
\begin{figure}[tbp!]
    \centering
    \includegraphics[width=\linewidth]{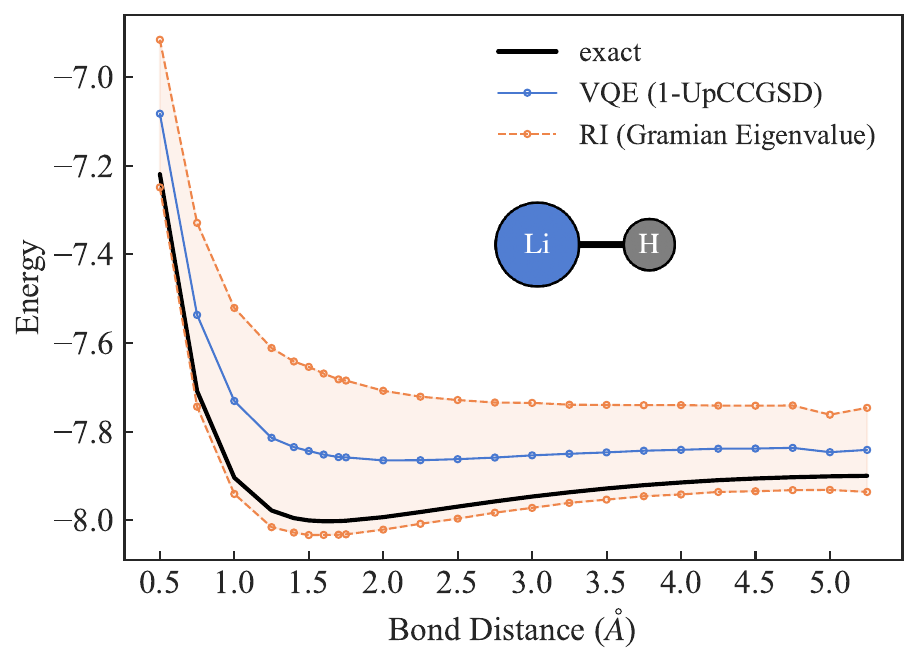}
    \caption{Bond dissociation curves and robustness interval (RI) for Lithium Hydride in a basis-set-free approach~\cite{kottmann2021reducing,kottmann2020direct}. The exact, theoretical energies are shown in black, the energy estimates provided by a noisy VQE with an UpCCGSD Ansatz~\cite{Lee2019generalized} is shown in blue. The robustness interval is guaranteed to contain the true ground state energy and is based on the Gramian eigenvalue bounds for mixed states (Theorem~\ref{thm:gramian_eigenvalue}).}
    \label{fig:lih_bounds}
\end{figure}
We employ two different techniques to bound the true expectation value $\langle A \rangle_\sigma$, each with its advantages and disadvantages in terms of efficiency and accuracy. 
The first technique is based on the formulation of lower and upper bounds as SDPs and makes use of a closed form solution of optimal type-II error probabilities from quantum hypothesis testing \cite{weber2021optimal}.
The second technique is based on the non-negativity of the determinant of Gramian matrices for a suitable collection of vectors. This second technique was initially proposed by Weinhold~\cite{weinhold1968lower} in the context of pure states. Using Uhlmann's Theorem~\cite{uhlmann1976transition}, which relates the fidelity between two mixed states to the trace norm, we extend these results to mixed states. This ultimately justifies their applicability in the current NISQ era, where the assumption of a closed quantum system is violated and one needs to make use of the density operator formalism to accurately model these states and their evolutions.

In~\tabref{table:bounds_summary}, we summarize all the results, together with the conditions under which they apply and the quantities that are covered.~\figref{fig:lih_bounds} shows the ground state energies of molecular Lithium Hydride in the basis-set-free approach of Refs.~\cite{kottmann2021reducing,kottmann2020direct}, with energy estimates provided by VQE with an UpCCGSD ansatz. The lower and upper bounds on the true energy are obtained via the Gramian method from Theorem~\ref{thm:gramian_eigenvalue}.

\begin{table*}\centering
	\begin{tabularx}{\textwidth}{@{}l X c X c X c @{}}\toprule
		&& SDP &  & \multicolumn{3}{c}{Gramian}\\
		\cmidrule(lr){3-3} \cmidrule(lr){5-7}
		 && Expectation $\langle A\rangle_\sigma$ & & Expectation $\langle A\rangle_\sigma$ && Eigenvalue $\lambda$\\
		\midrule
		Lower Bound && $(1 - 2\epsilon)\langle A\rangle_\rho - 2\sqrt{\epsilon(1 - \epsilon)(1 - \langle A \rangle_\rho^2)}$ && $(1 - 2\epsilon)\langle A\rangle_\rho -2 \sqrt{\epsilon(1-\epsilon)}\Delta A_\rho + \frac{\epsilon\langle A^2\rangle_\rho}{\langle A\rangle_\rho}$ && $\langle A \rangle_\rho - \Delta A_\rho \sqrt{\frac{\epsilon}{1-\epsilon}}$ \\
		Upper Bound && $(1 - 2\epsilon)\langle A\rangle_\rho + 2\sqrt{\epsilon(1 - \epsilon)(1 - \langle A \rangle_\rho^2)}$ && --- && $\langle A \rangle_\rho + \Delta A_\rho \sqrt{\frac{\epsilon}{1-\epsilon}}$ \\
		Assumptions && $-\Id \leq A \leq \Id$ && $A \geq 0$ && $\sigma = \ketbra{\psi}{\psi}\ \land \ A\ket{\psi} = \lambda \ket{\psi}$ \\
		\bottomrule
	\end{tabularx}
	\caption{Overview of bounds for the true expectation values and eigenvalues of a Hermitian operator $A$, with $\sigma$ the target state and $\rho$ the approximation. For the eigenvalue bound, $\sigma=\ketbra{\psi}{\psi}$ is the density operator corresponding to the eigenstate $\ket{\psi}$ with eigenvalue $\lambda = \bra{\psi}A\ket{\psi}$.
	We remark that the SDP lower and upper bounds are valid for fidelities with $\cF(\rho,\,\sigma) \geq 1 - \epsilon$ for $\epsilon \geq 0$ such that $\epsilon \leq \frac{1}{2}(1+\langle A\rangle_\rho)$ and $\epsilon \leq \frac{1}{2}(1-\langle A\rangle_\rho)$, respectively. The Gramian lower bound for expectation values is valid for $\epsilon\geq 0$ with $\sqrt{(1-\epsilon)/\epsilon} \geq {\Delta A_\rho}/{\langle A\rangle_\rho}$.}
	\label{table:bounds_summary}
\end{table*}

\subsection{Bounds via Semidefinite Programming}
Here we derive a robustness interval which is based on expressing lower and upper bounds as a semidefinite program which we connect to optimal type-II error probabilities for binary quantum hypothesis testing (QHT).

\paragraph{Quantum Hypothesis Testing.}
Binary QHT can be formulated in terms of state discrimination where two states have to be discriminated through a measurement. On a high level, the goal is to decide whether a quantum system is either in the state $\rho$, referred to as the null hypothesis, or in the state $\sigma$, the alternative hypothesis. Any such hypothesis test is represented by an operator $0 \leq \Lambda \leq \Id$ which corresponds to rejecting the null $\rho$ in favor of the alternative $\sigma$. 
The central quantities of binary QHT are the two different probabilities of making an error, namely the type-I and type-II error probabilities, defined as
\begin{align}
    \alpha(\Lambda;\,\rho) &:= \tr{\Lambda\rho}\tag{type-I error}\\
    \beta(\Lambda;\,\sigma) &:= \tr{(\Id - \Lambda)\sigma}\tag{type-II error}
\end{align}
and which quantify the probability of rejecting the null hypothesis when it is true, and accepting the null when the alternative is true, respectively.
One seeks a test $\Lambda$ which minimizes the probability of making a type-II error under the constraint that the type-I error is below some predefined threshold $\alpha_0$. Formally, we have the SDP
\begin{equation}
    \beta^*(\alpha_0;\,\rho,\,\sigma) := \sup_{0 \leq \Lambda \leq \Id}\{\beta^*(\Lambda;\,\rho,\,\sigma)\,\colon\,\alpha(\Lambda;\,\rho) = \alpha_0\}.
\end{equation}
The following Lemma establishes a closed form solution for this SDP for pure states, and a lower bound for the general case of mixed states.
\begin{lem}
\label{lem:closed_form_sdp}
	Let $\sigma,\,\rho\in\cS(\cH_d)$ be arbitrary quantum states, $\alpha_0\in[0,\,1]$ and suppose that $\cF(\rho,\,\sigma) \geq 1 - \epsilon$ for $\epsilon \leq 1 - \alpha_0$. We have
	\begin{equation}
	    \label{eq:closed_form_sdp}
	   \begin{aligned}
	       \beta^*(\alpha_0;\,\rho,\,\sigma) \, \geq \, &\alpha_0(2\epsilon - 1)+(1-\epsilon)\\ &\hspace{3em} - 2\sqrt{\alpha_0\epsilon(1-\alpha_0)(1-\epsilon)}
	   \end{aligned}
	\end{equation}
	with equality if the states $\sigma$ and $\rho$ are pure and $\cF(\rho,\,\sigma) = 1 - \epsilon$.
\end{lem}
In the following, we use this result to get closed form solutions for robustness bounds formulated as SDPs.

\subsubsection{Robustness Interval}
Consider a bounded observable $-\Id \leq A \leq \Id$ and let $\rho$ be the approximate state, corresponding to the alternative hypothesis, and let $\sigma$ be the target state, corresponding to the null hypothesis.
We can express lower and upper bounds to $\langle A\rangle_\sigma$ as semidefinite programs which take into account measurements of $\rho$. Namely, we have the upper bound
\begin{equation}
    \label{eq:sdp_upper_bound}
    \langle A\rangle_\sigma \leq \sup_{-\Id \leq \Lambda \leq \Id}\{\langle \Lambda \rangle_\sigma\,\colon\, \langle\Lambda\rangle_\rho =\langle A\rangle_\rho\}
\end{equation}
and the lower bound
\begin{equation}
    \label{eq:sdp_lower_bound}
    \langle A\rangle_\sigma \geq \inf_{-\Id \leq \Lambda \leq \Id}\{\langle \Lambda \rangle_\sigma\,\colon\,\langle\Lambda\rangle_\rho =\langle A\rangle_\rho\}.
\end{equation}
It is straight forward to see that these optimization problems are indeed valid lower and upper bounds to $\langle A\rangle_\sigma$ by noting that the operator $A$ is feasible.
In addition, as shown in Appendix~\ref{sec:appendix-tightness}, the tightness of the bounds is an immediate consequence of the formulation of the robustness interval as an SDP.
We can rewrite these SDPs and express them in terms of optimal type-II error probabilities, so that the upper bound reads
\begin{equation}
    \begin{aligned}
        \langle A\rangle_\sigma &\leq \sup_{-\Id \leq \Lambda \leq \Id}\{\langle \Lambda \rangle_\sigma\,\colon\, \langle\Lambda\rangle_\rho =\langle A\rangle_\rho\}\\
        &= 1 - 2\beta^*\left(\frac{1 + \langle A\rangle_\rho}{2};\,\rho,\,\sigma\right)
    \end{aligned}
\end{equation}
and, similarly, for the lower bound
\begin{equation}
    \begin{aligned}
        \langle A\rangle_\sigma &\geq \inf_{-\Id \leq \Lambda \leq \Id}\{\langle \Lambda \rangle_\sigma\,\colon\,\langle\Lambda\rangle_\rho =\langle A\rangle_\rho\}\\
        &= 2\beta^*\left(\frac{1 - \langle A\rangle_\rho}{2};\,\rho,\,\sigma\right) - 1.
    \end{aligned}
\end{equation}
This establishes a close connection between state discrimination via hypothesis testing, and the robustness of expectation values under perturbations to states.
Indeed, these robustness bounds formalize the intuition that states which are hard to discriminate, i.e., which admit higher error probabilities, will have expectation values which are closer together.
Furthermore, this connection also has the interesting interpretation that, if the approximate expectation $\langle A\rangle_\rho$ is close to the extreme $-1$, a statistical hypothesis test is restricted to have type-I error probability close to $0$. This makes it harder for the corresponding optimal type-II error probability $\beta^*$ to be low and hence $\langle A\rangle_\sigma$ will generally be closer to $\langle A\rangle_\rho$.
Finally, Lemma~\ref{lem:closed_form_sdp} provides a closed form solution to the SDP $\beta^*$, which only depends on the fidelity between $\rho$ and $\sigma$ and hence establishes a robustness interval of the form~\eqref{eq:robustness_interval}. This result is summarized in the following Theorem:
\begin{thm}
    \label{thm:main}
    Let $\sigma,\,\rho\in\cS(\cH_d)$ be density operators with $\cF(\rho,\,\sigma) \geq 1-\epsilon$ for some $\epsilon \geq 0$. Let $A$ be an observable with $-\Id \leq A \leq \Id$ and with expectation value $\langle A \rangle_\rho$ under $\rho$. For $\epsilon \leq \frac{1}{2}(1 + \langle A \rangle_\rho)$, the lower bound of $\langle A \rangle_\sigma$ can be expressed as
    \begin{equation}
        \langle A \rangle_\sigma \geq (1-2\epsilon)\langle A\rangle_\rho - 2\sqrt{\epsilon(1 - \epsilon)(1 - \langle A \rangle_\rho^2)}.
    \end{equation}
    Similarly, for $\epsilon \leq \frac{1}{2}(1 - \langle A \rangle_\rho)$, the upper bound of $\langle A \rangle_\sigma$ becomes
    \begin{equation}
        \langle A \rangle_\sigma \leq (1 - 2\epsilon)\langle A\rangle_\rho + 2\sqrt{\epsilon(1-\epsilon)(1 - \langle A \rangle_\rho^2)}.
    \end{equation}
\end{thm}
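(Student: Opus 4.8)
The plan is to express each bound as the optimal value of an optimization over density operators compatible with the fidelity constraint, reduce that matrix problem to a one-parameter scalar problem by data processing, and solve the scalar problem in closed form.

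First I would reduce to a two-outcome measurement. Since $-\Id \leq A \leq \Id$, the operator $M := \tfrac12(\Id + A)$ satisfies $0 \leq M \leq \Id$ and $\langle A\rangle_\tau = 2\tr{M\tau} - 1$ for every state $\tau$. Hence minimizing (resp.\ maximizing) $\langle A\rangle_\sigma$ over $\{\sigma \in \cS(\cH) : \cF(\rho,\sigma) \geq 1-\epsilon\}$ is the same as minimizing (resp.\ maximizing) $\tr{M\sigma}$ over that set, which is an SDP in $\sigma$ once the fidelity constraint is put in SDP form via Uhlmann's theorem. Set $a := \tr{M\rho} = \tfrac12(1 + \langle A\rangle_\rho)$ and $b := \tr{M\sigma}$.

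The crucial step is to collapse the matrix problem to a scalar one: applying the binary measurement channel $\tau \mapsto (\tr{M\tau},\, \tr{(\Id-M)\tau})$ to $\rho$ and $\sigma$ and using monotonicity of fidelity under CPTP maps — equivalently the hypothesis-testing closed form of Ref.~\cite{weber2021optimal} — the square-root fidelity of the two output distributions is at least $\sqrt{\cF(\rho,\sigma)} \geq \sqrt{1-\epsilon}$, i.e.
\begin{equation}
\sqrt{ab} + \sqrt{(1-a)(1-b)} \;\geq\; \sqrt{1-\epsilon}.
\end{equation}
Any $b$ arising from an admissible $\sigma$ satisfies this, so it suffices to extremize $b$ subject to it. Writing $a = \cos^2\theta_a$ and $b = \cos^2\theta_b$ with $\theta_a,\theta_b \in [0,\pi/2]$ turns the left side into $\cos(\theta_a - \theta_b)$, so the constraint becomes $|\theta_a - \theta_b| \leq \delta$ where $\delta := \arccos\sqrt{1-\epsilon}$, hence $\cos\delta = \sqrt{1-\epsilon}$ and $\sin\delta = \sqrt{\epsilon}$. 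Minimizing $b$ drives $\theta_b$ to $\theta_a + \delta$ (admissible iff $\theta_a + \delta \leq \pi/2$, i.e.\ $a \geq \epsilon$, i.e.\ $\epsilon \leq \tfrac12(1 + \langle A\rangle_\rho)$); maximizing $b$ drives $\theta_b$ to $\theta_a - \delta$ (admissible iff $\theta_a \geq \delta$, i.e.\ $\epsilon \leq \tfrac12(1 - \langle A\rangle_\rho)$). Expanding $\cos^2(\theta_a \pm \delta) = \bigl(\sqrt{a}\sqrt{1-\epsilon} \mp \sqrt{1-a}\sqrt{\epsilon}\bigr)^2$, converting back via $\langle A\rangle_\sigma = 2b - 1$, and simplifying with $a(1-a) = \tfrac14(1-\langle A\rangle_\rho^2)$ and $2a(1-\epsilon) + 2(1-a)\epsilon - 1 = (1-2\epsilon)\langle A\rangle_\rho$ delivers exactly the two claimed formulas.

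I expect the main effort to lie in the reduction via data processing and, if one wants to advertise these as the \emph{exact} optimum rather than merely valid inequalities, in checking that the scalar relaxation is lossless: this is done by exhibiting a saturating pair of commuting qubit states $\rho = \mathrm{diag}(a,1-a)$, $\sigma = \mathrm{diag}(b^\star,1-b^\star)$ in the eigenbasis of $M$, for which $\cF(\rho,\sigma) = \bigl(\sqrt{a b^\star} + \sqrt{(1-a)(1-b^\star)}\bigr)^2$ meets the constraint with equality. A secondary bookkeeping point is the boundary analysis of the $\epsilon$-ranges: outside them the extremal $\theta_b$ exits $[0,\pi/2]$, the interval $[\theta_a-\delta,\theta_a+\delta]$ straddles $0$ or $\pi/2$, and the bound degenerates to the trivial $\langle A\rangle_\sigma \geq -1$ or $\leq 1$ — which is precisely why the stated hypotheses on $\epsilon$ are needed.
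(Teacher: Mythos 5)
Your proof is correct, and it reaches the closed form by a genuinely different route than the paper. The paper fixes the states and optimizes over the \emph{observable}: it writes the bound as $\max_\Lambda\{\tr{\Lambda\sigma}\,\vert\,\tr{\Lambda\rho}=\tr{A\rho},\ -\Id\leq\Lambda\leq\Id\}$, identifies this with the optimal type-II error $\beta^*$ of a binary hypothesis test, and evaluates $\beta^*$ by constructing the Helstrom operator explicitly --- diagonalizing $\sigma-t\rho$ for pure states, solving for the threshold $t_0$, and then lifting to mixed states via purification and monotonicity of $\beta^*$ under the partial trace. You instead fix the binary POVM $\{M,\Id-M\}$ with $M=\tfrac12(\Id+A)$ and constrain the \emph{state}: monotonicity of the fidelity under the measurement channel gives the scalar necessary condition $\sqrt{ab}+\sqrt{(1-a)(1-b)}\geq\sqrt{1-\epsilon}$ on $b=\tr{M\sigma}$, which the trigonometric substitution solves in one line, and your admissibility analysis ($\theta_a+\delta\leq\pi/2$, resp.\ $\theta_a\geq\delta$) reproduces exactly the stated ranges of $\epsilon$. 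This buys considerable economy: you never need the Helstrom machinery, the eigenvalue computation for $\sigma-t\rho$, or the purification step (data processing of fidelity handles mixed states automatically), and a necessary condition on $b$ is all the theorem requires. What the paper's heavier construction buys in exchange is that its closed form is an \emph{equality} for pure states, which feeds directly into the tightness argument of Appendix~A.2 (there exists a saturating observable for the given pair of states); your commuting diagonal example establishes a slightly different tightness statement (a saturating pair of states for a given $a$ and $\epsilon$), which is adequate as a side remark but is not the same claim. One presentational nit: the sentence framing the problem as ``an SDP in $\sigma$ via Uhlmann's theorem'' is never used --- the data-processing step does all the work --- so it could be dropped without loss.
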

We remark that, although used in a different context, this technique is conceptually similar to the result presented in Theorem 1 of Ref.~\cite{weber2021optimal} for adversarial robustness of quantum classification.
Stemming from the formulation as an SDP, these bounds are tight for pure states in the sense that, for each bound, there exists an observable $A$ with expectation $\langle A \rangle_\rho$ under $\rho$ and whose expectation under $\sigma$ saturates the bound. In Appendix~\ref{sec:appendix-tightness}, we give a formal, constructive proof for this statement.
Furthermore, in practice, it is typically not feasible to measure the exact value of $\langle A\rangle_\rho$ due to finite sampling errors, measurement noise, and other experimental imperfections. For this reason, one needs to rely on confidence intervals which contain the exact value of $\langle A\rangle_\rho$ with high probability. 
This can be accounted for in the bounds from Theorem~\ref{thm:main} by noting that they are monotonic in $\langle A\rangle_\rho$,
what allows us to replace the exact value by bounds which hold with high probability.
Finally, it is worth noting that, if one has access to an estimate of the fidelity, i.e. some $\epsilon > 0$ with $\cF(\rho,\,\sigma) \geq 1 - \epsilon$, this interval can be calculated by merely postprocessing previous measurement results, and hence does not cause any computational overhead.

\subsection{Bounds via non-negativity of the Gramian}
Here we employ a different technique to derive robustness bounds, taking into account the variance of the observable as an additional piece of information. The method is based on the non-negativity of the Gramian and was pioneered by Weinhold~\cite{weinhold1968lower}. 
We first give a brief overview of the Gramian method and then present the extension to mixed states.
This extension is important as the restriction to pure states hinders the applicability of this method in practice and, in particular, in the current NISQ era, where one often has to deal with noisy states that are expressed as probabilistic ensembles of pure states in the density operator formalism.

\subsubsection{Review of the Gramian method}

Consider a Hermitian operator $A\in\cL(\cH)$, a target state $\ket{\psi}$ and an approximation of this state $\ket{\phi}$.
The Gram matrix for the vectors $\ket{\psi},\,\ket{\phi},\,A\ket{\phi}$ is given by
\begin{equation}
	G \equiv \begin{pmatrix}
		1 & \braket{\psi}{\phi} & \bra{\psi}A\ket{\phi}\\
		\braket{\phi}{\psi} & 1 & \bra{\phi}A\ket{\phi}\\
		\bra{\phi}A\ket{\psi} & \bra{\phi}A\ket{\phi} & \bra{\phi}A^2\ket{\phi}
	\end{pmatrix}
\end{equation}
where, without loss of generality, it is assumed that the overlap $\braket{\psi}{\phi}$ is real (otherwise multiply each state by a global phase).
Since Gram matrices are positive semidefinite (e.g., Theorem 7.2.10 in~\cite{horn1985matrix}), their determinants are non-negative, $\mathrm{det}(G) \geq 0$.
The function $\mathrm{det}(G)$ is a quadratic polynomial in $\Re(\bra{\psi}A\ket{\phi})$
\begin{equation}
    \begin{aligned}
        \mathrm{det}(G) &= -\Re(\bra{\psi}A\ket{\phi})^2 + 2\langle A\rangle_\phi \braket{\psi}{\phi} \Re(\bra{\psi}A\ket{\phi})\\ &\hspace{1.5em}+ (\Delta A_\phi)^2 - \abs{\braket{\phi}{\psi}}^2\langle A^2\rangle_\phi - \Im(\bra{\psi}A\ket{\phi})^2
    \end{aligned}
\end{equation}
where $(\Delta A_\phi)^2 = \bra{\phi} A^2 \ket{\phi} - \bra{\phi} A \ket{\phi}^2$ is the variance of $A$ under $\ket{\phi}$.
This polynomial vanishes when $\Re(\bra{\psi}A\ket{\phi})$ takes the values
\begin{equation}
    \braket{\phi}{\psi}\langle A\rangle_\phi \pm \sqrt{(\Delta A_\phi)^2(1 - \abs{\braket{\phi}{\psi}}^2) - \Im(\bra{\psi}A\ket{\phi})^2}
\end{equation}
and the non-negativity of $G$ thus limits the permissible values of $\Re(\bra{\psi}A\ket{\phi})$ to be within these boundaries. Since $\Im(\bra{\psi}A\ket{\phi})^2 \geq 0$ we have the bounds
\begin{equation}
    \label{eq:gramian_inequality}
    \begin{aligned}
        &\braket{\phi}{\psi}\langle A\rangle_\phi - \Delta A_\phi\sqrt{1 - \abs{\braket{\phi}{\psi}}^2}
        \leq \Re(\bra{\psi}A\ket{\phi}) \leq\\
        &\hspace{7em}\leq \braket{\phi}{\psi}\langle A\rangle_\phi + \Delta A_\phi\sqrt{1 - \abs{\braket{\phi}{\psi}}^2}
    \end{aligned}
\end{equation}
Starting from these inequalities, bounds for expectation values of quantum observables have been derived for pure states~\cite{weinhold1968lower,weinhold1968improved,weinhold1969new,weinhold1970variational,wang1971upper,blau1973upper,marmorino2002lower,marmorino2016lower} and in the context of classical methods for quantum chemistry. While certainly useful, this leaves a gap for these bounds to be applied in the NISQ era where noise is prevalent and quantum states and their evolutions are described by density operators. In the following section, we fill this gap and extend the technique to mixed states.

\subsubsection{Gramian method for mixed states}
Here, we build on the Gramian method and derive bounds which are valid for mixed states and which, in contrast to Theorem~\ref{thm:main}, take into account the second moment of the observable of interest. In principle, as more information is included, one can expect that this results in a tighter bound at the cost of having to measure additionally the expectation value of $A^2$. The following theorem provides a lower bound to expectation values of non-negative observables:
\begin{thm}[Expectation values]
    \label{thm:gramian_expectation}
    Let $\sigma,\,\rho\in\cS(\cH_d)$ be density operators with $\cF(\rho,\,\sigma) \geq 1-\epsilon$ for some $\epsilon \geq 0$. Let $A \geq 0$ be an arbitrary observable with expectation value $\langle A \rangle_\rho$ under $\rho$. For $\epsilon$ with $\sqrt{(1-\epsilon)/\epsilon} \geq \Delta A_\rho/\langle A \rangle_\rho$, the lower bound of $\langle A \rangle_\sigma$ can be expressed as
    \begin{equation}
        \label{eq:gramian_expectation_bound}
        \langle A\rangle_\sigma \geq (1 - 2\epsilon)\langle A\rangle_\rho -2 \sqrt{\epsilon(1-\epsilon)}\Delta A_\rho + \frac{\epsilon\langle A^2\rangle_\rho}{\langle A\rangle_\rho}.
    \end{equation}
\end{thm}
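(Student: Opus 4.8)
I would reduce the claim to the pure-state Gramian inequality~\eqref{eq:gramian_inequality} by purifying the two density operators. By Uhlmann's theorem~\cite{uhlmann1976transition} there exist purifications $\ket{\psi_\rho},\,\ket{\psi_\sigma}\in\cH\tens\cH'$ of $\rho$ and $\sigma$ with $\abs{\braket{\psi_\rho}{\psi_\sigma}}^2=\cF(\rho,\,\sigma)\geq 1-\epsilon$, and after absorbing a global phase into $\ket{\psi_\sigma}$ we may take $\braket{\psi_\rho}{\psi_\sigma}=\sqrt{t}$ with $t:=\cF(\rho,\,\sigma)$. Set $\tilde A:=A\tens\Id'$, which is non-negative since $A\geq 0$. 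Tracing out the ancilla gives $\langle\tilde A\rangle_{\psi_\rho}=\langle A\rangle_\rho$, $\langle\tilde A\rangle_{\psi_\sigma}=\langle A\rangle_\sigma$ and $\langle\tilde A^2\rangle_{\psi_\rho}=\langle A^2\rangle_\rho$, hence also $\Delta\tilde A_{\psi_\rho}=\Delta A_\rho$. Thus it suffices to bound $\langle\tilde A\rangle_{\psi_\sigma}$ from below in terms of quantities evaluated on $\ket{\psi_\rho}$, and the problem becomes one about the two pure states $\ket{\psi_\sigma},\,\ket{\psi_\rho}$ on the enlarged space.

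Next I would apply~\eqref{eq:gramian_inequality} to the triple $\ket{\psi_\sigma},\,\ket{\psi_\rho},\,\tilde A\ket{\psi_\rho}$, which yields
\begin{equation*}
\Re\!\left(\bra{\psi_\sigma}\tilde A\ket{\psi_\rho}\right)\geq\sqrt{t}\,\langle A\rangle_\rho-\Delta A_\rho\sqrt{1-t},
\end{equation*}
and, separately, factor $\tilde A=\tilde A^{1/2}\tilde A^{1/2}$ (this is the one place non-negativity of $A$ is used) and apply Cauchy--Schwarz to $\tilde A^{1/2}\ket{\psi_\sigma}$ and $\tilde A^{1/2}\ket{\psi_\rho}$ to get
\begin{equation*}
\Re\!\left(\bra{\psi_\sigma}\tilde A\ket{\psi_\rho}\right)\leq\sqrt{\langle\tilde A\rangle_{\psi_\sigma}\langle\tilde A\rangle_{\psi_\rho}}=\sqrt{\langle A\rangle_\sigma\langle A\rangle_\rho}.
\end{equation*}
Chaining the two displays gives $\sqrt{\langle A\rangle_\sigma\langle A\rangle_\rho}\geq\sqrt{t}\,\langle A\rangle_\rho-\Delta A_\rho\sqrt{1-t}$.

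To finish, I would use the hypothesis $\sqrt{(1-\epsilon)/\epsilon}\geq\Delta A_\rho/\langle A\rangle_\rho$, together with $t\geq 1-\epsilon$ and monotonicity of $x\mapsto x/(1-x)$, to see that the right-hand side above is non-negative (this implicitly requires $\langle A\rangle_\rho>0$; if $\langle A\rangle_\rho=0$ then $A\geq 0$ forces $\Delta A_\rho=0$ and the bound is read in the obvious limiting sense). Squaring and dividing by $\langle A\rangle_\rho$ gives $\langle A\rangle_\sigma\geq\bigl(\sqrt{t}\,\langle A\rangle_\rho-\Delta A_\rho\sqrt{1-t}\bigr)^{2}/\langle A\rangle_\rho$, and since the inner expression is non-negative and increasing in $t$ on $[1-\epsilon,1]$, the right-hand side is minimized at $t=1-\epsilon$. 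Substituting $t=1-\epsilon$, expanding the square, and using $\Delta A_\rho^2=\langle A^2\rangle_\rho-\langle A\rangle_\rho^2$ to rewrite $\epsilon\,\Delta A_\rho^2/\langle A\rangle_\rho=\epsilon\langle A^2\rangle_\rho/\langle A\rangle_\rho-\epsilon\langle A\rangle_\rho$ then produces exactly~\eqref{eq:gramian_expectation_bound}. I expect the main obstacle is not any individual computation but the squaring step: it is legitimate only because the stated condition on $\epsilon$ makes the Gramian lower bound on $\Re(\bra{\psi_\sigma}\tilde A\ket{\psi_\rho})$ non-negative, so the real work is in verifying that condition carefully, handling the degenerate case $\langle A\rangle_\rho\to 0$, and checking that the Uhlmann purification genuinely reproduces $\langle A\rangle_\rho$, $\langle A^2\rangle_\rho$ and $\Delta A_\rho$ on $\cH\tens\cH'$.
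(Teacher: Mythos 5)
Your proposal is correct and follows essentially the same route as the paper's proof: purify via Uhlmann's theorem, apply the pure-state Gramian inequality \eqref{eq:gramian_inequality} to $A\tens\Id$, bound $\Re(\bra{\psi_\sigma}\tilde A\ket{\psi_\rho})$ from above by Cauchy--Schwarz using $A=A^{1/2}A^{1/2}$, then square under the stated condition and use monotonicity in the fidelity. The only cosmetic difference is that the paper constructs the optimal purifications explicitly as $(\sqrt{\sigma}\tens\Id)\ket{\Omega}$ and $(\sqrt{\rho}\tens U^T)\ket{\Omega}$ and works with the Hilbert--Schmidt inner product, whereas you invoke Uhlmann's theorem abstractly; the mathematical content is identical.
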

In the case where the target state $\sigma$ is an eigenstate of an observable $A$, the Gramian method allows to derive a further bound. While the assumptions here are stronger, this bound is particularly useful in applications such as the variational quantum eigensolver and when the observable of interest commutes with a Hamiltonian $H$ for which the target state is an eigenstate. Formally, we have the following result:
\begin{thm}[Eigenvalues]
    \label{thm:gramian_eigenvalue}
    Let $\rho\in\cS(\cH_d)$ be a density operator and let $A$ be an arbitrary observable with eigenstate $\ket{\psi}$ and eigenvalue $\lambda$, $A\ket{\psi} = \lambda\ket{\psi}$. Suppose that $\epsilon \geq 0$ is such that $\cF(\rho,\,\ket{\psi}) = \bra{\psi}\rho\ket{\psi} \geq 1-\epsilon$. Then, lower and upper bounds for $\lambda$ can be expressed as
    \begin{equation}
        \label{eq:gramian_eigenvalue_bound}
        \langle A \rangle_\rho - \Delta A_\rho \sqrt{\frac{\epsilon}{1 - \epsilon}} \ \leq \ \lambda \ \leq \ \langle A \rangle_\rho + \Delta A_\rho \sqrt{\frac{\epsilon}{1 - \epsilon}}.
    \end{equation}
\end{thm}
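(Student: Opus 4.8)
The plan is to reduce the mixed-state eigenvalue bound to the pure-state Gramian bound derived in~\eqref{eq:gramian_inequality} by purifying $\rho$, and then to exploit the eigenvalue relation $A\ket{\psi}=\lambda\ket{\psi}$ to turn the resulting bound on $\Re(\bra{\psi}A\ket{\phi})$ into a two-sided bound on $\lambda$ itself. The only genuinely new ingredient beyond the classical pure-state treatment is the passage from $\rho$ to one of its purifications through Uhlmann's theorem.

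First I would record that the fidelity of the mixed state $\rho$ with the pure state $\ket{\psi}$ is $\cF(\rho,\ketbra{\psi}{\psi}) = \bra{\psi}\rho\ket{\psi}$, so the hypothesis reads $\cF(\rho,\ketbra{\psi}{\psi}) \geq 1-\epsilon$; the claimed interval is vacuous unless $\epsilon<1$, which I assume henceforth. Fix a sufficiently large ancilla space $\cH'$, a unit vector $\ket{e}\in\cH'$, and set $\ket{\tilde\psi}=\ket{\psi}\tens\ket{e}$, a purification of $\ketbra{\psi}{\psi}$ in $\cH\tens\cH'$. By Uhlmann's theorem there is a purification $\ket{\phi}\in\cH\tens\cH'$ of $\rho$ with $\abs{\braket{\phi}{\tilde\psi}}^2 = \cF(\rho,\ketbra{\psi}{\psi}) \geq 1-\epsilon$. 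Writing $\tilde A = A\tens\Id_{\cH'}$, this is again an observable, $\ket{\tilde\psi}$ is an eigenvector of $\tilde A$ with eigenvalue $\lambda$, and because $\ket{\phi}$ purifies $\rho$ we have $\langle\tilde A\rangle_{\ket{\phi}}=\langle A\rangle_\rho$ and $(\Delta\tilde A_{\ket{\phi}})^2 = \langle A^2\rangle_\rho - \langle A\rangle_\rho^2 = (\Delta A_\rho)^2$. Hence the triple $\ket{\tilde\psi},\,\ket{\phi},\,\tilde A\ket{\phi}$ satisfies exactly the hypotheses under which~\eqref{eq:gramian_inequality} was obtained, with overlap $\abs{\braket{\phi}{\tilde\psi}}^2\geq 1-\epsilon$.

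Next I would apply~\eqref{eq:gramian_inequality} to this triple, taking $\ket{\tilde\psi}$ as the target state and $\ket{\phi}$ as the approximation, after a global phase redefinition making $\braket{\tilde\psi}{\phi}$ real and strictly positive. Since $\tilde A\ket{\tilde\psi}=\lambda\ket{\tilde\psi}$ with $\lambda$ real, $\bra{\tilde\psi}\tilde A\ket{\phi}=\lambda\braket{\tilde\psi}{\phi}$ is real; substituting it for $\Re(\bra{\psi}A\ket{\phi})$ and using $\langle\tilde A\rangle_{\ket{\phi}}=\langle A\rangle_\rho$ and $\Delta\tilde A_{\ket{\phi}}=\Delta A_\rho$, the two-sided inequality~\eqref{eq:gramian_inequality} becomes $\braket{\phi}{\tilde\psi}\langle A\rangle_\rho - \Delta A_\rho\sqrt{1-\abs{\braket{\phi}{\tilde\psi}}^2} \leq \lambda\,\braket{\phi}{\tilde\psi} \leq \braket{\phi}{\tilde\psi}\langle A\rangle_\rho + \Delta A_\rho\sqrt{1-\abs{\braket{\phi}{\tilde\psi}}^2}$. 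Dividing through by $\braket{\phi}{\tilde\psi}>0$ isolates $\lambda$ between $\langle A\rangle_\rho \pm \Delta A_\rho\,\sqrt{1-\abs{\braket{\phi}{\tilde\psi}}^2}/\abs{\braket{\phi}{\tilde\psi}}$. Finally, $t\mapsto\sqrt{(1-t)/t}$ is decreasing on $(0,1]$ and $\abs{\braket{\phi}{\tilde\psi}}^2\geq 1-\epsilon$, so $\sqrt{1-\abs{\braket{\phi}{\tilde\psi}}^2}/\abs{\braket{\phi}{\tilde\psi}}\leq\sqrt{\epsilon/(1-\epsilon)}$; multiplying by the nonnegative factor $\Delta A_\rho$ widens both endpoints to those of~\eqref{eq:gramian_eigenvalue_bound}.

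I expect the main obstacle, such as it is, to be the reduction step: one must check that lifting $A$ to $A\tens\Id$ on a purifying extension preserves simultaneously the first moment, the variance, and the eigenvalue/eigenvector relation, so that Uhlmann's theorem can then be invoked to produce a purification of $\rho$ whose overlap with $\ket{\tilde\psi}$ realizes the fidelity lower bound $1-\epsilon$. Everything afterwards is the pure-state calculation already contained in~\eqref{eq:gramian_inequality} together with an elementary monotonicity estimate; the only points needing a moment's care are that the division by $\braket{\phi}{\tilde\psi}$ is legitimate (which holds since $\epsilon<1$) and that multiplying both inequalities through by $\Delta A_\rho\geq 0$ preserves their direction.
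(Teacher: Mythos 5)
Your proposal is correct and follows essentially the same route as the paper's proof: purify both states, invoke Uhlmann's theorem to realize the fidelity as an overlap of purifications, apply the pure-state Gramian inequality to $A\tens\Id$, use the eigenvalue relation to replace $\Re(\bra{\tilde\psi}\tilde A\ket{\phi})$ by $\lambda\braket{\tilde\psi}{\phi}$, divide by the (positive) overlap, and finish with the monotonicity of $t\mapsto\sqrt{(1-t)/t}$. The only cosmetic difference is that you invoke Uhlmann's theorem abstractly to obtain an optimal purification, whereas the paper constructs the purifications explicitly via the unnormalized maximally entangled state and a unitary realizing $\norm[1]{\sqrt{\rho}\sqrt{\sigma}}$.
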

The proofs for both Theorems presented in this section start from~\eqref{eq:gramian_inequality} which is first extended to mixed state via purifications and an application of Uhlmann's Theorem. The second step of the proof is a rewriting of the inequality in the case where the target state is an eigenstate of $A$ (Theorem~\ref{thm:gramian_eigenvalue}), and, in the case of Theorem~\ref{thm:gramian_expectation}, an application of the Cauchy-Schwarz inequality. The proofs are given in~\appref{sec:appendix-gramian-expectation} and~\appref{sec:appendix-gramian-eigenvalue}.

\subsection{Comparison of the bounds}
We have seen three different methods to derive robustness intervals. Namely, the interval based on SDP given in Theorem~\ref{thm:main}, the expectation value lower bound based on the Gramian method from Theorem~\ref{thm:gramian_expectation}, and the robustness interval for eigenvalues from Theorem~\ref{thm:gramian_eigenvalue}, which is also based on the Gramian method.
As a first observation, we notice that the SDP bounds are dependent only on the first moment of the observable, while the bounds derived from the Gramian method take into account the second moment via the variance. In principle, this hints at a trade-off between accuracy and efficiency. That is, by taking into account higher moments, which comes at a higher computational cost, one can hope for an improvement in accuracy as more information is included. On the other hand, the SDP bounds can be calculated as a postprocessing step and thus do not require to measure additional statistics. However, as less information is included, this typically comes at the cost of lower accuracy.

On the practical side, one needs to consider that for the SDP bounds to be applicable, it is required that the observable lies between $-\Id$ and $\Id$. In practice, however, this is not always the case and the observable needs to be appropriately rescaled, e.g. by using its eigenvalues. As the exact eigenvalues might not be available, one needs to use lower and upper bounds on these, which results in a loss in tightness. This is because the set of feasible points in the SDP from~\eqref{eq:sdp_lower_bound} and \eqref{eq:sdp_upper_bound} becomes larger and hence loosens the bounds.
A similar issue emerges in the lower bound for expectation values based on the Gramian method where the observable needs to be positive semidefinite. If this assumption is violated, one again needs to apply an appropriate transformation of the observable, leading to a potentially looser bound. Instead of scaling, one can decompose the observable into individual terms, each satisfying the constraints, and then bound each term separately and aggregate the bounds over the decomposition. In~\secref{sec:applications} we consider such a decomposition in the context of VQE. Specifically, we decompose the underlying Hamiltonian into groups of mutually commuting Pauli terms and bound the expectation of each group separately.
In contrast, the eigenvalue bound based on the Gramian method does not suffer from these issues and it is applicable for general observables. It is worth remarking that this comes at the cost of less generality in the sense that the bound only applies to eigenvalues rather than general expectation values.

Assuming that the observable $A=P$ is a projection, satisfying $P^2 = P$, allows for a direct comparison between the bounds. Note that, in this case, the variance is fully determined by the first moment via $(\Delta P_\rho)^2 = \langle P\rangle_\rho - \langle P\rangle_\rho^2$ and we expect that the Gramian Expectation bound should not be tighter than the SDP bound.
First, we incorporate the knowledge that $P$ is a projection in the SDP lower bound by applying it to the observable $2P - \Id$ so that we have the bound
\begin{equation}
    \langle P\rangle_\sigma \geq (1 - 2\epsilon)\langle P\rangle_\rho + \epsilon -2 \sqrt{\epsilon(1-\epsilon)(\langle P\rangle_\rho - \langle P\rangle_\rho^2)}
\end{equation}
which is exactly the same as the lower bound derived via the Gramian method in Theorem~\ref{thm:gramian_expectation} when applied to the projection $P$.
We can also compare this bound to the Gramian eigenvalue bound from Theorem~\ref{thm:gramian_eigenvalue}. Since the latter is less general, in the sense that it only holds for target states which are eigenstates, we expect this to be tighter than the expectation value bound. As can be seen from~\figref{fig:bound_comparison_projections}, this is indeed the case.

\begin{figure}[tbp!]
    \centering
    \includegraphics[width=\linewidth]{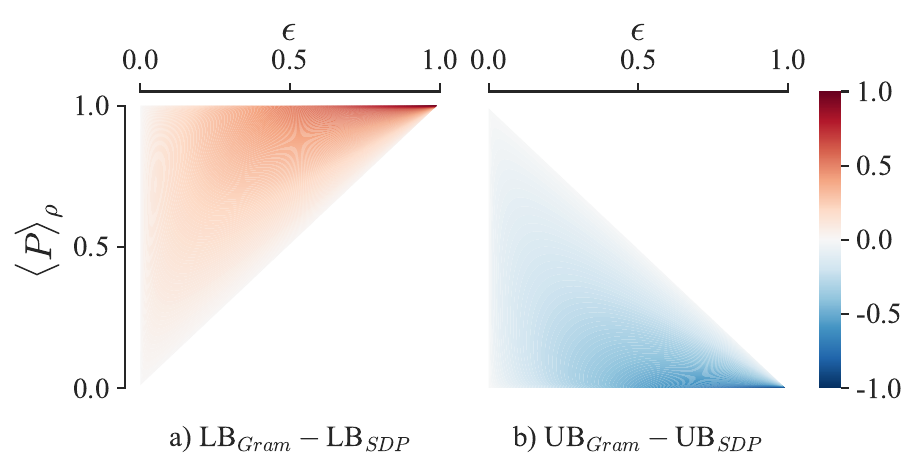}
    \caption{Difference between the SDP bounds from Theorem~\ref{thm:main} and the Gramian eigenvalue bound from Theorem~\ref{thm:gramian_eigenvalue} as a function of the infidelity $\epsilon$ and the expectation value $\langle P\rangle_\rho$. The observable is assumed to be a projection $P$ and the target state is an eigenstate of the observable. The difference is calculated by subtracting the SDP bound from the Gramian bound. a) shows the difference between lower bounds, b) shows the difference between the upper bounds. As can be seen from the figures, the Gramian eigenvalue bound is always more accurate than the expectation bound. Note that the Gramian expectation value lower bound (Theorem~\ref{thm:gramian_expectation}) equals the SDP lower bound under these assumptions.}
    \label{fig:bound_comparison_projections}
\end{figure}

Finally, we notice that all of the above bounds are faithful in the sense that, as the approximation error vanishes $\epsilon \to 0$, the bounds converge to the true expectation value $\langle A \rangle_\sigma$. To compare the rate of convergence, consider the case of pure states with the target state given by $\sigma=\ketbra{\psi}{\psi}$ and the approximation state $\rho=\ketbra{\phi}{\phi}$ with $\ket{\phi} = \sqrt{1 - \epsilon}\ket{\psi} + \sqrt{\epsilon}\ket{\psi^\perp}$ where $\ket{\psi^\perp}$ is orthogonal to $\ket{\psi}$ so that $\cF(\rho,\,\sigma) = 1 - \epsilon$. With this, one can explicitly show that the error between each bound and the true expectation $\langle A \rangle_\sigma$ scales with $\cO(\sqrt{\epsilon})$ as $\epsilon \to 0$. For values of $\epsilon$ close to $1$ on the other hand, we remark that both expectation value bounds tend towards the trivial bounds, namely $0$ for the expectation value bound, and $\pm1$ for the SDP bounds. This ultimately stems from the underlying assumptions required for the bounds to hold. In contrast, the Gramian eigenvalue bound has no assumptions on the observable $A$ and the bounds diverge as $\epsilon$ approaches $1$.

\subsection{Fidelity estimation}
\label{sec:fidelity_estimation}

All bounds presented so far have in common that they depend on the fidelity with the target state $\sigma$. However, in many practical settings, it is not possible to access the target state and thus difficult to obtain even a lower bound to the true fidelity. 
Here we seek to address this topic and present lower bounds on the true fidelity for the case where the target state is the ground state of a Hamiltonian $H$.

Let $H$ be a Hamiltonian with eigenvalues $\lambda_0 \leq \lambda_1 \leq \ldots \leq \lambda_d$ and assume that $\lambda_0$ has geometric multiplicity $1$ so that the corresponding ground state $\ket{\psi_0}$ is unique.
Let $\rho$ be a possibly mixed state approximation of $\ket{\psi_0}$. If both $\lambda_0$ and $\lambda_1$ are known, one can make use of Eckart's criterion~\cite{eckart1930} to bound the fidelity via
\begin{equation}
    \label{eq:eckart_criterion}
    \cF(\ketbra{\psi_0}{\psi_0},\,\rho) = \bra{\psi_0} \rho \ket{\psi_0} \geq \frac{\lambda_1 - \langle H\rangle_\rho}{\lambda_1 - \lambda_0}.
\end{equation}
In scenarios where knowledge of the lowest lying eigenvalues $\lambda_0$ and $\lambda_1$ is available, one can thus directly lower-bound the fidelity and use~\eqref{eq:eckart_criterion} in the computation of the robustness intervals. In scenarios where one does not have full knowledge of these eigenvalues (or, in the least, corresponding bounds), Eckart's criterion cannot be directly applied. However, we can still use the inequality if less knowledge about the spectrum of $H$ is available. If it is known that the energy estimate $\langle H\rangle_\rho$ is closer to $\lambda_0$ than to $\lambda_1$ then, as an immediate consequence of Eckart's criterion, one finds that
\begin{equation}
    \label{eq:weinstein_fidelity}
    \bra{\psi_0} \rho \ket{\psi_0} \geq \frac{1}{2}.
\end{equation}
We remark that substituting~\eqref{eq:weinstein_fidelity} into the Gramian eigenvalue bounds from Theorem~\ref{thm:gramian_eigenvalue} yields the mixed state extension of the Weinstein bounds~\cite{Weinstein1934,macdonald1934} in the non-degenerate case.
If, in addition, a lower bound $\delta$ on the spectral gap is known such that $\lambda_1 - \lambda_0 \geq \delta$, then we have the bound derived in Ref.~\cite{mcclean2016theory},
\begin{equation}
    \label{eq:mcclean_fidelity_bound}
    \bra{\psi_0} \rho \ket{\psi_0} \geq 1 - \frac{\Delta H_\rho}{\delta},
\end{equation}
which is a nontrivial lower bound whenever the variance is small enough such that $\Delta H_\rho \leq \delta$. With a similar technique, one obtains a further tightening of the bound:
\begin{equation}
    \label{eq:fidelity_bound}
    \bra{\psi_0} \rho \ket{\psi_0} \geq \frac{1}{2}\left(1 + \sqrt{1 - \left(\frac{\Delta H_\rho}{\delta/2}\right)^2}\,\right),
\end{equation}
for variances with $\Delta H_\rho \leq \delta / 2$. We note that this bound has also been reported in Refs.~\cite{weinhold1970criteria,davis1970rotation} and we provide an alternative proof in the Appendix.
In practice, the bound $\delta$ on the spectral gap can also be estimated via classical methods, as for example truncated classical configuration interaction or density-matrix renormalization group techniques.
In principle, also non-variational methods like truncated coupled-cluster (and the associated equation-of-motion or linear-response variants for the excited state energies) could be applied.
In either case, the idea is to use these classical methods to compute the ground state and the first excited state energies to get an estimate of the spectral gap which can then be used for the fidelity estimation.
The classical method which is the best to choose will generally depend on the system of interest and the available computational time.
We refer the reader to~\cite{helgaker2014molecular} for a detailed treatment over some of those methods.

The above bounds hold for Hamiltonians $H$ whose lowest eigenvalue is non-degenerate. 
In~\appref{sec:appendix-fidelity} we consider the degenerate case and show that when the approximate state $\rho$ is pure, then there always exists a state $\ket{\psi}$ which is an element of the eigenspace associated with the lowest (possibly degenerate) eigenvalue, and for which the above fidelity lower bounds hold. However, if $\rho$ is a mixed state, this cannot in general be said, as is shown in the appendix with a counterexample. In summary, when the approximate state $\rho$ is allowed to be mixed, then the fidelity bounds are applicable only when the underlying Hamiltonian has a non-degenerate ground state. If, on the other hand, $\rho$ is pure, then the bounds also hold in the degenerate case. Finally, we remark that these fidelity bounds all require varying degrees of knowledge about the ground state and Hamiltonian in question. They thus can only partially address the topic of fidelity estimation in scenarios where such knowledge is not available.

At this point we would like to point out an interesting connection to variational quantum time evolution (VarQTE). In general, VarQTE is a technique to find the ground state of a Hamiltonian $H$~\cite{mcardle2019Variational,Yuan2019theory,motta2020determining} by projecting the time evolution of the initial state to the evolution of the ansatz parameters. VarQTE typically comes with an approximation error, stemming from a limited expressibility of the Ansatz state or from noise. In Ref.~\cite{zoufal2021error}, this approximation error is quantified in terms of an upper bound on the Bures distance between the evolved state and the true ground state. Since there is a one-to-one correspondence between Bures distance and fidelity, these error bounds can be converted to a lower bound on the latter. This in turn can then be used to calculate the eigenvalue and expectation bounds presented in this work. 

\section{Applications}
\label{sec:applications}
In this section, we put into practice the theoretical results presented in the previous sections and calculate the robustness intervals for ground state energies of electronic structure Hamiltonians when the approximation of the ground state is provided by VQE. We remark that, while VQE serves as an example application, our results are not limited to ground state energies but can be used in a more general context where the goal is to calculate error bounds for expectation values. Consider a Hamiltonian $H$ with Pauli decomposition
\begin{equation}
    \label{eq:pauli_sum}
    H = \sum_{j=1}^n\omega_j P_j
\end{equation}
and let $\rho$ be an approximation to the true ground state $\ket{\psi_0}$. Given $\epsilon \geq 0$ such that $\bra{\psi_0}\rho\ket{\psi_0} \geq 1-\epsilon$, and an estimate of the variance 
$\langle H^2\rangle_{\rho} - \langle H\rangle_{\rho}^2$,
it is straightforward to evaluate the Gramian eigenvalue bounds from Theorem~\ref{thm:gramian_eigenvalue}. In contrast, for the expectation value bounds derived via SDP and the Gramian method from Theorems~\ref{thm:main} and~\ref{thm:gramian_expectation}), one needs to be more careful since the Hamiltonian $H$ might violate the underlying assumptions.
To evaluate the latter, we can account for this by adding a sufficiently large constant $c$ such that $\tilde H := H + c\Id \geq 0$ and calculate the bound for $\tilde H$, before reversing the translation in order to get the desired bound for $H$. Clearly, a valid choice for $c$ is given by $-\lambda_0$ where $\lambda_0$ is the lowest eigenvalue of $H$. However, it is not always clear which constant $c$ leads to the tightest lower bound.
Similarly, to evaluate the SDP bounds, we need to apply Theorem~\ref{thm:main} to operators which are bounded between $\pm\Id$. If the full spectrum of $H$ was known, one could normalize $H$ using these eigenvalues. 
However, in the context of VQE, the spectrum is not a priori known as this is precisely the task that VQE is designed to solve, and we need a different approach for the expectation value bounds.
The idea is to partition the terms in the Pauli decomposition from~\eqref{eq:pauli_sum} into groups so that each term corresponding to a group can be normalized. To this end, we first partition $H$ into groups of mutually qubit-wise commuting terms
\begin{equation}
    H = \sum_{k=1}^M H_k,\ws H_k = \sum_{j}\omega_j^{(k)} P_j^{(k)},\ws [P_i^{(k)},\,P_j^{(k)}] = 0.
\end{equation}
Given such a representation, the spectrum of each of the $H_k$ can be calculated classically in order to scale $H_k \to \tilde H_k$ appropriately such that the assumptions for the bounds are satisfied.
One can then compute the bounds for each of the terms in the summation and get the final bounds by aggregating the individual bounds.
We further make use of the approach presented in Refs.~\cite{yen2020measuring, verteletskyi2020} where one applies a unitary transform $U_k$ to each of the $H_k$ terms so that single-qubit measurement protocols can be used. Specifically, instead of measuring $H_k$ under the state $\rho$, one measures $A_k = U_k H_k U_k^\dagger$ under the unitarily transformed $U_k \rho U_k^\dagger$. One can then scale each $A_k$ appropriately by classically computing its eigenvalues and apply the expectation value bounds (Theorems~\ref{thm:main} and~\ref{thm:gramian_expectation}) to each term separately before aggregating.
It is also worth noting the generality of Theorem 1: although in the preceding demonstration, the matrix $A$ is generally taken to be a Pauli observable for measuring the output of a quantum circuit, the condition $-\Id \leq A \leq \Id$ is satisfied much more generally (e.g. by Fermionic operators). The application of this theorem in settings without explicit Pauli decomposition would be a fruitful ground for future research.

\subsection{Numerical simulations}
\begin{figure}[tbp!]
    \subfloat[]{\includegraphics[width=\linewidth]{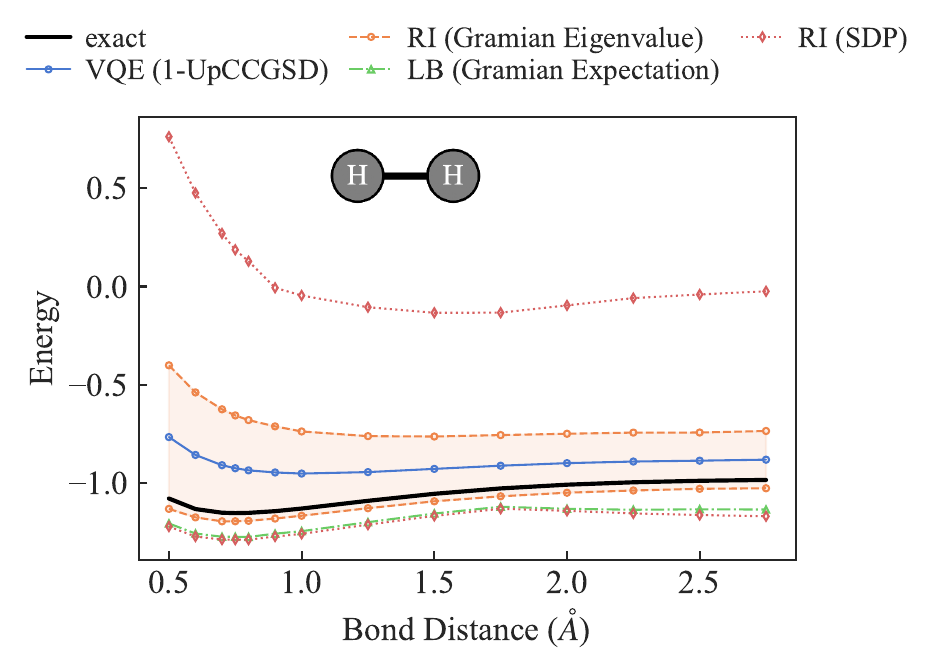}}\\
    \subfloat[]{\includegraphics[width=\linewidth]{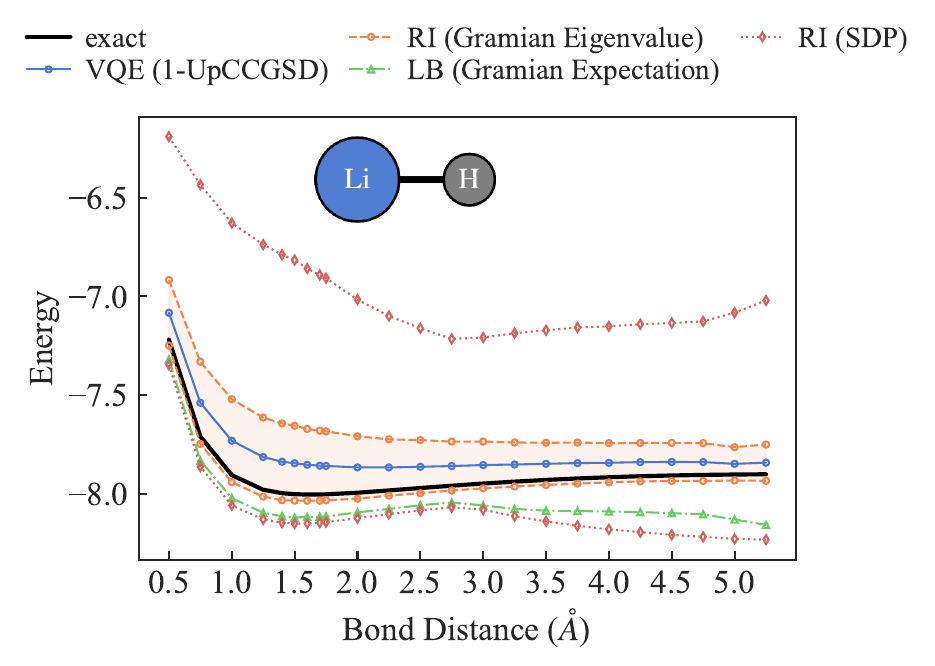}}
    \caption{Comparison of the different lower bounds (LB) and robustness intervals (RI) presented in~\secref{sec:robustness_bounds} for bond dissocation curves of H$_2$(2, 4) and LiH(2, 4). The approximation states are provided by VQE with an UpCCGSD ansatz. Both the VQE optimization and the evaluation of the bounds were simulated with bit flip and depolarization noise with $1\%$ error probability.}
    \label{fig:bounds_comparison}
\end{figure}
\begin{figure*}[tbp!]
    \centering
    \subfloat[]{\includegraphics[width=.485\linewidth]{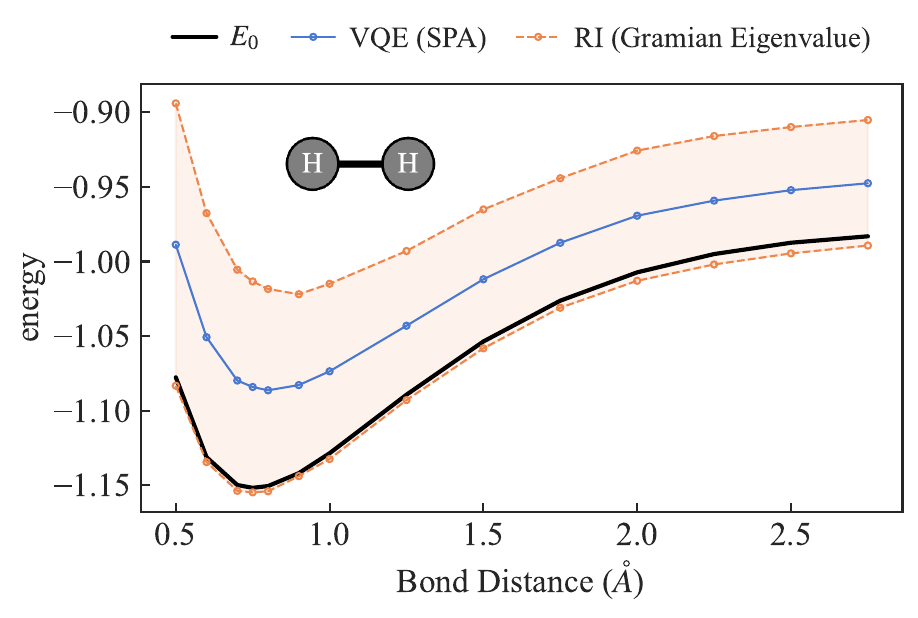}}%
    \hfill
    \subfloat[]{\includegraphics[width=.485\linewidth]{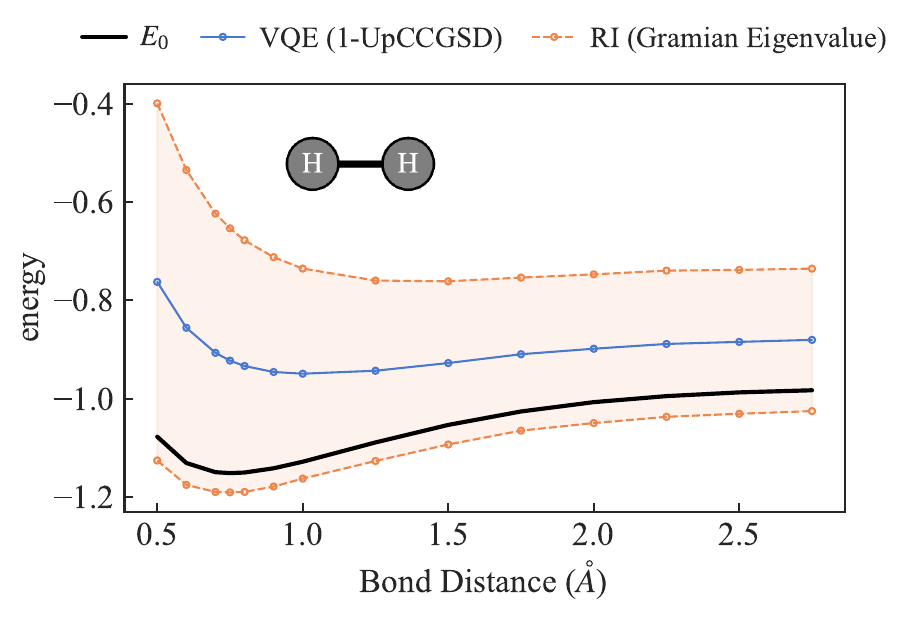}}
    
    \subfloat[]{\includegraphics[width=.485\linewidth]{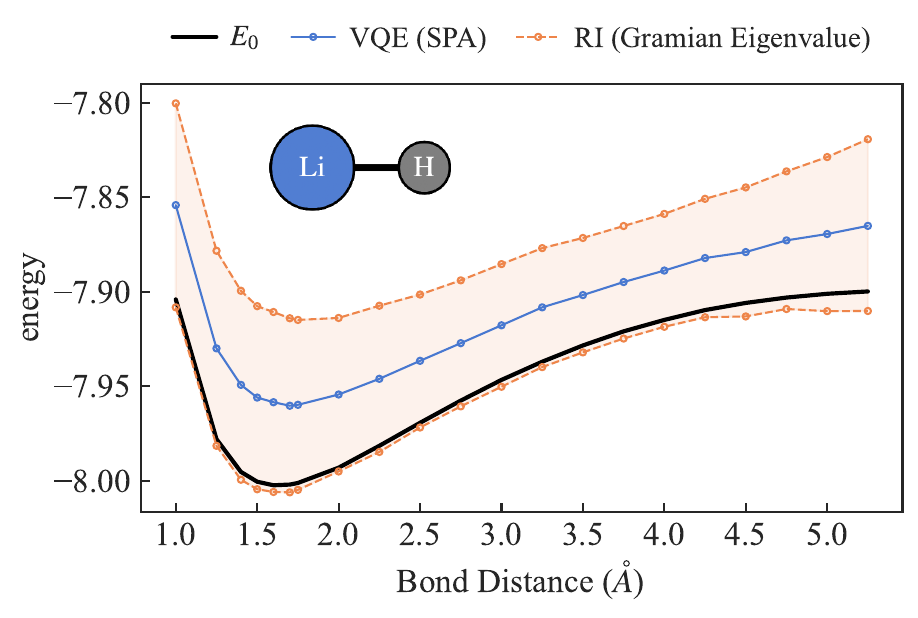}}%
    \hfill
    \subfloat[]{\includegraphics[width=.485\linewidth]{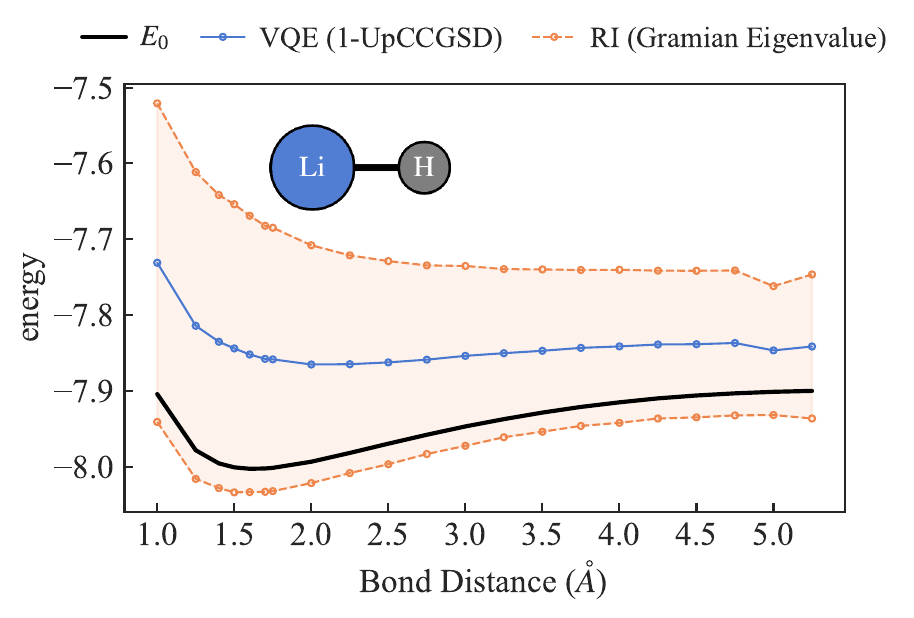}}
    
    \subfloat[]{\includegraphics[width=.485\linewidth]{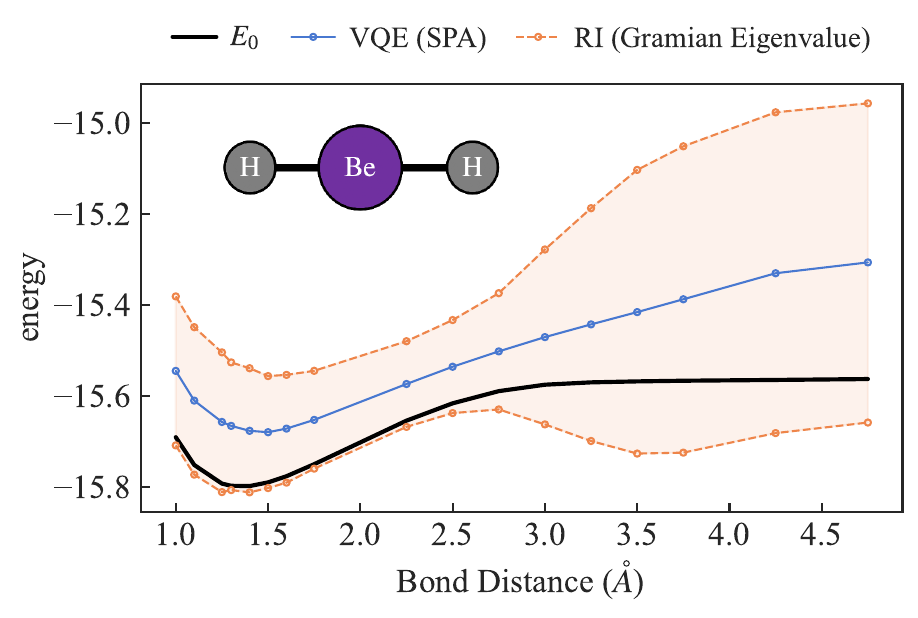}}%
    \hfill
    \subfloat[]{\includegraphics[width=.485\linewidth]{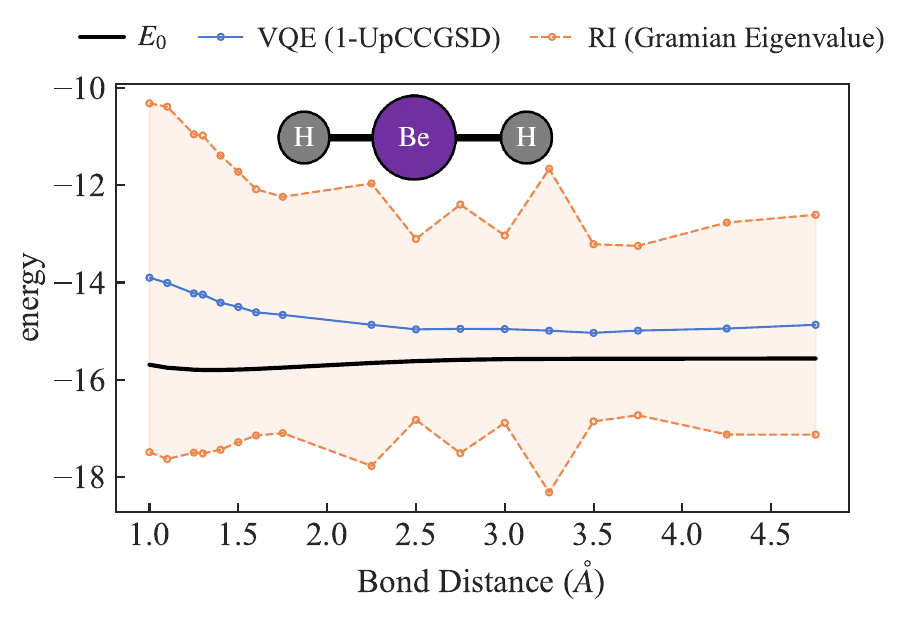}}
    
    \caption{Bond dissociation curves and robustness intervals (RI) for eigenvalues based on the Gramian method (Theorem~\ref{thm:gramian_eigenvalue}) for H$_2$(2, 4), LiH(2, 4) and BeH$_2$(4, 8).
    Both the VQE optimization and the evaluation of the bounds are done under a combination of bit flip and depolarization noise with $1\%$ error probability.}
    \label{fig:noisy_figures}
\end{figure*}

\begin{figure}[tbp!]
    \centering
    \subfloat{\includegraphics[width=\linewidth]{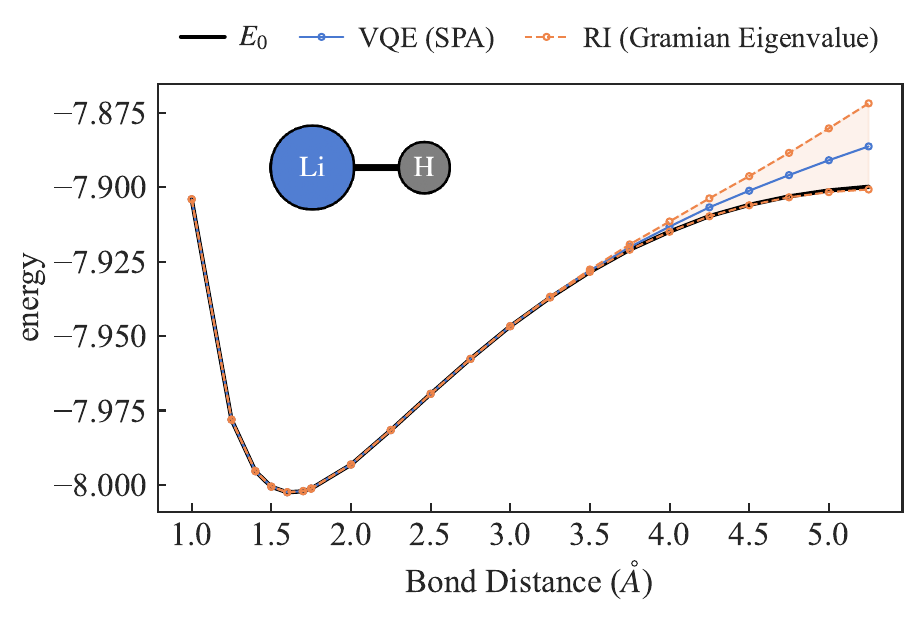}}\\
    \vspace{-1em}
    \subfloat{\includegraphics[width=\linewidth]{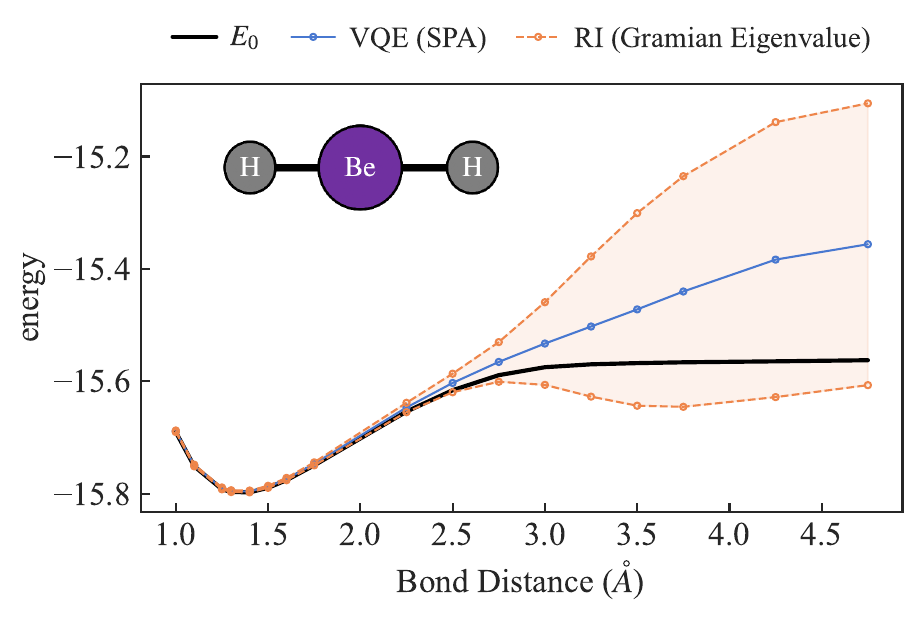}}
    \vspace{-1em}
    \caption{Bond dissociation curves, robustness interval (RI) for eigenvalues based on the Gramian method (Theorem~\ref{thm:gramian_eigenvalue}) for LiH(2, 4)and BeH$_2$(4, 8).
    Here, an ideal scenario without noise is simulated and the approximation errors stem from the limited expressibility of the Ansatz state.}
    \label{fig:noiseless_figures}
\end{figure}

Here, we numerically investigate the different robustness bounds for the ground state energies for a set of electronic structure Hamiltonians, namely H$_2$, LiH and BeH$_2$ molecules where the qubit Hamiltonians are obtained within the basis-set-free approach of Ref.~\cite{kottmann2021reducing} using directly determined pair-natural orbitals on MP2 level~\cite{kottmann2020direct}. All our experiments have been implemented in the \textsc{Tequila}~\cite{kottmann2021tequila} library using the qubit encodings from \textsc{openfermion}~\cite{openfermion}, optimizers from \textsc{scipy}~\cite{scipy}, \textsc{madness}~\cite{harrison2016madness} as chemistry backend, \textsc{qulacs}~\cite{Qulacs} as the simulation backend for noiseless simulations and \textsc{qiskit}~\cite{Qiskit} for simulations which include noise. We model noise as a combination of bitflip channels acting on single qubit gates with $1\%$ error probability, and depolarizing noise acting on two qubit gates, also with an error probability of $1\%$. 

For a given Hamiltonian $H$, we first approximate its ground state $\ket{\psi_0}$ via VQE. That is, for an Ansatz state $\rho_\theta$ with parameters $\theta$ one minimizes the objective $\langle H \rangle_{\rho_\theta}$ and obtains optimal parameters $\theta^*=\arg\min_\theta \ \langle H\rangle_{\rho_{\theta}}$. It follows from the Rayleigh-Ritz variational principle~\cite{Rayleigh1870,Ritz1909} that the expectation $\langle H \rangle_{\rho_\theta^*}$ is an upper bound to the true ground state energy $\lambda_0$. The such obtained ground state approximation $\rho_{\theta^*}$ is then used to evaluate the bounds by computing the relevant statistics, i.e. expectation values and variances of observables under this state. We notice that the quality of this state in terms of a distance to the true ground state is not easily obtainable without having some prior knowledge over the system of interest (see also~\secref{sec:fidelity_estimation} in this regard). For this reason and in order to investigate and compare the bounds, here we assume knowledge of the true fidelity with the ground state. In practice, this is of course not realistic and, as discussed previously, one needs to approximate the true fidelity. Given the ground state approximation $\rho_{\theta^*}$ and the fidelity $\cF(\rho_{\theta^*},\,\ketbra{\psi_0}{\psi_0})$, we then estimate the expectation values and variances under $\rho_{\theta^*}$ in order to evaluate the bounds. In the noiseless scenario, these statistics can be calculated exactly, whereas in the noisy scenario they need to be estimated due to finite sampling errors. Thus, in the noisy case, we repeat the calculation of the bounds 20 times and report one-sided $99\%$-confidence intervals.

In~\figref{fig:bounds_comparison}, we consider the noisy scenario and compare the different types of bounds for H$_2$(2, 4) and LiH(2, 4) with approximation states provided by VQE with an UpCCGSD Ansatz~\cite{Lee2019generalized} and optimized fermionic gradients~\cite{kottmann2021feasible}. For both molecules, we notice that the Gramian eigenvalue bound is the tightest, while the expectation value bounds are less tight. However, this is not surprising, as the eigenvalue bound is more suited for this task, compared to the other bounds which hold more generally for expectation values. In~\figref{fig:noisy_figures}, we again consider the noisy scenario and compare the Gramian eigenvalue bounds for approximation states obtained via the SPA Ansatz~\cite{kottmann2021optimized} and via the UpCCGSD Ansatz for H$_2$(2, 4), LiH(2, 4) and BeH$_2$(4, 8). We first notice that the SPA Ansatz is generally less vulnerable to noise, which stems from the associated shallow circuits, compared to the UpCCGSD Ansatz. In particular, SPA and UpCCGSD have the same expressibility for H$_2$ but, since SPA uses more efficient compiling, its energy estimates and lower bounds are more accurate compared to UpCCGSD.
In~\appref{sec:higher-error-rates} we show robustness intervals for the LiH(2, 4) molecule with the error rate increased to $10\%$. 
For the SPA ansatz, even with this error rate, the ground state fidelities vary between $0.51$ and $0.65$ while the UpCCGSD states have low ground state fidelities in the range $0.1$. In other words, UpCCGSD fails to converge to states which are close to the true ground state.
It is interesting to note that for lowest ground state fidelities, the expectation value bounds reduce to trivial bounds and the eigenvalue bound starts to diverge.
In~\figref{fig:noiseless_figures} we consider the noiseless scenario with an SPA ansatz for LiH and BeH$_2$.
In contrast to the noisy scenario, here the bounds based on the UpCCGSD Ansatz are tigther, compared to the ones based on the SPA Ansatz for large bond distances. This is due to the fact that SPA generally has more difficulties in approximating ground states for far stretched bond distances and hence results in lower ground state fidelities.
Finally, it is worth remarking that these bounds are obtained under the assumption of having complete knowledge of the true ground state fidelity, an assumption which is idealistic and typically violated in practice.

\subsection{Implementation}
\label{sec:implementation}
All our robustness intervals are implemented in the open source \textsc{Tequila}~\cite{kottmann2021tequila} library. In the following example, we run VQE for the H$_2$ Hamiltonian in a minimal representation (4 qubits), before computing the lower and upper bounds based on the optimized circuit, using the function \texttt{robustness\_interval}:

\begin{minipage}{.96\linewidth}
\begin{lstlisting}[language=Python]
import tequila as tq
from tequila.apps.robustness import robustness_interval

geom = "H .0 .0 .0\nH .0 .0 .75"
mol = tq.Molecule(geom, n_pno=1)

H = mol.make_hamiltonian()
U = mol.make_upccgsd_ansatz()
E = tq.ExpectationValue(H=H, U=U)
result = tq.minimize(E)

lower_bound, energy, upper_bound, _ = robustness_interval(U, H, fidelity, variables=result.variables)
\end{lstlisting}
\end{minipage}
Used in this way, the function calculates the robustness interval for all three methods and returns the tightest bounds. Alternatively, one can specify the type of bound via the keywords \texttt{kind} and \texttt{method} where the former stands for which kind of interval is desired, that is expectation or eigenvalue, and the latter stands for the method used to obtain the bound (Gramian or SDP). For example, calculating a robustness interval for eigenvalues using the Gramian method, can be implemented as \texttt{robustness\_interval(\ldots , kind="eigenvalue", method="gramian")}. In general, any type of expectation value can be used. Note that our implementation is agnostic with respect to the molecular representation, so that replacing \texttt{n\_pno=1} with \texttt{basis\_set="sto-3g"} will lead to a 4 qubit Hamiltonian in a traditional basis set. 

\section{Discussion and conlusions}
\label{sec:discussion}

The current experimental stage of quantum computation offers the possibility to explore the physical and chemical properties of small systems and novel quantum algorithms are being developed to extract the most from this first generation of quantum devices. 
However, this potential for computational advantage, compared to classical methods, comes at a price of noisy and imperfect simulations stemming from low qubit counts and thus the lack of quantum error correcting qubits.
The Variational Quantum Eigensolver (VQE) is the canonical example of these NISQ algorithms that allow us to obtain an approximation of Hamiltonian eigenstates by exploiting the variational principle of quantum mechanics. 
Besides the broad applications and promising results of this approach \cite{peruzzo2014variational,mcclean2016theory,arute2020hartree}, its performance guarantees should be studied and understood. 

In this work we have made first progress in this direction and have derived robustness intervals for quantum measurements of expectation values. For a target state $\sigma$, these intervals are guaranteed to contain the true expectation value $\langle A \rangle_\sigma$ of an observable $A$ when we only have access to an approximation $\rho$. Based on resource constraints, accuracy requirements, and depending on the task, we have seen three different types of robustness intervals. 
Firstly, based on the formulation of robustness bounds as SDPs, we have derived upper and lower bounds to $\langle A \rangle_\sigma$ which take into account only the first moment of the observable $A$ and can thus be obtained by post-processing measurements of $\langle A \rangle_\rho$ together with the fidelity with the target state $\cF(\rho,\,\sigma)$. Secondly, we have revisited the Gramian method~\cite{weinhold1968lower} to take into account higher statistical moments of $A$ and extended this technique to mixed states, thereby enabling their applicability in noisy scenarios which are prevalent in the NISQ era. This has led to a further lower bound to expectation values and, additionally, to lower and upper bounds on eigenvalues of observables. We have also implemented our bounds in the open source \textsc{Tequila}~\cite{kottmann2021tequila} library. Furthermore, we have validated our  results with numerical simulations of noisy and noiseless scenarios with VQE as an example application to calculate robustness intervals for ground state energies of electronic structure Hamiltonians of H$_2$, LiH and BeH$_2$. 
For the molecules considered in these experiments, we have observed that the robustness intervals provide accurate estimates of the errors incurred by noise, in particular when the ground state approximation is close enough to the true ground state in terms of fidelity.

The main requirement of the bounds obtained is the knowledge of the fidelity between the target state and its approximation. Although such a quantity is not always experimentally accessible and hence poses a challenge in the practical applicability of these bounds, there exist algorithms, such as within the variational quantum imaginary time evolution~\cite{zoufal2021error} framework, which allow for a quantification of the required approximation error in terms of distances between the target and the approximate state.
Nonetheless, our aim is to provide a formal framework to study the robustness of broadly used approaches as are the Variational Quantum Algorithms. There are still many questions around the applicability of these quantum algorithms and its robustness against noise. Within this work, we seek to unravel the uncertainties around these state-of-the-art quantum algorithms with the goal of improving its performance and applicability.

\section*{Acknowledgments}
The authors are grateful to Joseph Fitzsimons (Horizon Quantum Computing) and Nana Liu (Shanghai Jiao Tong University) for inspiring discussions on the topic of robustness of NISQ algorithms.
A.A.-G. acknowledges the generous support from
Google, Inc. in the form of a Google Focused Award.
This work is partly supported by the U.S. Department of Energy under Award No. DESC0019374 and the U.S. Office of Naval Research (ONS506661). A.A.-G. also acknowledges support from the Canada Industrial Research Chairs Program and the Canada 150 Research Chairs Program.
A.A.-G. acknowledges generous support from Anders G. Fr\"oseth and Sony Research.
CZ and the DS3Lab gratefully acknowledge the support from the Swiss National Science Foundation (Project Number 200021\_184628, and 197485), Innosuisse/SNF BRIDGE Discovery (Project Number 40B2-0\_187132), European Union Horizon 2020 Research and Innovation Programme (DAPHNE, 957407), Botnar Research Centre for Child Health, Swiss Data Science Center, Alibaba, Cisco, eBay, Google Focused Research Awards, Kuaishou Inc., Oracle Labs, Zurich Insurance, and the Department of Computer Science at ETH Zurich.

\section*{Code Availability}
The code for the numerical simulations is publicly available at~\url{https://github.com/DS3Lab/robustness-intervals-quantum-measurements}.

\bibliography{refs}

\begin{thebibliography}{52}%
\makeatletter
\providecommand \@ifxundefined [1]{%
 \@ifx{#1\undefined}
}%
\providecommand \@ifnum [1]{%
 \ifnum #1\expandafter \@firstoftwo
 \else \expandafter \@secondoftwo
 \fi
}%
\providecommand \@ifx [1]{%
 \ifx #1\expandafter \@firstoftwo
 \else \expandafter \@secondoftwo
 \fi
}%
\providecommand \natexlab [1]{#1}%
\providecommand \enquote  [1]{``#1''}%
\providecommand \bibnamefont  [1]{#1}%
\providecommand \bibfnamefont [1]{#1}%
\providecommand \citenamefont [1]{#1}%
\providecommand \href@noop [0]{\@secondoftwo}%
\providecommand \href [0]{\begingroup \@sanitize@url \@href}%
\providecommand \@href[1]{\@@startlink{#1}\@@href}%
\providecommand \@@href[1]{\endgroup#1\@@endlink}%
\providecommand \@sanitize@url [0]{\catcode `\\12\catcode `\$12\catcode
  `\&12\catcode `\#12\catcode `\^12\catcode `\_12\catcode `\%12\relax}%
\providecommand \@@startlink[1]{}%
\providecommand \@@endlink[0]{}%
\providecommand \url  [0]{\begingroup\@sanitize@url \@url }%
\providecommand \@url [1]{\endgroup\@href {#1}{\urlprefix }}%
\providecommand \urlprefix  [0]{URL }%
\providecommand \Eprint [0]{\href }%
\providecommand \doibase [0]{https://doi.org/}%
\providecommand \selectlanguage [0]{\@gobble}%
\providecommand \bibinfo  [0]{\@secondoftwo}%
\providecommand \bibfield  [0]{\@secondoftwo}%
\providecommand \translation [1]{[#1]}%
\providecommand \BibitemOpen [0]{}%
\providecommand \bibitemStop [0]{}%
\providecommand \bibitemNoStop [0]{.\EOS\space}%
\providecommand \EOS [0]{\spacefactor3000\relax}%
\providecommand \BibitemShut  [1]{\csname bibitem#1\endcsname}%
\let\auto@bib@innerbib\@empty
\bibitem [{\citenamefont {Preskill}(2018)}]{preskill2018quantum}%
  \BibitemOpen
  \bibfield  {author} {\bibinfo {author} {\bibfnamefont {J.}~\bibnamefont
  {Preskill}},\ }\bibfield  {title} {\bibinfo {title} {Quantum computing in the
  nisq era and beyond},\ }\href {https://doi.org/10.22331/q-2018-08-06-79}
  {\bibfield  {journal} {\bibinfo  {journal} {Quantum}\ }\textbf {\bibinfo
  {volume} {2}},\ \bibinfo {pages} {79} (\bibinfo {year} {2018})}\BibitemShut
  {NoStop}%
\bibitem [{\citenamefont {Bharti}\ \emph {et~al.}(2022)\citenamefont {Bharti},
  \citenamefont {Cervera-Lierta}, \citenamefont {Kyaw}, \citenamefont {Haug},
  \citenamefont {Alperin-Lea}, \citenamefont {Anand}, \citenamefont {Degroote},
  \citenamefont {Heimonen}, \citenamefont {Kottmann}, \citenamefont {Menke}
  \emph {et~al.}}]{bharti2021noisy}%
  \BibitemOpen
  \bibfield  {author} {\bibinfo {author} {\bibfnamefont {K.}~\bibnamefont
  {Bharti}}, \bibinfo {author} {\bibfnamefont {A.}~\bibnamefont
  {Cervera-Lierta}}, \bibinfo {author} {\bibfnamefont {T.~H.}\ \bibnamefont
  {Kyaw}}, \bibinfo {author} {\bibfnamefont {T.}~\bibnamefont {Haug}}, \bibinfo
  {author} {\bibfnamefont {S.}~\bibnamefont {Alperin-Lea}}, \bibinfo {author}
  {\bibfnamefont {A.}~\bibnamefont {Anand}}, \bibinfo {author} {\bibfnamefont
  {M.}~\bibnamefont {Degroote}}, \bibinfo {author} {\bibfnamefont
  {H.}~\bibnamefont {Heimonen}}, \bibinfo {author} {\bibfnamefont {J.~S.}\
  \bibnamefont {Kottmann}}, \bibinfo {author} {\bibfnamefont {T.}~\bibnamefont
  {Menke}}, \emph {et~al.},\ }\bibfield  {title} {\bibinfo {title} {Noisy
  intermediate-scale quantum algorithms},\ }\href
  {https://link.aps.org/doi/10.1103/RevModPhys.94.015004} {\bibfield  {journal}
  {\bibinfo  {journal} {Rev. Mod. Phys.}\ }\textbf {\bibinfo {volume} {94}},\
  \bibinfo {pages} {015004} (\bibinfo {year} {2022})}\BibitemShut {NoStop}%
\bibitem [{\citenamefont {Peruzzo}\ \emph {et~al.}(2014)\citenamefont
  {Peruzzo}, \citenamefont {McClean}, \citenamefont {Shadbolt}, \citenamefont
  {Yung}, \citenamefont {Zhou}, \citenamefont {Love}, \citenamefont
  {Aspuru-Guzik},\ and\ \citenamefont {O'Brien}}]{peruzzo2014variational}%
  \BibitemOpen
  \bibfield  {author} {\bibinfo {author} {\bibfnamefont {A.}~\bibnamefont
  {Peruzzo}}, \bibinfo {author} {\bibfnamefont {J.}~\bibnamefont {McClean}},
  \bibinfo {author} {\bibfnamefont {P.}~\bibnamefont {Shadbolt}}, \bibinfo
  {author} {\bibfnamefont {M.-H.}\ \bibnamefont {Yung}}, \bibinfo {author}
  {\bibfnamefont {X.-Q.}\ \bibnamefont {Zhou}}, \bibinfo {author}
  {\bibfnamefont {P.~J.}\ \bibnamefont {Love}}, \bibinfo {author}
  {\bibfnamefont {A.}~\bibnamefont {Aspuru-Guzik}},\ and\ \bibinfo {author}
  {\bibfnamefont {J.~L.}\ \bibnamefont {O'Brien}},\ }\bibfield  {title}
  {\bibinfo {title} {A variational eigenvalue solver on a photonic quantum
  processor},\ }\href {https://doi.org/10.1038/ncomms5213} {\bibfield
  {journal} {\bibinfo  {journal} {Nature Communications}\ }\textbf {\bibinfo
  {volume} {5}},\ \bibinfo {pages} {4213} (\bibinfo {year} {2014})}\BibitemShut
  {NoStop}%
\bibitem [{\citenamefont {Farhi}\ \emph {et~al.}(2014)\citenamefont {Farhi},
  \citenamefont {Goldstone},\ and\ \citenamefont {Gutmann}}]{farhi2014quantum}%
  \BibitemOpen
  \bibfield  {author} {\bibinfo {author} {\bibfnamefont {E.}~\bibnamefont
  {Farhi}}, \bibinfo {author} {\bibfnamefont {J.}~\bibnamefont {Goldstone}},\
  and\ \bibinfo {author} {\bibfnamefont {S.}~\bibnamefont {Gutmann}},\
  }\bibfield  {title} {\bibinfo {title} {A quantum approximate optimization
  algorithm},\ }\href {https://arxiv.org/abs/1411.4028} {\bibfield  {journal}
  {\bibinfo  {journal} {arXiv:1411.4028}\ } (\bibinfo {year}
  {2014})}\BibitemShut {NoStop}%
\bibitem [{\citenamefont {Nakaji}\ and\ \citenamefont
  {Yamamoto}(2021)}]{nakaji2021expressibility}%
  \BibitemOpen
  \bibfield  {author} {\bibinfo {author} {\bibfnamefont {K.}~\bibnamefont
  {Nakaji}}\ and\ \bibinfo {author} {\bibfnamefont {N.}~\bibnamefont
  {Yamamoto}},\ }\bibfield  {title} {\bibinfo {title} {Expressibility of the
  alternating layered ansatz for quantum computation},\ }\href
  {https://doi.org/10.22331/q-2021-04-19-434} {\bibfield  {journal} {\bibinfo
  {journal} {Quantum}\ }\textbf {\bibinfo {volume} {5}},\ \bibinfo {pages}
  {434} (\bibinfo {year} {2021})}\BibitemShut {NoStop}%
\bibitem [{\citenamefont {Sim}\ \emph {et~al.}(2019)\citenamefont {Sim},
  \citenamefont {Johnson},\ and\ \citenamefont
  {Aspuru-Guzik}}]{sim2019expressibility}%
  \BibitemOpen
  \bibfield  {author} {\bibinfo {author} {\bibfnamefont {S.}~\bibnamefont
  {Sim}}, \bibinfo {author} {\bibfnamefont {P.~D.}\ \bibnamefont {Johnson}},\
  and\ \bibinfo {author} {\bibfnamefont {A.}~\bibnamefont {Aspuru-Guzik}},\
  }\bibfield  {title} {\bibinfo {title} {Expressibility and entangling
  capability of parameterized quantum circuits for hybrid quantum-classical
  algorithms},\ }\href {https://doi.org/10.1002/qute.201900070} {\bibfield
  {journal} {\bibinfo  {journal} {Advanced Quantum Technologies}\ }\textbf
  {\bibinfo {volume} {2}},\ \bibinfo {pages} {1900070} (\bibinfo {year}
  {2019})}\BibitemShut {NoStop}%
\bibitem [{\citenamefont {McClean}\ \emph {et~al.}(2018)\citenamefont
  {McClean}, \citenamefont {Boixo}, \citenamefont {Smelyanskiy}, \citenamefont
  {Babbush},\ and\ \citenamefont {Neven}}]{mcclean2018barren}%
  \BibitemOpen
  \bibfield  {author} {\bibinfo {author} {\bibfnamefont {J.~R.}\ \bibnamefont
  {McClean}}, \bibinfo {author} {\bibfnamefont {S.}~\bibnamefont {Boixo}},
  \bibinfo {author} {\bibfnamefont {V.~N.}\ \bibnamefont {Smelyanskiy}},
  \bibinfo {author} {\bibfnamefont {R.}~\bibnamefont {Babbush}},\ and\ \bibinfo
  {author} {\bibfnamefont {H.}~\bibnamefont {Neven}},\ }\bibfield  {title}
  {\bibinfo {title} {Barren plateaus in quantum neural network training
  landscapes},\ }\href {https://doi.org/10.1038/s41467-018-07090-4} {\bibfield
  {journal} {\bibinfo  {journal} {Nature Communications}\ }\textbf {\bibinfo
  {volume} {9}},\ \bibinfo {pages} {4812} (\bibinfo {year} {2018})}\BibitemShut
  {NoStop}%
\bibitem [{\citenamefont {Wang}\ \emph {et~al.}(2021)\citenamefont {Wang},
  \citenamefont {Fontana}, \citenamefont {Cerezo}, \citenamefont {Sharma},
  \citenamefont {Sone}, \citenamefont {Cincio},\ and\ \citenamefont
  {Coles}}]{wang2020noise}%
  \BibitemOpen
  \bibfield  {author} {\bibinfo {author} {\bibfnamefont {S.}~\bibnamefont
  {Wang}}, \bibinfo {author} {\bibfnamefont {E.}~\bibnamefont {Fontana}},
  \bibinfo {author} {\bibfnamefont {M.}~\bibnamefont {Cerezo}}, \bibinfo
  {author} {\bibfnamefont {K.}~\bibnamefont {Sharma}}, \bibinfo {author}
  {\bibfnamefont {A.}~\bibnamefont {Sone}}, \bibinfo {author} {\bibfnamefont
  {L.}~\bibnamefont {Cincio}},\ and\ \bibinfo {author} {\bibfnamefont {P.~J.}\
  \bibnamefont {Coles}},\ }\bibfield  {title} {\bibinfo {title} {Noise-induced
  barren plateaus in variational quantum algorithms},\ }\href
  {https://doi.org/10.1038/s41467-021-27045-6} {\bibfield  {journal} {\bibinfo
  {journal} {Nature Communications}\ }\textbf {\bibinfo {volume} {12}},\
  \bibinfo {pages} {6961} (\bibinfo {year} {2021})}\BibitemShut {NoStop}%
\bibitem [{\citenamefont {Ortiz~Marrero}\ \emph {et~al.}(2021)\citenamefont
  {Ortiz~Marrero}, \citenamefont {Kieferov\'a},\ and\ \citenamefont
  {Wiebe}}]{marrero2020entanglement}%
  \BibitemOpen
  \bibfield  {author} {\bibinfo {author} {\bibfnamefont {C.}~\bibnamefont
  {Ortiz~Marrero}}, \bibinfo {author} {\bibfnamefont {M.}~\bibnamefont
  {Kieferov\'a}},\ and\ \bibinfo {author} {\bibfnamefont {N.}~\bibnamefont
  {Wiebe}},\ }\bibfield  {title} {\bibinfo {title} {Entanglement-induced barren
  plateaus},\ }\href {https://link.aps.org/doi/10.1103/PRXQuantum.2.040316}
  {\bibfield  {journal} {\bibinfo  {journal} {PRX Quantum}\ }\textbf {\bibinfo
  {volume} {2}},\ \bibinfo {pages} {040316} (\bibinfo {year}
  {2021})}\BibitemShut {NoStop}%
\bibitem [{\citenamefont {Wecker}\ \emph {et~al.}(2015)\citenamefont {Wecker},
  \citenamefont {Hastings},\ and\ \citenamefont {Troyer}}]{wecker2015progress}%
  \BibitemOpen
  \bibfield  {author} {\bibinfo {author} {\bibfnamefont {D.}~\bibnamefont
  {Wecker}}, \bibinfo {author} {\bibfnamefont {M.~B.}\ \bibnamefont
  {Hastings}},\ and\ \bibinfo {author} {\bibfnamefont {M.}~\bibnamefont
  {Troyer}},\ }\bibfield  {title} {\bibinfo {title} {Progress towards practical
  quantum variational algorithms},\ }\href
  {https://doi.org/10.1103/PhysRevA.92.042303} {\bibfield  {journal} {\bibinfo
  {journal} {Physical Review A}\ }\textbf {\bibinfo {volume} {92}},\ \bibinfo
  {pages} {042303} (\bibinfo {year} {2015})}\BibitemShut {NoStop}%
\bibitem [{\citenamefont {Huggins}\ \emph {et~al.}(2021)\citenamefont
  {Huggins}, \citenamefont {McClean}, \citenamefont {Rubin}, \citenamefont
  {Jiang}, \citenamefont {Wiebe}, \citenamefont {Whaley},\ and\ \citenamefont
  {Babbush}}]{huggins2021efficient}%
  \BibitemOpen
  \bibfield  {author} {\bibinfo {author} {\bibfnamefont {W.~J.}\ \bibnamefont
  {Huggins}}, \bibinfo {author} {\bibfnamefont {J.~R.}\ \bibnamefont
  {McClean}}, \bibinfo {author} {\bibfnamefont {N.~C.}\ \bibnamefont {Rubin}},
  \bibinfo {author} {\bibfnamefont {Z.}~\bibnamefont {Jiang}}, \bibinfo
  {author} {\bibfnamefont {N.}~\bibnamefont {Wiebe}}, \bibinfo {author}
  {\bibfnamefont {K.~B.}\ \bibnamefont {Whaley}},\ and\ \bibinfo {author}
  {\bibfnamefont {R.}~\bibnamefont {Babbush}},\ }\bibfield  {title} {\bibinfo
  {title} {Efficient and noise resilient measurements for quantum chemistry on
  near-term quantum computers},\ }\href
  {https://doi.org/10.1038/s41534-020-00341-7} {\bibfield  {journal} {\bibinfo
  {journal} {npj Quantum Information}\ }\textbf {\bibinfo {volume} {7}},\
  \bibinfo {pages} {1} (\bibinfo {year} {2021})}\BibitemShut {NoStop}%
\bibitem [{\citenamefont {Weinhold}(1968{\natexlab{a}})}]{weinhold1968lower}%
  \BibitemOpen
  \bibfield  {author} {\bibinfo {author} {\bibfnamefont {F.}~\bibnamefont
  {Weinhold}},\ }\bibfield  {title} {\bibinfo {title} {Lower bounds to
  expectation values},\ }\href {https://doi.org/10.1088/0305-4470/1/3/301}
  {\bibfield  {journal} {\bibinfo  {journal} {Journal of Physics A: General
  Physics}\ }\textbf {\bibinfo {volume} {1}},\ \bibinfo {pages} {305} (\bibinfo
  {year} {1968}{\natexlab{a}})}\BibitemShut {NoStop}%
\bibitem [{\citenamefont {Kottmann}\ \emph
  {et~al.}(2021{\natexlab{a}})\citenamefont {Kottmann}, \citenamefont
  {Alperin-Lea}, \citenamefont {Tamayo-Mendoza}, \citenamefont
  {Cervera-Lierta}, \citenamefont {Lavigne}, \citenamefont {Yen}, \citenamefont
  {Verteletskyi}, \citenamefont {Schleich}, \citenamefont {Anand},
  \citenamefont {Degroote} \emph {et~al.}}]{kottmann2021tequila}%
  \BibitemOpen
  \bibfield  {author} {\bibinfo {author} {\bibfnamefont {J.~S.}\ \bibnamefont
  {Kottmann}}, \bibinfo {author} {\bibfnamefont {S.}~\bibnamefont
  {Alperin-Lea}}, \bibinfo {author} {\bibfnamefont {T.}~\bibnamefont
  {Tamayo-Mendoza}}, \bibinfo {author} {\bibfnamefont {A.}~\bibnamefont
  {Cervera-Lierta}}, \bibinfo {author} {\bibfnamefont {C.}~\bibnamefont
  {Lavigne}}, \bibinfo {author} {\bibfnamefont {T.-C.}\ \bibnamefont {Yen}},
  \bibinfo {author} {\bibfnamefont {V.}~\bibnamefont {Verteletskyi}}, \bibinfo
  {author} {\bibfnamefont {P.}~\bibnamefont {Schleich}}, \bibinfo {author}
  {\bibfnamefont {A.}~\bibnamefont {Anand}}, \bibinfo {author} {\bibfnamefont
  {M.}~\bibnamefont {Degroote}}, \emph {et~al.},\ }\bibfield  {title} {\bibinfo
  {title} {Tequila: A platform for rapid development of quantum algorithms},\
  }\href {https://doi.org/10.1088/2058-9565/abe567} {\bibfield  {journal}
  {\bibinfo  {journal} {Quantum Science and Technology}\ }\textbf {\bibinfo
  {volume} {6}},\ \bibinfo {pages} {024009} (\bibinfo {year}
  {2021}{\natexlab{a}})}\BibitemShut {NoStop}%
\bibitem [{\citenamefont {Kottmann}\ \emph
  {et~al.}(2021{\natexlab{b}})\citenamefont {Kottmann}, \citenamefont
  {Schleich}, \citenamefont {Tamayo-Mendoza},\ and\ \citenamefont
  {Aspuru-Guzik}}]{kottmann2021reducing}%
  \BibitemOpen
  \bibfield  {author} {\bibinfo {author} {\bibfnamefont {J.~S.}\ \bibnamefont
  {Kottmann}}, \bibinfo {author} {\bibfnamefont {P.}~\bibnamefont {Schleich}},
  \bibinfo {author} {\bibfnamefont {T.}~\bibnamefont {Tamayo-Mendoza}},\ and\
  \bibinfo {author} {\bibfnamefont {A.}~\bibnamefont {Aspuru-Guzik}},\
  }\bibfield  {title} {\bibinfo {title} {Reducing qubit requirements while
  maintaining numerical precision for the variational quantum eigensolver: A
  basis-set-free approach},\ }\href
  {https://doi.org/10.1021/acs.jpclett.0c03410} {\bibfield  {journal} {\bibinfo
   {journal} {J. Phys. Chem. Lett.}\ }\textbf {\bibinfo {volume} {12}},\
  \bibinfo {pages} {663} (\bibinfo {year} {2021}{\natexlab{b}})}\BibitemShut
  {NoStop}%
\bibitem [{\citenamefont {Kottmann}\ \emph {et~al.}(2020)\citenamefont
  {Kottmann}, \citenamefont {Bischoff},\ and\ \citenamefont
  {Valeev}}]{kottmann2020direct}%
  \BibitemOpen
  \bibfield  {author} {\bibinfo {author} {\bibfnamefont {J.~S.}\ \bibnamefont
  {Kottmann}}, \bibinfo {author} {\bibfnamefont {F.~A.}\ \bibnamefont
  {Bischoff}},\ and\ \bibinfo {author} {\bibfnamefont {E.~F.}\ \bibnamefont
  {Valeev}},\ }\bibfield  {title} {\bibinfo {title} {Direct determination of
  optimal pair-natural orbitals in a real-space representation: The
  second-order moller–plesset energy},\ }\href
  {https://doi.org/10.1063/1.5141880} {\bibfield  {journal} {\bibinfo
  {journal} {The Journal of Chemical Physics}\ }\textbf {\bibinfo {volume}
  {152}},\ \bibinfo {pages} {074105} (\bibinfo {year} {2020})}\BibitemShut
  {NoStop}%
\bibitem [{\citenamefont {Lee}\ \emph {et~al.}(2019)\citenamefont {Lee},
  \citenamefont {Huggins}, \citenamefont {Head-Gordon},\ and\ \citenamefont
  {Whaley}}]{Lee2019generalized}%
  \BibitemOpen
  \bibfield  {author} {\bibinfo {author} {\bibfnamefont {J.}~\bibnamefont
  {Lee}}, \bibinfo {author} {\bibfnamefont {W.~J.}\ \bibnamefont {Huggins}},
  \bibinfo {author} {\bibfnamefont {M.}~\bibnamefont {Head-Gordon}},\ and\
  \bibinfo {author} {\bibfnamefont {K.~B.}\ \bibnamefont {Whaley}},\ }\bibfield
   {title} {\bibinfo {title} {Generalized unitary coupled cluster wave
  functions for quantum computation},\ }\bibfield  {booktitle} {\emph {\bibinfo
  {booktitle} {Journal of Chemical Theory and Computation}},\ }\href
  {https://doi.org/10.1021/acs.jctc.8b01004} {\bibfield  {journal} {\bibinfo
  {journal} {Journal of Chemical Theory and Computation}\ }\textbf {\bibinfo
  {volume} {15}},\ \bibinfo {pages} {311} (\bibinfo {year} {2019})}\BibitemShut
  {NoStop}%
\bibitem [{\citenamefont {Weber}\ \emph {et~al.}(2021)\citenamefont {Weber},
  \citenamefont {Liu}, \citenamefont {Li}, \citenamefont {Zhang},\ and\
  \citenamefont {Zhao}}]{weber2021optimal}%
  \BibitemOpen
  \bibfield  {author} {\bibinfo {author} {\bibfnamefont {M.}~\bibnamefont
  {Weber}}, \bibinfo {author} {\bibfnamefont {N.}~\bibnamefont {Liu}}, \bibinfo
  {author} {\bibfnamefont {B.}~\bibnamefont {Li}}, \bibinfo {author}
  {\bibfnamefont {C.}~\bibnamefont {Zhang}},\ and\ \bibinfo {author}
  {\bibfnamefont {Z.}~\bibnamefont {Zhao}},\ }\bibfield  {title} {\bibinfo
  {title} {Optimal provable robustness of quantum classification via quantum
  hypothesis testing},\ }\href {https://doi.org/10.1038/s41534-021-00410-5}
  {\bibfield  {journal} {\bibinfo  {journal} {npj Quantum Information}\
  }\textbf {\bibinfo {volume} {7}},\ \bibinfo {pages} {76} (\bibinfo {year}
  {2021})}\BibitemShut {NoStop}%
\bibitem [{\citenamefont {Uhlmann}(1976)}]{uhlmann1976transition}%
  \BibitemOpen
  \bibfield  {author} {\bibinfo {author} {\bibfnamefont {A.}~\bibnamefont
  {Uhlmann}},\ }\bibfield  {title} {\bibinfo {title} {The transition
  probability in the state space of a $\star$-algebra},\ }\href
  {https://doi.org/10.1016/0034-4877(76)90060-4} {\bibfield  {journal}
  {\bibinfo  {journal} {Reports on Mathematical Physics}\ }\textbf {\bibinfo
  {volume} {9}},\ \bibinfo {pages} {273} (\bibinfo {year} {1976})}\BibitemShut
  {NoStop}%
\bibitem [{\citenamefont {Horn}\ and\ \citenamefont
  {Johnson}(1985)}]{horn1985matrix}%
  \BibitemOpen
  \bibfield  {author} {\bibinfo {author} {\bibfnamefont {R.~A.}\ \bibnamefont
  {Horn}}\ and\ \bibinfo {author} {\bibfnamefont {C.~R.}\ \bibnamefont
  {Johnson}},\ }\href {https://doi.org/10.1017/CBO9780511810817} {\emph
  {\bibinfo {title} {Matrix Analysis}}}\ (\bibinfo  {publisher} {Cambridge
  University Press},\ \bibinfo {year} {1985})\BibitemShut {NoStop}%
\bibitem [{\citenamefont
  {Weinhold}(1968{\natexlab{b}})}]{weinhold1968improved}%
  \BibitemOpen
  \bibfield  {author} {\bibinfo {author} {\bibfnamefont {F.}~\bibnamefont
  {Weinhold}},\ }\bibfield  {title} {\bibinfo {title} {Improved lower bounds to
  expectation values},\ }\href {https://doi.org/10.1088/0305-4470/1/5/304}
  {\bibfield  {journal} {\bibinfo  {journal} {Journal of Physics A: General
  Physics}\ }\textbf {\bibinfo {volume} {1}},\ \bibinfo {pages} {535} (\bibinfo
  {year} {1968}{\natexlab{b}})}\BibitemShut {NoStop}%
\bibitem [{\citenamefont {Weinhold}(1969)}]{weinhold1969new}%
  \BibitemOpen
  \bibfield  {author} {\bibinfo {author} {\bibfnamefont {F.}~\bibnamefont
  {Weinhold}},\ }\bibfield  {title} {\bibinfo {title} {New formulas for lower
  bounds to expectation values},\ }\href
  {https://doi.org/10.1103/PhysRev.183.142} {\bibfield  {journal} {\bibinfo
  {journal} {Phys. Rev.}\ }\textbf {\bibinfo {volume} {183}},\ \bibinfo {pages}
  {142} (\bibinfo {year} {1969})}\BibitemShut {NoStop}%
\bibitem [{\citenamefont
  {Weinhold}(1970{\natexlab{a}})}]{weinhold1970variational}%
  \BibitemOpen
  \bibfield  {author} {\bibinfo {author} {\bibfnamefont {F.}~\bibnamefont
  {Weinhold}},\ }\bibfield  {title} {\bibinfo {title} {Variational extensions
  of lower bounds to expectation values},\ }\href
  {https://doi.org/10.1103/PhysRevA.1.122} {\bibfield  {journal} {\bibinfo
  {journal} {Phys. Rev. A}\ }\textbf {\bibinfo {volume} {1}},\ \bibinfo {pages}
  {122} (\bibinfo {year} {1970}{\natexlab{a}})}\BibitemShut {NoStop}%
\bibitem [{\citenamefont {Wang}(1971)}]{wang1971upper}%
  \BibitemOpen
  \bibfield  {author} {\bibinfo {author} {\bibfnamefont {P.}~\bibnamefont
  {Wang}},\ }\bibfield  {title} {\bibinfo {title} {Upper and lower bounds for
  expectation values},\ }\href {https://doi.org/10.1016/0009-2614(71)80496-7}
  {\bibfield  {journal} {\bibinfo  {journal} {Chemical Physics Letters}\
  }\textbf {\bibinfo {volume} {11}},\ \bibinfo {pages} {318} (\bibinfo {year}
  {1971})}\BibitemShut {NoStop}%
\bibitem [{\citenamefont {Blau}\ \emph {et~al.}(1973)\citenamefont {Blau},
  \citenamefont {Rau},\ and\ \citenamefont {Spruch}}]{blau1973upper}%
  \BibitemOpen
  \bibfield  {author} {\bibinfo {author} {\bibfnamefont {R.}~\bibnamefont
  {Blau}}, \bibinfo {author} {\bibfnamefont {A.~R.~P.}\ \bibnamefont {Rau}},\
  and\ \bibinfo {author} {\bibfnamefont {L.}~\bibnamefont {Spruch}},\
  }\bibfield  {title} {\bibinfo {title} {Upper and lower bounds on
  quantum-mechanical matrix elements},\ }\href
  {https://link.aps.org/doi/10.1103/PhysRevA.8.119} {\bibfield  {journal}
  {\bibinfo  {journal} {Phys. Rev. A}\ }\textbf {\bibinfo {volume} {8}},\
  \bibinfo {pages} {119} (\bibinfo {year} {1973})}\BibitemShut {NoStop}%
\bibitem [{\citenamefont {Marmorino}(2002)}]{marmorino2002lower}%
  \BibitemOpen
  \bibfield  {author} {\bibinfo {author} {\bibfnamefont {M.~G.}\ \bibnamefont
  {Marmorino}},\ }\bibfield  {title} {\bibinfo {title} {Lower bounds to
  ground-state eigenvalues},\ }\href {https://doi.org/10.1023/A:1021211206564}
  {\bibfield  {journal} {\bibinfo  {journal} {Journal of Mathematical
  Chemistry}\ }\textbf {\bibinfo {volume} {32}},\ \bibinfo {pages} {19}
  (\bibinfo {year} {2002})}\BibitemShut {NoStop}%
\bibitem [{\citenamefont {Marmorino}\ and\ \citenamefont
  {Black}(2016)}]{marmorino2016lower}%
  \BibitemOpen
  \bibfield  {author} {\bibinfo {author} {\bibfnamefont {M.~G.}\ \bibnamefont
  {Marmorino}}\ and\ \bibinfo {author} {\bibfnamefont {V.}~\bibnamefont
  {Black}},\ }\bibfield  {title} {\bibinfo {title} {Lower bounds to the
  ground-state expectation value of non-negative operators},\ }\href
  {https://doi.org/10.1007/s10910-016-0660-5} {\bibfield  {journal} {\bibinfo
  {journal} {Journal of Mathematical Chemistry}\ }\textbf {\bibinfo {volume}
  {54}},\ \bibinfo {pages} {1973} (\bibinfo {year} {2016})}\BibitemShut
  {NoStop}%
\bibitem [{\citenamefont {Eckart}(1930)}]{eckart1930}%
  \BibitemOpen
  \bibfield  {author} {\bibinfo {author} {\bibfnamefont {C.}~\bibnamefont
  {Eckart}},\ }\bibfield  {title} {\bibinfo {title} {The theory and calculation
  of screening constants},\ }\href {https://doi.org/10.1103/PhysRev.36.878}
  {\bibfield  {journal} {\bibinfo  {journal} {Phys. Rev.}\ }\textbf {\bibinfo
  {volume} {36}},\ \bibinfo {pages} {878} (\bibinfo {year} {1930})}\BibitemShut
  {NoStop}%
\bibitem [{\citenamefont {Weinstein}(1934)}]{Weinstein1934}%
  \BibitemOpen
  \bibfield  {author} {\bibinfo {author} {\bibfnamefont {D.~H.}\ \bibnamefont
  {Weinstein}},\ }\bibfield  {title} {\bibinfo {title} {Modified ritz method},\
  }\href {https://www.pnas.org/content/20/9/529} {\bibfield  {journal}
  {\bibinfo  {journal} {Proceedings of the National Academy of Sciences}\
  }\textbf {\bibinfo {volume} {20}},\ \bibinfo {pages} {529} (\bibinfo {year}
  {1934})}\BibitemShut {NoStop}%
\bibitem [{\citenamefont {MacDonald}(1934)}]{macdonald1934}%
  \BibitemOpen
  \bibfield  {author} {\bibinfo {author} {\bibfnamefont {J.~K.~L.}\
  \bibnamefont {MacDonald}},\ }\bibfield  {title} {\bibinfo {title} {On the
  modified ritz variation method},\ }\href
  {https://doi.org/10.1103/PhysRev.46.828} {\bibfield  {journal} {\bibinfo
  {journal} {Phys. Rev.}\ }\textbf {\bibinfo {volume} {46}},\ \bibinfo {pages}
  {828} (\bibinfo {year} {1934})}\BibitemShut {NoStop}%
\bibitem [{\citenamefont {McClean}\ \emph {et~al.}(2016)\citenamefont
  {McClean}, \citenamefont {Romero}, \citenamefont {Babbush},\ and\
  \citenamefont {Aspuru-Guzik}}]{mcclean2016theory}%
  \BibitemOpen
  \bibfield  {author} {\bibinfo {author} {\bibfnamefont {J.~R.}\ \bibnamefont
  {McClean}}, \bibinfo {author} {\bibfnamefont {J.}~\bibnamefont {Romero}},
  \bibinfo {author} {\bibfnamefont {R.}~\bibnamefont {Babbush}},\ and\ \bibinfo
  {author} {\bibfnamefont {A.}~\bibnamefont {Aspuru-Guzik}},\ }\bibfield
  {title} {\bibinfo {title} {The theory of variational hybrid quantum-classical
  algorithms},\ }\href {https://doi.org/10.1088/1367-2630/18/2/023023}
  {\bibfield  {journal} {\bibinfo  {journal} {New Journal of Physics}\ }\textbf
  {\bibinfo {volume} {18}},\ \bibinfo {pages} {023023} (\bibinfo {year}
  {2016})}\BibitemShut {NoStop}%
\bibitem [{\citenamefont
  {Weinhold}(1970{\natexlab{b}})}]{weinhold1970criteria}%
  \BibitemOpen
  \bibfield  {author} {\bibinfo {author} {\bibfnamefont {F.}~\bibnamefont
  {Weinhold}},\ }\bibfield  {title} {\bibinfo {title} {Criteria of accuracy of
  approximate wavefunctions},\ }\href {https://doi.org/10.1063/1.1665372}
  {\bibfield  {journal} {\bibinfo  {journal} {Journal of Mathematical Physics}\
  }\textbf {\bibinfo {volume} {11}},\ \bibinfo {pages} {2127} (\bibinfo {year}
  {1970}{\natexlab{b}})}\BibitemShut {NoStop}%
\bibitem [{\citenamefont {Davis}\ and\ \citenamefont
  {Kahan}(1970)}]{davis1970rotation}%
  \BibitemOpen
  \bibfield  {author} {\bibinfo {author} {\bibfnamefont {C.}~\bibnamefont
  {Davis}}\ and\ \bibinfo {author} {\bibfnamefont {W.~M.}\ \bibnamefont
  {Kahan}},\ }\bibfield  {title} {\bibinfo {title} {The rotation of
  eigenvectors by a perturbation. iii},\ }\href
  {https://doi.org/10.1137/0707001} {\bibfield  {journal} {\bibinfo  {journal}
  {SIAM Journal on Numerical Analysis}\ }\textbf {\bibinfo {volume} {7}},\
  \bibinfo {pages} {1} (\bibinfo {year} {1970})}\BibitemShut {NoStop}%
\bibitem [{\citenamefont {Helgaker}\ \emph {et~al.}(2014)\citenamefont
  {Helgaker}, \citenamefont {Jorgensen},\ and\ \citenamefont
  {Olsen}}]{helgaker2014molecular}%
  \BibitemOpen
  \bibfield  {author} {\bibinfo {author} {\bibfnamefont {T.}~\bibnamefont
  {Helgaker}}, \bibinfo {author} {\bibfnamefont {P.}~\bibnamefont
  {Jorgensen}},\ and\ \bibinfo {author} {\bibfnamefont {J.}~\bibnamefont
  {Olsen}},\ }\href {https://books.google.ca/books?id=lNVLBAAAQBAJ} {\emph
  {\bibinfo {title} {Molecular Electronic-Structure Theory}}}\ (\bibinfo
  {publisher} {Wiley},\ \bibinfo {year} {2014})\BibitemShut {NoStop}%
\bibitem [{\citenamefont {McArdle}\ \emph {et~al.}(2019)\citenamefont
  {McArdle}, \citenamefont {Jones}, \citenamefont {Endo}, \citenamefont {Li},
  \citenamefont {Benjamin},\ and\ \citenamefont
  {Yuan}}]{mcardle2019Variational}%
  \BibitemOpen
  \bibfield  {author} {\bibinfo {author} {\bibfnamefont {S.}~\bibnamefont
  {McArdle}}, \bibinfo {author} {\bibfnamefont {T.}~\bibnamefont {Jones}},
  \bibinfo {author} {\bibfnamefont {S.}~\bibnamefont {Endo}}, \bibinfo {author}
  {\bibfnamefont {Y.}~\bibnamefont {Li}}, \bibinfo {author} {\bibfnamefont
  {S.~C.}\ \bibnamefont {Benjamin}},\ and\ \bibinfo {author} {\bibfnamefont
  {X.}~\bibnamefont {Yuan}},\ }\bibfield  {title} {\bibinfo {title}
  {Variational ansatz-based quantum simulation of imaginary time evolution},\
  }\href {https://doi.org/10.1038/s41534-019-0187-2} {\bibfield  {journal}
  {\bibinfo  {journal} {npj Quantum Information}\ }\textbf {\bibinfo {volume}
  {5}},\ \bibinfo {pages} {75} (\bibinfo {year} {2019})}\BibitemShut {NoStop}%
\bibitem [{\citenamefont {Yuan}\ \emph {et~al.}(2019)\citenamefont {Yuan},
  \citenamefont {Endo}, \citenamefont {Zhao}, \citenamefont {Li},\ and\
  \citenamefont {Benjamin}}]{Yuan2019theory}%
  \BibitemOpen
  \bibfield  {author} {\bibinfo {author} {\bibfnamefont {X.}~\bibnamefont
  {Yuan}}, \bibinfo {author} {\bibfnamefont {S.}~\bibnamefont {Endo}}, \bibinfo
  {author} {\bibfnamefont {Q.}~\bibnamefont {Zhao}}, \bibinfo {author}
  {\bibfnamefont {Y.}~\bibnamefont {Li}},\ and\ \bibinfo {author}
  {\bibfnamefont {S.~C.}\ \bibnamefont {Benjamin}},\ }\bibfield  {title}
  {\bibinfo {title} {Theory of variational quantum simulation},\ }\href
  {https://doi.org/10.22331/q-2019-10-07-191} {\bibfield  {journal} {\bibinfo
  {journal} {{Quantum}}\ }\textbf {\bibinfo {volume} {3}},\ \bibinfo {pages}
  {191} (\bibinfo {year} {2019})}\BibitemShut {NoStop}%
\bibitem [{\citenamefont {Motta}\ \emph {et~al.}(2020)\citenamefont {Motta},
  \citenamefont {Sun}, \citenamefont {Tan}, \citenamefont {O'Rourke},
  \citenamefont {Ye}, \citenamefont {Minnich}, \citenamefont {Brand{\~a}o},\
  and\ \citenamefont {Chan}}]{motta2020determining}%
  \BibitemOpen
  \bibfield  {author} {\bibinfo {author} {\bibfnamefont {M.}~\bibnamefont
  {Motta}}, \bibinfo {author} {\bibfnamefont {C.}~\bibnamefont {Sun}}, \bibinfo
  {author} {\bibfnamefont {A.~T.~K.}\ \bibnamefont {Tan}}, \bibinfo {author}
  {\bibfnamefont {M.~J.}\ \bibnamefont {O'Rourke}}, \bibinfo {author}
  {\bibfnamefont {E.}~\bibnamefont {Ye}}, \bibinfo {author} {\bibfnamefont
  {A.~J.}\ \bibnamefont {Minnich}}, \bibinfo {author} {\bibfnamefont {F.~G.
  S.~L.}\ \bibnamefont {Brand{\~a}o}},\ and\ \bibinfo {author} {\bibfnamefont
  {G.~K.-L.}\ \bibnamefont {Chan}},\ }\bibfield  {title} {\bibinfo {title}
  {Determining eigenstates and thermal states on a quantum computer using
  quantum imaginary time evolution},\ }\href
  {https://doi.org/10.1038/s41567-019-0704-4} {\bibfield  {journal} {\bibinfo
  {journal} {Nature Physics}\ }\textbf {\bibinfo {volume} {16}},\ \bibinfo
  {pages} {205} (\bibinfo {year} {2020})}\BibitemShut {NoStop}%
\bibitem [{\citenamefont {Zoufal}\ \emph {et~al.}(2021)\citenamefont {Zoufal},
  \citenamefont {Sutter},\ and\ \citenamefont {Woerner}}]{zoufal2021error}%
  \BibitemOpen
  \bibfield  {author} {\bibinfo {author} {\bibfnamefont {C.}~\bibnamefont
  {Zoufal}}, \bibinfo {author} {\bibfnamefont {D.}~\bibnamefont {Sutter}},\
  and\ \bibinfo {author} {\bibfnamefont {S.}~\bibnamefont {Woerner}},\
  }\bibfield  {title} {\bibinfo {title} {Error bounds for variational quantum
  time evolution},\ }\href {https://arxiv.org/abs/2108.00022} {\bibfield
  {journal} {\bibinfo  {journal} {arXiv:2108.00022}\ } (\bibinfo {year}
  {2021})}\BibitemShut {NoStop}%
\bibitem [{\citenamefont {Yen}\ \emph {et~al.}(2020)\citenamefont {Yen},
  \citenamefont {Verteletskyi},\ and\ \citenamefont
  {Izmaylov}}]{yen2020measuring}%
  \BibitemOpen
  \bibfield  {author} {\bibinfo {author} {\bibfnamefont {T.-C.}\ \bibnamefont
  {Yen}}, \bibinfo {author} {\bibfnamefont {V.}~\bibnamefont {Verteletskyi}},\
  and\ \bibinfo {author} {\bibfnamefont {A.~F.}\ \bibnamefont {Izmaylov}},\
  }\bibfield  {title} {\bibinfo {title} {Measuring all compatible operators in
  one series of single-qubit measurements using unitary transformations},\
  }\href {https://doi.org/10.1021/acs.jctc.0c00008} {\bibfield  {journal}
  {\bibinfo  {journal} {Journal of Chemical Theory and Computation}\ }\textbf
  {\bibinfo {volume} {16}},\ \bibinfo {pages} {2400} (\bibinfo {year}
  {2020})}\BibitemShut {NoStop}%
\bibitem [{\citenamefont {Verteletskyi}\ \emph {et~al.}(2020)\citenamefont
  {Verteletskyi}, \citenamefont {Yen},\ and\ \citenamefont
  {Izmaylov}}]{verteletskyi2020}%
  \BibitemOpen
  \bibfield  {author} {\bibinfo {author} {\bibfnamefont {V.}~\bibnamefont
  {Verteletskyi}}, \bibinfo {author} {\bibfnamefont {T.-C.}\ \bibnamefont
  {Yen}},\ and\ \bibinfo {author} {\bibfnamefont {A.~F.}\ \bibnamefont
  {Izmaylov}},\ }\bibfield  {title} {\bibinfo {title} {Measurement optimization
  in the variational quantum eigensolver using a minimum clique cover},\ }\href
  {https://doi.org/10.1063/1.5141458} {\bibfield  {journal} {\bibinfo
  {journal} {J. Chem. Phys.}\ }\textbf {\bibinfo {volume} {152}},\ \bibinfo
  {pages} {124114} (\bibinfo {year} {2020})}\BibitemShut {NoStop}%
\bibitem [{\citenamefont {McClean}\ \emph {et~al.}(2020)\citenamefont
  {McClean}, \citenamefont {Rubin}, \citenamefont {Sung}, \citenamefont
  {Kivlichan}, \citenamefont {Bonet-Monroig}, \citenamefont {Cao},
  \citenamefont {Dai}, \citenamefont {Fried}, \citenamefont {Gidney},
  \citenamefont {Gimby} \emph {et~al.}}]{openfermion}%
  \BibitemOpen
  \bibfield  {author} {\bibinfo {author} {\bibfnamefont {J.~R.}\ \bibnamefont
  {McClean}}, \bibinfo {author} {\bibfnamefont {N.~C.}\ \bibnamefont {Rubin}},
  \bibinfo {author} {\bibfnamefont {K.~J.}\ \bibnamefont {Sung}}, \bibinfo
  {author} {\bibfnamefont {I.~D.}\ \bibnamefont {Kivlichan}}, \bibinfo {author}
  {\bibfnamefont {X.}~\bibnamefont {Bonet-Monroig}}, \bibinfo {author}
  {\bibfnamefont {Y.}~\bibnamefont {Cao}}, \bibinfo {author} {\bibfnamefont
  {C.}~\bibnamefont {Dai}}, \bibinfo {author} {\bibfnamefont {E.~S.}\
  \bibnamefont {Fried}}, \bibinfo {author} {\bibfnamefont {C.}~\bibnamefont
  {Gidney}}, \bibinfo {author} {\bibfnamefont {B.}~\bibnamefont {Gimby}}, \emph
  {et~al.},\ }\bibfield  {title} {\bibinfo {title} {{OpenFermion}: the
  electronic structure package for quantum computers},\ }\href
  {https://doi.org/10.1088/2058-9565/ab8ebc} {\bibfield  {journal} {\bibinfo
  {journal} {Quantum Science and Technology}\ }\textbf {\bibinfo {volume}
  {5}},\ \bibinfo {pages} {034014} (\bibinfo {year} {2020})}\BibitemShut
  {NoStop}%
\bibitem [{\citenamefont {{Virtanen}}\ \emph {et~al.}(2020)\citenamefont
  {{Virtanen}}, \citenamefont {{Gommers}}, \citenamefont {{Oliphant}},
  \citenamefont {{Haberland}}, \citenamefont {{Reddy}}, \citenamefont
  {{Cournapeau}}, \citenamefont {{Burovski}}, \citenamefont {{Peterson}},
  \citenamefont {{Weckesser}}, \citenamefont {{Bright}} \emph
  {et~al.}}]{scipy}%
  \BibitemOpen
  \bibfield  {author} {\bibinfo {author} {\bibfnamefont {P.}~\bibnamefont
  {{Virtanen}}}, \bibinfo {author} {\bibfnamefont {R.}~\bibnamefont
  {{Gommers}}}, \bibinfo {author} {\bibfnamefont {T.~E.}\ \bibnamefont
  {{Oliphant}}}, \bibinfo {author} {\bibfnamefont {M.}~\bibnamefont
  {{Haberland}}}, \bibinfo {author} {\bibfnamefont {T.}~\bibnamefont
  {{Reddy}}}, \bibinfo {author} {\bibfnamefont {D.}~\bibnamefont
  {{Cournapeau}}}, \bibinfo {author} {\bibfnamefont {E.}~\bibnamefont
  {{Burovski}}}, \bibinfo {author} {\bibfnamefont {P.}~\bibnamefont
  {{Peterson}}}, \bibinfo {author} {\bibfnamefont {W.}~\bibnamefont
  {{Weckesser}}}, \bibinfo {author} {\bibfnamefont {J.}~\bibnamefont
  {{Bright}}}, \emph {et~al.},\ }\bibfield  {title} {\bibinfo {title} {Scipy
  1.0: Fundamental algorithms for scientific computing in python},\ }\href
  {https://doi.org/10.1038/s41592-019-0686-2} {\bibfield  {journal} {\bibinfo
  {journal} {Nature Methods}\ }\textbf {\bibinfo {volume} {17}},\ \bibinfo
  {pages} {261} (\bibinfo {year} {2020})}\BibitemShut {NoStop}%
\bibitem [{\citenamefont {Harrison}\ \emph {et~al.}(2016)\citenamefont
  {Harrison}, \citenamefont {Beylkin}, \citenamefont {Bischoff}, \citenamefont
  {Calvin}, \citenamefont {Fann}, \citenamefont {Fosso-Tande}, \citenamefont
  {Galindo}, \citenamefont {Hammond}, \citenamefont {Hartman-Baker},
  \citenamefont {Hill} \emph {et~al.}}]{harrison2016madness}%
  \BibitemOpen
  \bibfield  {author} {\bibinfo {author} {\bibfnamefont {R.~J.}\ \bibnamefont
  {Harrison}}, \bibinfo {author} {\bibfnamefont {G.}~\bibnamefont {Beylkin}},
  \bibinfo {author} {\bibfnamefont {F.~A.}\ \bibnamefont {Bischoff}}, \bibinfo
  {author} {\bibfnamefont {J.~A.}\ \bibnamefont {Calvin}}, \bibinfo {author}
  {\bibfnamefont {G.~I.}\ \bibnamefont {Fann}}, \bibinfo {author}
  {\bibfnamefont {J.}~\bibnamefont {Fosso-Tande}}, \bibinfo {author}
  {\bibfnamefont {D.}~\bibnamefont {Galindo}}, \bibinfo {author} {\bibfnamefont
  {J.~R.}\ \bibnamefont {Hammond}}, \bibinfo {author} {\bibfnamefont
  {R.}~\bibnamefont {Hartman-Baker}}, \bibinfo {author} {\bibfnamefont {J.~C.}\
  \bibnamefont {Hill}}, \emph {et~al.},\ }\bibfield  {title} {\bibinfo {title}
  {{MADNESS}: A multiresolution, adaptive numerical environment for scientific
  simulation},\ }\href {https://doi.org/10.1137/15M1026171} {\bibfield
  {journal} {\bibinfo  {journal} {SIAM J. Sci. Comput.}\ }\textbf {\bibinfo
  {volume} {38}},\ \bibinfo {pages} {S123} (\bibinfo {year}
  {2016})}\BibitemShut {NoStop}%
\bibitem [{\citenamefont {Suzuki}\ \emph {et~al.}(2021)\citenamefont {Suzuki},
  \citenamefont {Kawase}, \citenamefont {Masumura}, \citenamefont {Hiraga},
  \citenamefont {Nakadai}, \citenamefont {Chen}, \citenamefont {Nakanishi},
  \citenamefont {Mitarai}, \citenamefont {Imai}, \citenamefont {Tamiya} \emph
  {et~al.}}]{Qulacs}%
  \BibitemOpen
  \bibfield  {author} {\bibinfo {author} {\bibfnamefont {Y.}~\bibnamefont
  {Suzuki}}, \bibinfo {author} {\bibfnamefont {Y.}~\bibnamefont {Kawase}},
  \bibinfo {author} {\bibfnamefont {Y.}~\bibnamefont {Masumura}}, \bibinfo
  {author} {\bibfnamefont {Y.}~\bibnamefont {Hiraga}}, \bibinfo {author}
  {\bibfnamefont {M.}~\bibnamefont {Nakadai}}, \bibinfo {author} {\bibfnamefont
  {J.}~\bibnamefont {Chen}}, \bibinfo {author} {\bibfnamefont {K.~M.}\
  \bibnamefont {Nakanishi}}, \bibinfo {author} {\bibfnamefont {K.}~\bibnamefont
  {Mitarai}}, \bibinfo {author} {\bibfnamefont {R.}~\bibnamefont {Imai}},
  \bibinfo {author} {\bibfnamefont {S.}~\bibnamefont {Tamiya}}, \emph
  {et~al.},\ }\bibfield  {title} {\bibinfo {title} {Qulacs: a fast and
  versatile quantum circuit simulator for research purpose},\ }\href
  {https://doi.org/10.22331/q-2021-10-06-559} {\bibfield  {journal} {\bibinfo
  {journal} {{Quantum}}\ }\textbf {\bibinfo {volume} {5}},\ \bibinfo {pages}
  {559} (\bibinfo {year} {2021})}\BibitemShut {NoStop}%
\bibitem [{\citenamefont {et~al.}(2021)}]{Qiskit}%
  \BibitemOpen
  \bibfield  {author} {\bibinfo {author} {\bibfnamefont {A.}~\bibnamefont
  {et~al.}},\ }\href {https://doi.org/10.5281/zenodo.2573505} {\bibinfo {title}
  {Qiskit: An open-source framework for quantum computing}} (\bibinfo {year}
  {2021})\BibitemShut {NoStop}%
\bibitem [{\citenamefont {Rayleigh}(1870)}]{Rayleigh1870}%
  \BibitemOpen
  \bibfield  {author} {\bibinfo {author} {\bibfnamefont {J.}~\bibnamefont
  {Rayleigh}},\ }\bibfield  {title} {\bibinfo {title} {In finding the
  correction for the open end of an organ-pipe},\ }\href@noop {} {\bibfield
  {journal} {\bibinfo  {journal} {Phil. Trans.}\ }\textbf {\bibinfo {volume}
  {161}},\ \bibinfo {pages} {77} (\bibinfo {year} {1870})}\BibitemShut
  {NoStop}%
\bibitem [{\citenamefont {Ritz}(1909)}]{Ritz1909}%
  \BibitemOpen
  \bibfield  {author} {\bibinfo {author} {\bibfnamefont {W.}~\bibnamefont
  {Ritz}},\ }\bibfield  {title} {\bibinfo {title} {Über eine neue methode zur
  lösung gewisser variationsprobleme der mathematischen physik.},\ }\href
  {http://eudml.org/doc/149295} {\bibfield  {journal} {\bibinfo  {journal}
  {Journal für die reine und angewandte Mathematik}\ }\textbf {\bibinfo
  {volume} {135}},\ \bibinfo {pages} {1} (\bibinfo {year} {1909})}\BibitemShut
  {NoStop}%
\bibitem [{\citenamefont {Kottmann}\ \emph
  {et~al.}(2021{\natexlab{c}})\citenamefont {Kottmann}, \citenamefont {Anand},\
  and\ \citenamefont {Aspuru-Guzik}}]{kottmann2021feasible}%
  \BibitemOpen
  \bibfield  {author} {\bibinfo {author} {\bibfnamefont {J.~S.}\ \bibnamefont
  {Kottmann}}, \bibinfo {author} {\bibfnamefont {A.}~\bibnamefont {Anand}},\
  and\ \bibinfo {author} {\bibfnamefont {A.}~\bibnamefont {Aspuru-Guzik}},\
  }\bibfield  {title} {\bibinfo {title} {A feasible approach for automatically
  differentiable unitary coupled-cluster on quantum computers},\ }\href
  {https://doi.org/10.1039/D0SC06627C} {\bibfield  {journal} {\bibinfo
  {journal} {Chem. Sci.}\ ,\ } (\bibinfo {year}
  {2021}{\natexlab{c}})}\BibitemShut {NoStop}%
\bibitem [{\citenamefont {Kottmann}\ and\ \citenamefont
  {Aspuru-Guzik}(2022)}]{kottmann2021optimized}%
  \BibitemOpen
  \bibfield  {author} {\bibinfo {author} {\bibfnamefont {J.~S.}\ \bibnamefont
  {Kottmann}}\ and\ \bibinfo {author} {\bibfnamefont {A.}~\bibnamefont
  {Aspuru-Guzik}},\ }\bibfield  {title} {\bibinfo {title} {Optimized low-depth
  quantum circuits for molecular electronic structure using a separable-pair
  approximation},\ }\href
  {https://link.aps.org/doi/10.1103/PhysRevA.105.032449} {\bibfield  {journal}
  {\bibinfo  {journal} {Phys. Rev. A}\ }\textbf {\bibinfo {volume} {105}},\
  \bibinfo {pages} {032449} (\bibinfo {year} {2022})}\BibitemShut {NoStop}%
\bibitem [{\citenamefont {Arute}\ \emph {et~al.}(2020)\citenamefont {Arute},
  \citenamefont {Arya}, \citenamefont {Babbush}, \citenamefont {Bacon},
  \citenamefont {Bardin}, \citenamefont {Barends}, \citenamefont {Boixo},
  \citenamefont {Broughton}, \citenamefont {Buckley}, \citenamefont {Buell}
  \emph {et~al.}}]{arute2020hartree}%
  \BibitemOpen
  \bibfield  {author} {\bibinfo {author} {\bibfnamefont {F.}~\bibnamefont
  {Arute}}, \bibinfo {author} {\bibfnamefont {K.}~\bibnamefont {Arya}},
  \bibinfo {author} {\bibfnamefont {R.}~\bibnamefont {Babbush}}, \bibinfo
  {author} {\bibfnamefont {D.}~\bibnamefont {Bacon}}, \bibinfo {author}
  {\bibfnamefont {J.~C.}\ \bibnamefont {Bardin}}, \bibinfo {author}
  {\bibfnamefont {R.}~\bibnamefont {Barends}}, \bibinfo {author} {\bibfnamefont
  {S.}~\bibnamefont {Boixo}}, \bibinfo {author} {\bibfnamefont
  {M.}~\bibnamefont {Broughton}}, \bibinfo {author} {\bibfnamefont {B.~B.}\
  \bibnamefont {Buckley}}, \bibinfo {author} {\bibfnamefont {D.~A.}\
  \bibnamefont {Buell}}, \emph {et~al.},\ }\bibfield  {title} {\bibinfo {title}
  {Hartree-fock on a superconducting qubit quantum computer},\ }\href
  {https://doi.org/10.1126/science.abb9811} {\bibfield  {journal} {\bibinfo
  {journal} {Science}\ }\textbf {\bibinfo {volume} {369}},\ \bibinfo {pages}
  {1084} (\bibinfo {year} {2020})}\BibitemShut {NoStop}%
\bibitem [{\citenamefont {Helstrom}(1976)}]{helstrom76}%
  \BibitemOpen
  \bibfield  {author} {\bibinfo {author} {\bibfnamefont {C.~W.}\ \bibnamefont
  {Helstrom}},\ }\href@noop {} {\emph {\bibinfo {title} {Quantum detection and
  estimation theory}}}\ (\bibinfo  {publisher} {Academic Press},\ \bibinfo
  {address} {New York, NY},\ \bibinfo {year} {1976})\BibitemShut {NoStop}%
\bibitem [{\citenamefont {Helstrom}(1968)}]{helstrom68}%
  \BibitemOpen
  \bibfield  {author} {\bibinfo {author} {\bibfnamefont {C.~W.}\ \bibnamefont
  {Helstrom}},\ }\bibfield  {title} {\bibinfo {title} {Detection theory and
  quantum mechanics (ii)},\ }\href
  {https://www.sciencedirect.com/science/article/pii/S0019995868907468}
  {\bibfield  {journal} {\bibinfo  {journal} {Information and Control}\
  }\textbf {\bibinfo {volume} {13}},\ \bibinfo {pages} {156} (\bibinfo {year}
  {1968})}\BibitemShut {NoStop}%
\bibitem [{ibm(2021)}]{ibmq}%
  \BibitemOpen
  \href@noop {} {\bibinfo {title} {Ibm quantum.}},\ \bibinfo {howpublished}
  {\url{https://quantum-computing.ibm.com/}} (\bibinfo {year}
  {2021})\BibitemShut {NoStop}%
\end{thebibliography}%

\onecolumngrid
\appendix

\renewcommand{\arraystretch}{1.0}

\section{Proof of Theorem~\ref{thm:main}}

In the following, we provide the proof for Theorem~\ref{thm:main}.
To that end, we first state the proof for Lemma~\ref{lem:closed_form_sdp} from quantum hypothesis testing and subsequently use this to obtain a closed form solution to the SDP formulation of the robustness bounds in~\eqref{eq:sdp_upper_bound} and~\eqref{eq:sdp_lower_bound}. Throughout this section, we make use of the Loewner which is defined by the convex cone of positive semidefinite matrices. That is, for Hermitian matrices $A$ and $B$, we write $A \geq B$ ($A > B$) if $A - B$ is positive semidefinite (positive definite). Furthermore, for a Hermitian operator $A$ with spectral decomposition $A = \sum_i \lambda_i P_i$, we use the notation 
\begin{equation}
\{ A \geq 0 \} := \sum_{i\colon \lambda_i \geq 0} P_i 
\hspace{1em}\text{and}\hspace{1em}
\{ A > 0 \} := \sum_{i\colon \lambda_i > 0} P_i
\end{equation}
to denote the projections onto non-negative and positive eigenspaces of $A$, respectively. The analogous notation $\{ A \leq 0 \}$, $\{ A < 0 \}$ is used for non-positive and strictly negative eigenspaces.

\subsection{Proof of Lemma~\ref{lem:closed_form_sdp}}
Here we state the proof for Lemma~\ref{lem:closed_form_sdp}, which serves as a central tool for deriving the bounds in Theorem~\ref{thm:main}. We remark that this has been shown in the appendix of Ref.~\cite{weber2021optimal}, although in a slightly different form and for pure states. For completeness, we include the proof here and refer the reader to~\cite{weber2021optimal} for proofs of specific helper lemmas. We first introduce some basic notations and definitions associated with binary quantum hypothesis testing (QHT).
For any two states $\rho,\,\sigma$ corresponding to the null and alternative hypothesis respectively, a hypothesis test is defined by an operator $0 \leq \Lambda \leq \Id$ which corresponds to rejecting the null hypothesis in favor of the alternative. The goal of binary quantum hypothesis testing is to find operators $\Lambda$ which optimally discriminate between the two states in the sense that the probability of falsely identifying $\rho$ as $\sigma$, or vice-verse, is minimized. Formally, the type-I and type-II errors are defined as
\begin{align}
    \alpha(\Lambda;\,\rho) &:= \tr{\Lambda\rho}\tag{type-I error}\\
    \beta(\Lambda;\,\sigma) &:= \tr{(\Id - \Lambda)\sigma}\tag{type-II error}
\end{align}
and can be interpreted as the probability of rejecting the null when it is true, and the probability of accepting the null when it is false. One can formalize the problem of finding an operator $\Lambda$ which minimizes the type-II error probability under the constraint the type-I error probability is below some predefined level $\alpha_0\in[0,\,1]$ as the semidefinite program
\begin{equation}
    \label{eq:type2_error_sdp}
    \beta^*(\alpha_0;\,\rho,\,\sigma) := \inf_{0 \leq \Lambda \leq \Id} \{\beta(\Lambda;\,\sigma)\,\lvert\,\alpha(\Lambda;\,\rho) \leq \alpha_0\} = \inf_{0 \leq \Lambda \leq \Id} \{\tr{(\Id - \Lambda)\sigma}\,\lvert\,\tr{\Lambda\rho} \leq \alpha_0\}.
\end{equation}
As shown in his pioneering work, Helstrom showed that this problem is solved by projections onto eigenspaces of the operator $\sigma - t\rho$~(\cite{helstrom76}, Chapter 4). These operators are called Helstrom operators and are of the form
\begin{equation}
    \Lambda_t := P_{t,+} + X_t
    \hspace{1em}\textnormal{where}\hspace{1em}
    0 \leq X_t \leq P_{t,0}
\end{equation}
and where $P_{t,+}$, $P_{t,0}$ are the projections onto the eigenspaces of $\sigma- t\rho$ associated with positive and zero eigenvalues, respectively,
\begin{equation}
    P_{t,+} := \{\sigma - t\rho > 0\},\hspace{1em}
    P_{t,-} := \{\sigma - t\rho < 0\},\hspace{1em}
    P_{t,0} := \Id - P_{t,+} - P_{t,-}.
\end{equation}
We will now review some technical Lemmas which are required for the proof of Lemma~\ref{lem:closed_form_sdp}. For proofs of these results, we refer the reader to the supplementary information of Ref.~\cite{weber2021optimal}.
\begin{lem}
    \label{lem:helstrom_optimal}
    Let $t\geq0$ and let $\Lambda_t = P_{t,+} + X_t$ for $0 \leq X_t \leq P_{t,0}$ be a Helstrom operator. Then, for any operator $0 \leq \Lambda \leq \Id$, with $\alpha(\Lambda;\,\rho) \leq \alpha(\Lambda_t;\,\rho)$, we have
    \begin{equation}
        \beta(\Lambda;\,\sigma) = \tr{(\Id - \Lambda)\sigma} \geq \tr{(\Id - \Lambda_t)\sigma} = \beta(\Lambda_t;\,\sigma)
    \end{equation}
\end{lem}
\begin{proof}
    See the proof of Lemma 6 in the supplementary information of Ref.~\cite{weber2021optimal}.
\end{proof}
The next Lemma describes some fundamental properties of $\alpha(P_{t,+};\,\rho)$ and $\alpha(P_{t,+}  + P_{t,0};\,\rho)$ as functions of $t$.
\begin{lem}
    \label{lem:type_i_properties}
    Let $P_{t,+}$ and $P_{t,0}$ be the projections onto the positive and zero eigenspaces of the operator $\sigma-t\rho$ for $t\geq 0$. Consider the functions
    \begin{equation}
        t\mapsto f(t) := \alpha(P_{t,+};\,\rho),\hspace{1em}
        t\mapsto g(t) := \alpha(P_{t,+} + P_{t,0};\,\rho).
    \end{equation}
    defined for $t\geq 0$. The functions $f$ and $g$ have the following properties:
    \begin{enumerate}
        \item[i)] $f$ is non-decreasing and continuous from the right.
        \item[ii)] $g$ is non-decreasing and continuous from the left.
    \end{enumerate}
\end{lem}
\begin{proof}
    See the proofs of Lemma 2, 3 and 4 in the supplementary information of Ref.~\cite{weber2021optimal}.
\end{proof}
Part $i)$ of this Lemma implies that the following quantity is well defined:
\begin{equation}
    t_0 := \inf\{t\geq 0 \lvert\, \tr{P_{t,+}\rho} \leq \alpha_0\}.
\end{equation}
As a consequence, we have the following sandwich inequalities:
\begin{lem}
    \label{lem:sandwich}
    For any $\alpha_0\in[0,\,1]$, we have the inequalities
    \begin{equation}
        \alpha(P_{t_0,+};\,\rho) \leq \alpha_0 \leq \alpha(P_{t_0,+} + P_{t_0,0};\,\rho).
    \end{equation}
\end{lem}
\begin{proof}
    For details, see the proof of Lemma 5 in the supplementary information of Ref.~\cite{weber2021optimal}.
\end{proof}
Now, we have all technical results to prove Lemma~\ref{lem:closed_form_sdp}.
\begin{customlem}{\ref{lem:closed_form_sdp}}[restated]
	Let $\sigma,\,\rho\in\cS(\cH_d)$ be arbitrary quantum states, $\alpha_0\in[0,\,1]$ and suppose that $\cF(\rho,\,\sigma) \geq 1 - \epsilon$ for $0 \leq \epsilon \leq 1 - \alpha_0$. We have
	\begin{equation}
       \beta^*(\alpha_0;\,\rho,\,\sigma) \, \geq \, \alpha_0(2\epsilon - 1)+(1-\epsilon) - 2\sqrt{\alpha_0\epsilon(1-\alpha_0)(1-\epsilon)}
	\end{equation}
	with equality if the states $\sigma$ and $\rho$ are pure and $\cF(\rho,\,\sigma) = 1 - \epsilon$.
\end{customlem}
\begin{proof}
    As an immediate consequence of Lemma~\ref{lem:helstrom_optimal}, for any $t\geq 0$, we have that
    \begin{equation}
        \beta^*(\alpha(\Lambda_t);\,\rho,\,\sigma) = \beta(\Lambda_t;\,\sigma).
    \end{equation}
    We now show that for any $\alpha_0 \in [0,\,1]$, there exists $t_0 \geq 0$ such that $\alpha(\Lambda_{t_0};\,\rho) = \alpha_0$.
    From part $i)$ of Lemma~\ref{lem:type_i_properties} it follows that  the quantity
    \begin{equation}
        t_0 := \inf\{t\geq 0 \lvert\, \tr{P_{t,+}\rho} \leq \alpha_0\}
    \end{equation}
    is well defined (since $\tr{P_{t,+}\rho}$ is non-increasing and continuous from the right). In addition, as a consequence of the sandwich inequalities from Lemma~\ref{lem:sandwich}, we can define the Helstrom operator
    \begin{equation}
        \Lambda_{t_0} := P_{t_0,+} + q_0\cdot P_{t_0,0},\hspace{2em}\textnormal{with}\hspace{2em}
        q_0 = \begin{cases}
            \frac{\alpha_0 - \alpha(P_{t_0,+};\,\rho)}{\alpha(P_{t_0,0};\,\rho)}, \ \ & \alpha(P_{t_0, 0};\,\rho) = 0\\
            0 & \textnormal{o.w.}\\
        \end{cases}
    \end{equation}
    which satisfies $\alpha(\Lambda_{t_0};\,\rho) = \alpha_0$. In particular, this implies that for any $\alpha_0 \in [0,\,1]$, we have
    \begin{equation}
        \beta^*(\alpha_0;\,\rho,\,\sigma) = \beta(\Lambda_{t_0};\,\sigma).
    \end{equation}
    We will now derive an explicit formula for the quantity $\beta(\Lambda_{t_0};\,\sigma)$ for pure states $\rho=\ketbra{\phi}{\phi}$ and $\sigma=\ketbra{\psi}{\psi}$, and subsequently extend it to the general case for density matrices with arbitrary rank. For the sequel, let $\gamma = \braket{\psi}{\phi}$. Since $\Lambda_{t_0}$ is a linear combination of the projections onto the eigenspaces of $\sigma - t\rho$, consider the eigenvalue problem
    \begin{equation}
        \label{eq:eigenvalue_problem}
        (\sigma - t\rho)\ket{\eta} = \eta \ket{\eta}.
    \end{equation}
    Since here $\sigma$ and $\rho$ are projections of rank one, the operator $\sigma - t\rho$ has rank at most two and there exist at most two eigenstates $\ket{\eta_0},\,\ket{\eta_1}$ corresponding to non-vanishing eigenvalues . In addition, they are linear combinations of $\ket{\psi},\,\ket{\phi}$ so that we can write
    \begin{equation}
        \ket{\eta_k} = z_{k,\psi}\ket{\psi} + z_{k,\phi} \ket{\phi},\hspace{1em} k=0,\,1
    \end{equation}
    with constants $z_{k,\psi},\,z_{k,\phi}$. Thus, solving the eigenvalue problem in~\eqref{eq:eigenvalue_problem} amounts to solving the problem
    \begin{equation}
        \begin{pmatrix}
            1 & \gamma \\
            -t\Bar{\gamma} & -t
        \end{pmatrix}
        \cdot
        \begin{pmatrix}
            z_{k,\psi} \\
            z_{k,\phi}
        \end{pmatrix}
        = \eta_k
        \begin{pmatrix}
            z_{k,\psi} \\
            z_{k,\phi}
        \end{pmatrix},
    \end{equation}
    for which we find eigenvalues
    \begin{equation}
        \begin{gathered}
            \eta_0 = \frac{1}{2}(1-t) - R_t,\hspace{1em} \eta_1 = \frac{1}{2}(1-t) + R_t
            \hspace{1em}\textnormal{with}\hspace{1em}
            R_t = \sqrt{\frac{1}{4}(1 - t)^2 + t(1 - \abs{\gamma}^2)}.
        \end{gathered}
    \end{equation}
    The corresponding eigenvectors $\ket{\eta_0},\,\ket{\eta_1}$ are determined by their coefficients $z_{k,\psi},\,z_{k,\phi}$ for $k=0,\,1$, for which we find
    \begin{equation}
        \begin{gathered}
            z_{k,\psi} = -\gamma A_k,\hspace{1em} z_{k,\phi} = (1-\eta_k)A_k,\hspace{1em} \abs{A_k}^{-2} = 2R_t\abs{\eta_k - 1 + \abs{\gamma}^2}
        \end{gathered}
    \end{equation}
    where the coefficient $A_k$ arises from requiring the eigenvectors $\ket{\eta_k}$ to be normalized (see Ref.~\cite{helstrom68}, Section 8).
    Note that $\forall t\geq 0$ we have $\eta_0 \leq 0$ and $\eta_1 > 0$ so that $P_{t,+} = \ketbra{\eta_1}{\eta_1}$. Recall that we defined $t_0$ to be the positive number $t_0:=\inf\{t\geq 0\colon\, \tr{P_{t,+}\rho} \leq \alpha_0\}$. It follows that
    \begin{equation}
        \tr{P_{t,+}\rho} = \abs{\braket{\phi}{\eta_1}}^2= \frac{1}{2}\left(1 - \frac{1 + t - 2 \abs{\gamma}^2}{\sqrt{(1 + t)^2 - 4t\abs{\gamma}^2}}\right)
    \end{equation}
    and notice that the right hand side is continuous in $t$ over $[0,\,\infty)$ whenever $\abs{\gamma} < 1$.
    Since $t\mapsto \tr{P_{t,+}\rho}$ is non-increasing, it's maximum is attained at $t=0$ so that $\tr{P_{t,+}\rho} \leq \abs{\gamma}^2$ and hence $t_0 = 0$ if $\alpha_0 > \abs{\gamma}^2$.
    In this case, we obtain $\beta^*(\alpha_0;\rho,\,\sigma) = 0$.
    If, on the other hand, $\alpha_0 \leq \abs{\gamma}^2$, then we solve the equation $\tr{P_{t,+}\rho} = \alpha_0$ and obtain the expression for $t_0$
    \begin{equation}
        t_0 = 2\abs{\gamma}^2 - 1 - (2\alpha_0 - 1)\sqrt{\frac{\abs{\gamma}^2(1 - \abs{\gamma}^2)}{\alpha_0(1-\alpha_0)}}.
    \end{equation}
    For $t=t_0$ we have $\eta_0 < 0$ and $\eta_1 > 0$ so that $\Lambda_{t_0} = \ketbra{\eta_1}{\eta_1}$ and $\ket{\eta_1} = -\gamma A_1 \ket{\psi} + (1-\eta_1)A_1\ket{\phi}$. Hence
    \begin{equation}
            \beta(\Lambda_{t_0};\,\sigma)=\, 1 - \abs{\braket{\eta_1}{\psi}}^2 = 1 - \abs{A_1}^2\abs{\gamma}^2\eta_1^2.
    \end{equation}
    Plugging $t_0$ into the expressions above yields
    \begin{equation}
        \label{eq:sdp_closed_pure_states}
        \beta^*(\alpha_0;\,\phi,\,\psi) = \beta(\Lambda_{t_0};\,\sigma)=\, \alpha_0\cdot(1 - 2\abs{\gamma}^2) + \abs{\gamma}^2 - 2 \sqrt{(1-\alpha_0)(1-\abs{\gamma}^2)\abs{\gamma}^2\alpha_0}.
    \end{equation}
    Since the right hand side of~\eqref{eq:sdp_closed_pure_states} is monotonically decreasing in $\abs{\gamma}^2$ and $\abs{\gamma}^2 \geq 1-\epsilon$, the claim follows for pure states. To see that the above expression also constitutes a valid lower bound for mixed states, let $\Psi$ and $\Phi$ be arbitrary purifications of $\sigma$ and $\rho$ respectively, both with purifying system $\cH_E$. It is well known that $\beta^*$ is monotonically increasing under the action of any completely positive and trace preserving map $\cE$, i.e. $\beta^*(\alpha_0;\sigma,\,\rho) \leq \beta^*(\alpha_0;\cE[\sigma],\,\cE[\rho])$ for any $\alpha_0 \in [0,\,1]$. Since the partial trace $\Ptr{E}{\cdot}$ is itself a CPTP map, we have the inequality
    \begin{equation}
        \label{eq:proof_lemma_purifications}
        \begin{aligned}
            \beta^*(\alpha_0;\,\rho,\,\sigma) &= \beta^*(\alpha_0;\,\Ptr{E}{\ketbra{\Phi}{\Phi}},\,\Ptr{E}{\ketbra{\Psi}{\Psi}})\\
            &\geq \beta^*(\alpha_0;\,\Phi,\,\Psi)\\
            &= \alpha_0\cdot(1 - 2\abs{\braket{\Psi}{\Phi}}^2) + \abs{\braket{\Psi}{\Phi}}^2 - 2 \sqrt{(1-\alpha_0)(1-\abs{\braket{\Psi}{\Phi}}^2)\abs{\braket{\Psi}{\Phi}}^2\alpha_0}
        \end{aligned}
    \end{equation}
    where $\Ptr{E}{\cdot}$ denotes the partial trace over the purifying system. It follows from Uhlmann's Theorem that we can choose $\Psi,\,\Phi$ such that $\abs{\braket{\Psi}{\Phi}}^2 = \cF(\rho,\,\sigma)$. The claim now follows from the observation that the RHS of~\eqref{eq:proof_lemma_purifications} is monotonically decreasing in $\abs{\braket{\Psi}{\Phi}}^2$. This completes the proof.
\end{proof}

\subsection{Main Proof}
Given Lemma~\ref{lem:closed_form_sdp}, here we provide the full proof for Theorem~\ref{thm:main}. We first restate the result.
\begin{customthm}{\ref{thm:main}}[restated]
    Let $\sigma,\,\rho\in\cS(\cH_d)$ be density operators with $\cF(\rho,\,\sigma) \geq 1-\epsilon$ for some $\epsilon \geq 0$. Let $A$ be an observable with $-\Id \leq A \leq \Id$ and with expectation value $\langle A \rangle_\rho$ under $\rho$. For $\epsilon \leq \frac{1}{2}(1 + \langle A \rangle_\rho)$, the lower bound of $\langle A \rangle_\sigma$ can be expressed as
    \begin{equation}
        \langle A \rangle_\sigma \geq (1-2\epsilon)\langle A\rangle_\rho - 2\sqrt{\epsilon(1 - \epsilon)(1 - \langle A \rangle_\rho^2)}.
    \end{equation}
    Similarly, for $\epsilon \leq \frac{1}{2}(1 - \langle A \rangle_\rho)$, the upper bound of $\langle A \rangle_\sigma$ becomes
    \begin{equation}
        \langle A \rangle_\sigma \leq (1 - 2\epsilon)\langle A\rangle_\rho + 2\sqrt{\epsilon(1-\epsilon)(1 - \langle A \rangle_\rho^2)}.
    \end{equation}
\end{customthm}
\begin{proof}
    The idea is to first formulate the robustness bounds as semidefintie programs which takes into account the first moment of $A$ under $\rho$ and the assumption that $-\Id \leq A \leq \Id$. This is then connected to the SDP~\eqref{eq:type2_error_sdp} from QHT for which a closed form lower bound is known from Lemma~\ref{lem:closed_form_sdp}. We start with the upper bound. Consider the SDP
    \begin{equation}
        \max \left\{\tr{\Lambda\sigma}\,\lvert \, \tr{\Lambda\rho} = \tr{A\rho},\, -\Id  \leq \Lambda \leq \Id \right\}
    \end{equation}
    which is an upper bound to $\langle A \rangle_\sigma$ since the operator $A$ is feasible.
    We can rewrite this as
    \begin{equation}
        \begin{aligned}
            \max \left\{\tr{\Lambda\sigma}\,\lvert \, \tr{\Lambda\rho} = \tr{A\rho},\, -\Id  \leq \Lambda \leq \Id \right\} &= -1 + 2\max \left\{\tr{\Tilde\Lambda\sigma}\,\lvert\, \tr{\Tilde\Lambda\rho} = \frac{1}{2}(1 + \tr{A\rho}),\, 0 \leq \Tilde{\Lambda} \leq \Id\right\}\\
            &= 1 - 2\beta^*\left(\frac{1}{2}(1 + \tr{A\rho});\,\rho,\,\sigma\right)
    \end{aligned}
    \end{equation}
    where the second equality follows from the fact that replacing the equality with an inequality in the constraint of the SDP~\eqref{eq:type2_error_sdp} leads to the same solution.
    It then follows from Lemma~\ref{lem:closed_form_sdp} that
    \begin{align}
        \langle A \rangle_\sigma &\leq (1-2\epsilon)\langle A\rangle_\rho + 2\sqrt{\epsilon(1 - \epsilon)(1 - \langle A \rangle_\rho^2)}
    \end{align}
    for $\epsilon \geq 0$ with $1 - \epsilon \geq \frac{1}{2}(1 + \langle A \rangle_\rho)$.
    
    To show the lower bound, consider the SDP
    \begin{equation}
        \min \left\{\tr{\Lambda\sigma}\,\lvert \, \tr{\Lambda\rho} = \tr{A\rho},\, -\Id  \leq \Lambda \leq \Id \right\}
    \end{equation}
    which is a lower bound to $\langle A \rangle_\sigma$ since the operator $A$ is feasible. We rewrite this as
    \begin{equation}
        \begin{aligned}
            \min \left\{\tr{\Lambda\sigma}\,\lvert \, \tr{\Lambda\rho} = \tr{A\rho},\, -\Id  \leq \Lambda \leq \Id \right\} &= 2\min\left\{\tr{(\Id - \Tilde\Lambda)\sigma}\, \lvert \, \tr{\Tilde\Lambda\rho} = \frac{1}{2}(1 - \tr{A\rho}),\, 0 \leq \Tilde{\Lambda} \leq \Id\right\} - 1\\
            &= 2\beta^*\left(\frac{1}{2}(1 - \tr{A\rho});\,\rho,\,\sigma\right) - 1
        \end{aligned}
    \end{equation}
    and again use Lemma~\ref{lem:closed_form_sdp} and obtain the lower bound
    \begin{align}
        \langle A \rangle_\sigma \geq (1 - 2\epsilon)\langle A\rangle_\rho - 2\sqrt{\epsilon(1 - \epsilon)(1 - \langle A \rangle_\rho^2)}
    \end{align}
    for $\epsilon \geq 0$ with $1-\epsilon \geq \frac{1}{2}(1 - \langle A \rangle_\rho)$.
\end{proof}

\subsection{Tightness of the SDP bound}
\label{sec:appendix-tightness}
Here we show that the SDP bounds presented in Theorem~\ref{thm:main} are tight for pure states in the sense that, under the given assumptions, there exists an observable which saturates the bound in the case that $\cF(\rho,\,\sigma) = 1-\epsilon$. Let us first consider the upper bound and recall that in the proof of Theorem~\ref{thm:main} we have shown
\begin{equation}
    \begin{aligned}
    \langle A \rangle_\sigma &\leq \max \left\{\tr{\Lambda\sigma}\,\lvert \, \tr{\Lambda\rho} = \tr{A\rho},\, -\Id  \leq \Lambda \leq \Id \right\}\\
    &= 1 - 2\beta^*\left(\frac{1}{2}(1 + \tr{A\rho});\,\rho,\,\sigma\right)\\
    &\leq (1-2\epsilon)\langle A\rangle_\rho + 2\sqrt{\epsilon(1-\epsilon)(1 - \langle A \rangle_\rho^2)}
    \end{aligned}
\end{equation}
where the last inequality follows from Lemma~\ref{lem:closed_form_sdp}.
Additionally, it follows from Lemma~\ref{lem:closed_form_sdp} that for pure states this inequality is indeed an equality, that is, we have shown that for pure states we have
\begin{equation}
    \beta^*(\alpha_0;\,\rho,\,\sigma) = \alpha_0(2\epsilon - 1)+(1-\epsilon) - 2\sqrt{\epsilon\alpha_0(1-\epsilon)(1-\alpha_0)}
\end{equation}
for $1-\epsilon \geq \alpha_0$ for arbitrary $\alpha_0 \in [0,\,1]$.
This followed from constructing an operator $\Lambda^\star$ with $\tr{\sigma(\Id - \Lambda^\star)} = \beta^*(\alpha_0;\,\rho,\,\sigma)$. 
Let $\Lambda^\star$ be such an operator for $\alpha_0 = \tr{\Lambda^\star\rho}$ and let $A^\star := 2\Lambda^\star - \Id$. We notice that
\begin{equation}
    \begin{aligned}
        \langle A^\star\rangle_\sigma &= \tr{\sigma(2\Lambda^\star - \Id)}\\
        & = 1 - 2\tr{\sigma(\Id - \Lambda^\star)}\\
        & = 1 - 2\beta^*\left(\tr{\Lambda^\star\rho});\,\rho,\,\sigma\right)
    \end{aligned}
\end{equation}
which shows that the bound is indeed saturated for the observable $A^\star$. Tightness of the lower bound can be shown analogously.

\section{Proof of Theorem~\ref{thm:gramian_expectation}}
\label{sec:appendix-gramian-expectation}
\begin{customthm}{\ref{thm:gramian_expectation}}[restated]
    Let $\sigma,\,\rho\in\cS(\cH_d)$ be density operators with $\cF(\rho,\,\sigma) \geq 1-\epsilon$ for some $\epsilon \geq 0$. Let $A \geq 0$ be an arbitrary observable with expectation value $\langle A \rangle_\rho$ under $\rho$. For $\epsilon$ with $\sqrt{(1-\epsilon)/\epsilon} \geq \Delta A_\rho/\langle A \rangle_\rho$, the lower bound of $\langle A \rangle_\sigma$ can be expressed as
    \begin{equation}
        \langle A\rangle_\sigma \geq (1 - 2\epsilon)\langle A\rangle_\rho -2 \sqrt{\epsilon(1-\epsilon)}\Delta A_\rho + \frac{\epsilon\langle A^2\rangle_\rho}{\langle A\rangle_\rho}.
    \end{equation}
\end{customthm}
\begin{proof}
Recall that the Gramian inequalities from~\eqref{eq:gramian_inequality} for pure states read
\begin{equation}
    \braket{\phi}{\psi}\langle A\rangle_\phi - \Delta A_\phi\sqrt{1 - \abs{\braket{\phi}{\psi}}^2}
    \ \leq \  \Re(\bra{\psi}A\ket{\phi})
    \ \leq \ \braket{\phi}{\psi}\langle A\rangle_\phi + \Delta A_\phi\sqrt{1 - \abs{\braket{\phi}{\psi}}^2}
\end{equation}
where $\psi$ denotes the target state and $\phi$ the approximate state. The first step is to show that this also holds for mixed states. Uhlmann's Theorem~\cite{uhlmann1976transition} states that for any two mixed states $\rho$ and $\sigma$, we have
\begin{equation}
    \cF(\rho,\,\sigma) = \norm[1]{\sqrt{\rho}\sqrt{\sigma}}^2.
\end{equation}
The trace norm in its variational form is given by $\norm[1]{S} = \max_{U} \abs{\tr{US}}$ for arbitrary $S \in \cL(\cH_d)$ and where the maximization is taken over all unitaries. It follows that there exists $U$ such that
\begin{equation}
    \label{eq:fidelity_rewrite}
    \cF(\rho,\,\sigma) = \abs{\tr{U\sqrt{\rho}\sqrt{\sigma}}}^2
\end{equation}
Let $\ket{\Omega} = \sum_{k=1}^d\ket{k} \tens \ket{k}$ be the unnormalized maximally entangled state on $\cH_d \tens \cH_d$ and note that the trace in the RHS of~\eqref{eq:fidelity_rewrite} can be rewritten as
\begin{equation}
    \begin{aligned}
        \tr{U\sqrt{\rho}\sqrt{\sigma}} &= \sum_k \bra{k} U\sqrt{\rho}\sqrt{\sigma} \ket{k}\\
        &=\sum_{k,l} \bra{k} U\sqrt{\rho}\sqrt{\sigma} \ket{l} \braket{k}{l}\\
        &=\sum_{k,l} \bra{k} \tens \bra{k}( U\sqrt{\rho} \tens \Id )(\sqrt{\sigma} \tens \Id)\ket{l} \tens \ket{l}\\
        &= \bra{\Omega}(\sqrt{\rho} \tens (U^T)^\dagger)(\sqrt{\sigma} \tens \Id) \ket{\Omega}
    \end{aligned}
\end{equation}
where the last equality follows from the definition of $\ket{\Omega}$ and the fact that $(\Id \tens U)\ket{\Omega} = (U^T \tens \Id)\ket{\Omega}$. Define the (pure) states
\begin{equation}
    \label{eq:purifications}
    \ket{\Psi} \equiv (\sqrt{\sigma} \tens \Id) \ket{\Omega},\hspace{2em} \ket{\Phi} \equiv (\sqrt{\rho} \tens U^T) \ket{\Omega}
\end{equation}
and note that these states are purifications of $\sigma$ and $\rho$ respectively and $\cF(\rho,\,\sigma) = \abs{\braket{\Psi}{\Phi}}^2$. Furthermore, we have
\begin{equation}
    \begin{aligned}
        \langle A \tens \Id \rangle_\Psi &= \bra{\Omega} (\sqrt{\sigma} \tens \Id)(A \tens \Id) (\sqrt{\sigma} \tens \Id)\ket{\Omega}\\
        &= \tr{\sqrt{\sigma}A\sqrt{\sigma}}\\
        &= \langle A\rangle_\sigma
    \end{aligned}
\end{equation}
and similarly
\begin{equation}
    \begin{aligned}
        \langle A \tens \Id \rangle_\Phi &= \bra{\Omega} (\sqrt{\rho} \tens (U^T)^\dagger)(A \tens \Id) (\sqrt{\rho} \tens U^T)\ket{\Omega}\\
        &=\bra{\Omega} (\sqrt{\rho}A\sqrt{\rho} \tens (U U^\dagger)^T)\ket{\Omega}\\
        &= \tr{\sqrt{\rho}A\sqrt{\rho}}\\
        &= \langle A\rangle_\rho
    \end{aligned}
\end{equation}
Replacing $A$ with $A^2$ in the above, we find
\begin{equation}
    \begin{aligned}
        (\Delta (A\tens\Id)_\Phi)^2 &= \langle A^2 \tens \Id \rangle_\Phi - \langle A \tens \Id \rangle_\Phi^2\\
        &= \langle A^2\rangle_\rho - \langle A \rangle_\rho^2\\
        &= (\Delta A_\rho)^2.
    \end{aligned}
\end{equation}
Finally, we have
\begin{equation}
    \begin{aligned}
        \bra{\Psi}(A\tens \Id)\ket{\Phi} &= \bra{\Omega}(\sqrt{\sigma} A \sqrt{\rho}U \tens \Id)\ket{\Omega}\\
        &= \tr{\sqrt{\sigma} A \sqrt{\rho}U}\\
        &= \langle A\sqrt{\sigma},\,  \sqrt{\rho}U\rangle_{HS}
    \end{aligned}
\end{equation}
where $\langle\cdot,\,\cdot\rangle_{HS}$ denotes the Hilbert-Schmidt inner product.
Without loss of generality, we assume that $\braket{\Phi}{\Psi}$ is real and non-negative since otherwise we can multiply each purificiation by a global phase. Applying the Gramian inequalities to the observable $A\tens \Id$ and the purifications $\ket{\Psi},\ \ket{\Phi}$, we find
\begin{equation}
    \label{eq:gramian_inequality_mixed}
    \begin{aligned}
        \sqrt{\cF(\rho,\,\sigma)}\langle A\rangle_\rho - \Delta A_\rho\sqrt{1 - \cF(\rho,\,\sigma)} &= \braket{\Phi}{\Psi}\langle A\tens \Id\rangle_\Phi - \Delta (A \tens \Id)_\Phi\sqrt{1 - \abs{\braket{\Phi}{\Psi}}^2}\\
        &\leq  \Re(\bra{\Psi}(A\tens \Id)\ket{\Phi})\\
        &=\Re(\langle A\sqrt{\sigma},\,  \sqrt{\rho}U\rangle_{HS})\\
        &\leq \braket{\Phi}{\Psi}\langle A\tens \Id\rangle_\Phi + \Delta (A \tens \Id)_\Phi\sqrt{1 - \abs{\braket{\Phi}{\Psi}}^2}\\
        &=\sqrt{\cF(\rho,\,\sigma)}\langle A\rangle_\rho + \Delta A_\rho\sqrt{1 - \cF(\rho,\,\sigma)}.
    \end{aligned}
\end{equation}
Thus, we have shown that inequalities similar to~\eqref{eq:gramian_inequality} also hold for mixed states. To finish the proof, note that by assumption $A \geq 0$ and hence $A$ has a square root $A = A^{1/2}A^{1/2}$. The Cauchy-Schwarz inequality yields
\begin{equation}
    \begin{aligned}
        \Re(\langle A\sqrt{\sigma},\,  \sqrt{\rho}U\rangle_{HS}) &\leq \abs{\langle A\sqrt{\sigma},\,  \sqrt{\rho}U\rangle_{HS}}\\
        &=\abs{\langle A^{1/2}\sqrt{\sigma},\,  A^{1/2}\sqrt{\rho}U\rangle_{HS}}\\
        &\leq \abs{\langle A^{1/2}\sqrt{\sigma},\, A^{1/2}\sqrt{\sigma}\rangle_{HS}}^{1/2} \times \abs{\langle A^{1/2}\sqrt{\rho}U,\,  A^{1/2}\sqrt{\rho}U\rangle_{HS}}^{1/2}\\
        &=\abs{\tr{A\sigma}}^{1/2}\times \abs{\tr{A\rho}}^{1/2}.
    \end{aligned}
\end{equation}
Dividing the lower bound in~\eqref{eq:gramian_inequality_mixed} by $\abs{\tr{A\rho}}^{1/2}$ leads to
\begin{equation}
    \abs{\tr{A\sigma}}^{1/2} \geq \sqrt{\cF(\rho,\,\sigma)}\sqrt{\langle A\rangle_\rho} - \frac{\Delta A_\rho}{\sqrt{\langle A\rangle_\rho}}\sqrt{1 - \cF(\rho,\,\sigma)}
\end{equation}
Under the condition that $\sqrt{\cF(\rho,\,\sigma) / (1 - \cF(\rho,\,\sigma))} \geq \Delta A_\rho / \langle A\rangle_\rho$, we can square both sides of the inequality
\begin{equation}
    \begin{aligned}
        \langle A\rangle_\sigma &\geq \left(\sqrt{\cF(\rho,\,\sigma)}\sqrt{\langle A\rangle_\rho} - \frac{\Delta A_\rho}{\sqrt{\langle A\rangle_\rho}}\sqrt{1 - \cF(\rho,\,\sigma)}\right)^2\\
        &= \cF(\rho,\,\sigma)\langle A\rangle_\rho - 2\Delta A_\rho \sqrt{\cF(\rho,\,\sigma)(1 - \cF(\rho,\,\sigma))} + \frac{1-\cF(\rho,\,\sigma)}{\langle A\rangle_\rho}(\Delta A_\rho)^2.
    \end{aligned}
\end{equation}
Notice that the RHS is monotonically decreasing as $\cF(\rho,\,\sigma)$ decreases. Hence, we can replace the true fidelity by a lower bound to it. In particular, for $\epsilon \geq 0$ with $\cF(\rho,\,\sigma) \geq 1 - \epsilon$ and $\sqrt{(1-\epsilon) / \epsilon} \geq \Delta A_\rho / \langle A\rangle_\rho$ we get
\begin{equation}
    \langle A\rangle_\sigma \geq (1 - 2\epsilon)\langle A\rangle_\rho -2 \sqrt{\epsilon(1-\epsilon)}\Delta A_\rho + \frac{\epsilon\langle A^2\rangle_\rho}{\langle A\rangle_\rho}
\end{equation}
which is the desired result.
\end{proof}

\section{Proof of Theorem~\ref{thm:gramian_eigenvalue}}
\label{sec:appendix-gramian-eigenvalue}
\begin{customthm}{\ref{thm:gramian_eigenvalue}}[restated]
    Let $\rho\in\cS(\cH_d)$ be a density operator and let $A$ be an arbitrary observable with eigenstate $\ket{\psi}$ and eigenvalue $\lambda$, $A\ket{\psi} = \lambda\ket{\psi}$. Suppose that $\epsilon \geq 0$ is such that $\cF(\rho,\,\ket{\psi}) = \bra{\psi}\rho\ket{\psi} \geq 1-\epsilon$. Then, lower and upper bounds for $\lambda$ can be expressed as
    \begin{equation}
        \langle A \rangle_\rho - \Delta A_\rho \sqrt{\frac{\epsilon}{1 - \epsilon}} \ \leq \ \lambda \ \leq \ \langle A \rangle_\rho + \Delta A_\rho \sqrt{\frac{\epsilon}{1 - \epsilon}}.
    \end{equation}
\end{customthm}
\begin{proof}
    Recall that in the proof of Theorem~\ref{thm:gramian_expectation} we have shown that a slight modification of the Gramian inequalities from~\eqref{eq:gramian_inequality} also holds for mixed states. Specifically, we have shown that
    \begin{equation}
    \label{eq:gramian_inequality_eigenvalue}
        \begin{aligned}
            \sqrt{\cF(\rho,\,\sigma)}\langle A\rangle_\rho - \Delta A_\rho\sqrt{1 - \cF(\rho,\,\sigma)} \ \leq \ \Re(\langle A\sqrt{\sigma},\,  \sqrt{\rho}U\rangle_{HS}) \ \leq \ \sqrt{\cF(\rho,\,\sigma)}\langle A\rangle_\rho + \Delta A_\rho\sqrt{1 - \cF(\rho,\,\sigma)}.
        \end{aligned}
\end{equation}
where $\langle\cdot,\,\cdot\rangle_{HS}$ denotes the Hilbert-Schmidt inner product and $U$ is a unitary such that $\cF(\rho,\,\sigma) = \abs{\braket{\Phi}{\Psi}}^2$ with
\begin{equation}
    \ket{\Psi} \equiv (\sqrt{\sigma} \tens \Id) \ket{\Omega},\hspace{2em} \ket{\Phi} \equiv (\sqrt{\rho} \tens U^T) \ket{\Omega}.
\end{equation}
Here, by assumption $\sigma=\ketbra{\psi}{\psi}$ is pure with $\ket{\psi}$ an eigenstate of $A$ with eigenvalue $\lambda$, $A\ket{\psi} = \lambda\ket{\psi}$. Note that in this case
\begin{equation}
    \begin{aligned}
        \langle A\sqrt{\sigma},\,  \sqrt{\rho}U\rangle_{HS} &= \lambda \ \langle \sqrt{\sigma},\, \sqrt{\rho}U\rangle_{HS}\\
        &=\lambda \ \tr{\sqrt{\sigma}\sqrt{\rho}U}\\
        &=\lambda \ \bra{\Omega}(\sqrt{\sigma}\sqrt{\rho}U \tens \Id)\ket{\Omega}\\
        &=\lambda \ \bra{\Omega}(\sqrt{\sigma}\sqrt{\rho} \tens U^T)\ket{\Omega}\\
        &=\lambda \ \braket{\Psi}{\Phi}
    \end{aligned}
\end{equation}
where $\ket{\Phi}$ and $\ket{\Psi}$ are the purifications of $\rho$ and $\sigma$ given in the proof of Theorem~\ref{thm:gramian_expectation} in~\eqref{eq:purifications}. Without loss of generality, we assume that $\braket{\Psi}{\Phi}$ is real and positive, since otherwise each state can be multiplied by a global phase. Dividing each side in~\eqref{eq:gramian_inequality_eigenvalue} by $\braket{\Psi}{\Phi}$ and noting that $\sqrt{\cF(\rho,\,\sigma)} = \braket{\Psi}{\Phi}$ yields
\begin{equation}
    \langle A\rangle_\rho - \Delta A_\rho\sqrt{\frac{1 - \cF(\rho,\,\sigma)}{\cF(\rho,\,\sigma)}} \ \leq \ \lambda \ \leq \ \langle A\rangle_\rho + \Delta A_\rho\sqrt{\frac{1 - \cF(\rho,\,\sigma)}{\cF(\rho,\,\sigma)}}.
\end{equation}
Since the RHS (LHS) of this inequality is monotonically increasing (decreasing) as $\cF(\rho,\,\sigma)$ decreases, we can replace the exact fidelity by an upper bound and still get valid bounds. That is, for $\epsilon \geq 0$ with $\cF(\rho,\,\sigma) \geq 1 - \epsilon$, we have
\begin{equation}
    \langle A \rangle_\rho - \Delta A_\rho \sqrt{\frac{\epsilon}{1 - \epsilon}} \ \leq \ \lambda \ \leq \ \langle A \rangle_\rho + \Delta A_\rho \sqrt{\frac{\epsilon}{1 - \epsilon}}.
\end{equation}
which is the desired result.
\end{proof}

\section{Fidelity Estimation}
\label{sec:appendix-fidelity}
\subsection{Proofs}
Here we give proofs for the fidelity lower bounds reported in Section~\ref{sec:fidelity_estimation}. In the sequel, let $H$ be a Hamiltonian with spectral decomposition
\begin{align}
    H = \sum_{i = 0}^m \lambda_i \Pi_i
\end{align}
where $\lambda_i$ are the eigenvalues (in increasing order), $\Pi_i$ is the projections onto the eigenspace associated with $\lambda_i$ and $m$ is the number of distinct eigenvalues. We write $\mathrm{Eig}_H(\lambda_i)$ for the space spanned by eigenvectors of $H$ with eigenvalue $\lambda_i$. In the following we first consider the non-degenerate case, that is when $\mathrm{Eig}_H(\lambda_0)$ is of dimension $1$ and treat the degenerate case separately.
\subsubsection{Non-degenerate case.}
We first consider the non-degenerate case, in which case $\Pi_0 = \ketbra{\psi_0}{\psi_0}$.\\

\paragraph{Eckart's Criterion.}
Eckart's criterion~\cite{eckart1930} is a method to lower bound the fidelity of an approximate state $\rho$ with one of the ground states of the Hamiltonian $H$. We include the proof here for completeness. For general $H$ and $\rho$, note that
\begin{align}
    \langle H - \lambda_0\Id_d\rangle_\rho &= \sum_{n = 1}^m(\lambda_i - \lambda_0)\tr{\Pi_i\rho}\\
    &\geq (\lambda_1 - \lambda_0)(1 - \bra{\psi_0}\rho\ket{\psi_0})
\end{align}
and thus
\begin{equation}
    \label{eq:eckart_criterion_proof}
    \bra{\psi_0}\rho\ket{\psi_0} \geq \frac{\lambda_1 - \langle H\rangle_\rho}{\lambda_1 - \lambda_0}.
\end{equation}\\

\paragraph{Bounds from \eqref{eq:mcclean_fidelity_bound} \& \eqref{eq:fidelity_bound}.}
The fidelity bound from eq.~\eqref{eq:mcclean_fidelity_bound} has been shown in~\cite{mcclean2016theory} for pure states. Here, we
extend this to mixed states and will discuss the degenerate case in the next section. Recall that $\delta$ is a lower bound on the spectral gap, $\lambda_1 - \lambda_0 \geq \delta$. Note that
\begin{align}
    \langle H\rangle_\rho &= \lambda_0 \bra{\psi_0}\rho\ket{\psi_0} + \sum_{i=1}^m \lambda_i \tr{\Pi_i\rho}\\
    &\geq \lambda_0 \bra{\psi_0}\rho\ket{\psi_0} + \sum_{i=1}^m (\lambda_0 + \delta)\tr{\Pi_i\rho}\\
    &=\lambda_0 \bra{\psi_0}\rho\ket{\psi_0} + (\lambda_0 + \delta)(1 - \bra{\psi_0}\rho\ket{\psi_0})\\
    &= \lambda_0 + \delta(1 - \bra{\psi_0}\rho\ket{\psi_0}).\label{eq:mcclean_bound_derivation}
\end{align}
Since by assumption $\lambda_0$ is non-degenerate and $\langle H\rangle_\rho \leq \frac{1}{2}(\lambda_0 + \lambda_1)$ it follows from Eckart's condition that $\bra{\psi_0}\rho\ket{\psi_0} \geq \frac{1}{2}$. By plugging this lower bound into the Gramian eigenvalue bound (Theorem~\ref{thm:gramian_eigenvalue}), we recover Weinstein's lower bound~\cite{Weinstein1934} for mixed states
\begin{equation}
    \lambda_0 \geq \langle H\rangle_\rho - \Delta H_\rho
\end{equation}
where $(\Delta H_\rho)^2$ is the variance of $H$. Using this to lower bound $\lambda_0$ in~\eqref{eq:mcclean_bound_derivation} and rearranging terms leads to the bound in~\eqref{eq:mcclean_fidelity_bound}
\begin{equation}
    \bra{\psi_0}\rho\ket{\psi_0} \geq 1 - \frac{\Delta H_\rho}{\delta}.
\end{equation}
If, on the other hand, we lower bound $\lambda_0$ in~\eqref{eq:mcclean_bound_derivation} by the Gramian eigenvalue lower bound~(Theorem~\ref{thm:gramian_eigenvalue}), we obtain the inequality
\begin{equation}
    \bra{\psi_0}\rho\ket{\psi_0} - 1 + \frac{\Delta H_\rho}{\delta}\sqrt{\frac{1}{\bra{\psi_0}\rho\ket{\psi_0}} - 1} \geq 0.
\end{equation}
The left hand side can be rewritten as a cubic polynomial in $\bra{\psi_0}\rho\ket{\psi_0}$. Under the assumption that $\langle H\rangle_\rho \leq \frac{1}{2}(\lambda_0 + \lambda_1)$ we again use Eckart's condition to find that $\bra{\psi_0}\rho\ket{\psi_0} \geq \frac{1}{2}$. It then follows that the inequality is satisfied if
\begin{equation}
    \bra{\psi_0}\rho\ket{\psi_0} \geq \frac{1}{2}\left(1 + \sqrt{1 - \left(\frac{\Delta H_\rho}{\delta/2}\right)^2}\,\right)
\end{equation}
which is the bound given in~\eqref{eq:fidelity_bound}.

\subsubsection{Degenerate case}
\label{sec:appendix-fidelity-degenerate}
If $\lambda_0$ is degenerate, then $\Pi_0 = \sum_{j=0}^{d_0}\ketbra{\psi_{0,j}}{\psi_{0,j}}$ where $d_0$ denotes the dimensionality of the eigenspace associated with $\lambda_0$. In the following, we first show that if $\rho$ is a pure state, then there exists an element $\ket{\psi}\in\mathrm{Eig}_H(\lambda_0)$ for which each of the fidelity bounds holds. If, on the other hand, $\rho$ is allowed to be mixed, we construct a simple counterexample for which the fidelity bounds are violated.\\

\paragraph{Pure states.}
Suppose that $\rho$ is a pure state $\rho = \ketbra{\phi}{\phi}$.
For Eckart's criterion, an analogous calculation leads to
\begin{equation}
    \sum_{j=0}^{d_0}\abs{\braket{\psi_{0,j}}{\phi}}^2 \geq \frac{\lambda_1 - \langle H\rangle_\rho}{\lambda_1 - \lambda_0}.
\end{equation}
Consider the state
\begin{equation}
    \ket{\psi} = \Gamma^{-1/2}\sum_i \braket{\psi_{0,i}}{\phi}\ket{\psi_{0,i}},\ws\ws \Gamma = \sum_i\abs{\braket{\psi_{0,i}}{\phi}}^2
\end{equation}
and note that $\braket{\psi}{\psi} = 1$ and $\ket{\psi}\in\mathrm{Eig}_H(\lambda_0)$. Furthermore, we have
\begin{equation}
    \abs{\braket{\psi}{\phi}}^2 = \sum_{j=0}^{d_0} \abs{\braket{\psi_{0,j}}{\phi}}^2
\end{equation}
and hence
\begin{equation}
    \abs{\braket{\psi}{\phi}}^2 \geq \frac{\lambda_1 - \langle H\rangle_\rho}{\lambda_1 - \lambda_0}.
\end{equation}
so that Eckart's criterion holds in the degenerate case for this particular choice of eigenstate $\ket{\psi}$ and for pure approximation states $\ket{\phi}$. Using again analogous calculations, we also obtain the extensions of the bounds \eqref{eq:mcclean_fidelity_bound} and~\eqref{eq:fidelity_bound} for pure approximation state $\ket{\phi}$ in the degenerate case and for the same choice of eigenstate $\ket{\psi}$.\\

\paragraph{Counterexample for mixed states.}
If the approximation state $\rho$ is allowed to be arbitrarily mixed, the above fidelity bounds do not hold in general. Indeed, consider the Hamiltonian
\begin{equation}
    H = U\cdot
    \begin{pmatrix}
    \lambda & 0 & 0\\
    0 & \lambda & 0\\
    0 & 0 & \mu\\
    \end{pmatrix}\cdot U^\dagger
\end{equation}
for arbitrary $\lambda,\,\mu\in\R$ with $\lambda < \mu$ and some arbitrary unitary $U$. Furthermore, let $\rho$ be the maximally mixed state $\rho = \frac{1}{3}\Id_3$ and note that $\langle H\rangle_\rho = \frac{2\lambda + \mu}{3}$. Thus, for any $\ket{\psi}$ we find that
\begin{equation}
    \bra{\psi}\rho\ket{\psi} = \frac{1}{3} < \frac{2}{3} = \frac{\mu - \langle H\rangle_\rho}{\mu - \lambda}
\end{equation}
 in violation of Eckart's criterion. To see that we can also construct a counterexample for the other two bounds, we calculate the variance
 \begin{align}
     (\Delta H_\rho)^2 &= \langle H^2\rangle_\rho - \langle H\rangle_\rho^2 = \frac{2(\mu - \lambda)^2}{9}
 \end{align}
 and notice that
 \begin{equation}
     \bra{\psi}\rho\ket{\psi} = \frac{1}{3} < 1 - \frac{\sqrt{2}}{3} = 1 - \frac{\Delta H_\rho}{\delta}
 \end{equation}
 and similarly
 \begin{equation}
     \bra{\psi}\rho\ket{\psi} = \frac{1}{3} < \frac{2}{3} = \frac{1}{2}\left(1 + \sqrt{1 - \left(\frac{\Delta H_\rho}{\delta/2}\right)^2}\,\right).
 \end{equation}

\newpage
\section{Simulations with higher error rates}
\label{sec:higher-error-rates}

\begin{table}[ht]
    \centering
    \begin{tabularx}{\textwidth}{@{}c c c c X c c X c X c c @{}}
        \toprule
        \multirow{2}{*}{\makecell{Bond\\Distance (\AA)}} & \multirow{2}{*}{$E_0$} & \multirow{2}{*}{\makecell{VQE\\(SPA)}} & \multirow{2}{*}{Fidelity} && \multicolumn{2}{c}{Gramian Eigenvalue} && \multicolumn{1}{c}{Gramian Expectation} && \multicolumn{2}{c}{SDP} \\
        \cmidrule(lr){6-7}\cmidrule(lr){9-9}\cmidrule(lr){11-12}
        & & & && lower bound & upper bound && lower bound && lower bound & upper bound\\
        \midrule
        $0.50$ & $-7.21863$ & $-6.88805$ & $0.642$ && $-7.25193$ & $-6.52475$ && $-7.32715$ && $-7.34604$ & $-5.78115$\\
        $0.75$ & $-7.70845$ & $-7.29597$ & $0.643$ && $-7.74638$ & $-6.84531$ && $-7.83840$ && $-7.86084$ & $-5.93280$\\
        $1.00$ & $-7.90403$ & $-7.49060$ & $0.645$ && $-7.94329$ & $-7.03703$ && $-8.03507$ && $-8.05745$ & $-6.10197$\\
        $1.25$ & $-7.97808$ & $-7.57789$ & $0.642$ && $-8.01822$ & $-7.13739$ && $-8.10760$ && $-8.12808$ & $-6.23202$\\
        $1.40$ & $-7.99541$ & $-7.61078$ & $0.645$ && $-8.03193$ & $-7.18843$ && $-8.12397$ && $-8.14492$ & $-6.30208$\\
        $1.50$ & $-8.00062$ & $-7.62723$ & $0.645$ && $-8.03743$ & $-7.21626$ && $-8.12843$ && $-8.14912$ & $-6.34578$\\
        $1.60$ & $-8.00251$ & $-7.63936$ & $0.646$ && $-8.03613$ & $-7.24368$ && $-8.12772$ && $-8.14878$ & $-6.40143$\\
        $1.70$ & $-8.00213$ & $-7.65290$ & $0.644$ && $-8.03798$ & $-7.26681$ && $-8.12450$ && $-8.14473$ & $-6.45257$\\
        $1.75$ & $-8.00127$ & $-7.65684$ & $0.646$ && $-8.03181$ & $-7.28154$ && $-8.12101$ && $-8.14171$ & $-6.49063$\\
        $2.00$ & $-7.99319$ & $-7.68221$ & $0.644$ && $-8.02412$ & $-7.34029$ && $-8.10203$ && $-8.12208$ & $-6.63209$\\
        $2.25$ & $-7.98161$ & $-7.69596$ & $0.639$ && $-8.01110$ & $-7.38116$ && $-8.08242$ && $-8.10091$ & $-6.75887$\\
        $2.50$ & $-7.96941$ & $-7.70682$ & $0.634$ && $-7.99792$ & $-7.41511$ && $-8.06577$ && $-8.08292$ & $-6.85437$\\
        $2.75$ & $-7.95765$ & $-7.71482$ & $0.632$ && $-7.98661$ & $-7.44292$ && $-8.04982$ && $-8.06660$ & $-6.94268$\\
        $3.00$ & $-7.94669$ & $-7.71586$ & $0.624$ && $-7.97520$ & $-7.45615$ && $-8.06343$ && $-8.07915$ & $-6.96150$\\
        $3.25$ & $-7.93706$ & $-7.72211$ & $0.623$ && $-7.96702$ & $-7.47666$ && $-8.09013$ && $-8.11182$ & $-6.95623$\\
        $3.50$ & $-7.92842$ & $-7.72265$ & $0.613$ && $-7.96128$ & $-7.48390$ && $-8.10708$ && $-8.13833$ & $-6.94904$\\
        $3.75$ & $-7.92096$ & $-7.72526$ & $0.606$ && $-7.95586$ & $-7.49476$ && $-8.11762$ && $-8.16116$ & $-6.94122$\\
        $4.00$ & $-7.91490$ & $-7.72740$ & $0.597$ && $-7.95494$ & $-7.50039$ && $-8.13010$ && $-8.18053$ & $-6.93847$\\
        $4.25$ & $-7.90968$ & $-7.72717$ & $0.583$ && $-7.95372$ & $-7.50055$ && $-8.14249$ && $-8.19643$ & $-6.93280$\\
        $4.50$ & $-7.90590$ & $-7.72953$ & $0.568$ && $-7.95324$ & $-7.50594$ && $-8.15344$ && $-8.21024$ & $-6.92201$\\
        $4.75$ & $-7.90306$ & $-7.73045$ & $0.551$ && $-7.95891$ & $-7.50178$ && $-8.16757$ && $-8.22021$ & $-6.92056$\\
        $5.00$ & $-7.90106$ & $-7.72925$ & $0.527$ && $-7.96652$ & $-7.49266$ && $-8.18161$ && $-8.22696$ & $-6.91288$\\
        $5.25$ & $-7.89982$ & $-7.73207$ & $0.516$ && $-7.96729$ & $-7.49630$ && $-8.18806$ && $-8.23073$ & $-6.91086$\\
     \bottomrule
    \end{tabularx}
    \caption{Noisy simulations of VQE for ground state energies of LiH(2, 4) with a separable pair approximation ansatz (SPA). The noise model consists of bitflip on single qubit gates and depolarization error on two qubit gates. The error probability for both noise channels is set to $10\%$.}
    \label{tab:lih_spa_increased_noise}
\end{table}

\begin{table}[ht]
    \centering
    \begin{tabularx}{\textwidth}{@{}c c c c X c c X c X c c @{}}
        \toprule
        \multirow{2}{*}{\makecell{Bond\\Distance (\AA)}} & \multirow{2}{*}{$E_0$} & \multirow{2}{*}{\makecell{VQE\\(UpCCGSD)}} & \multirow{2}{*}{Fidelity} && \multicolumn{2}{c}{Gramian Eigenvalue} && \multicolumn{1}{c}{Gramian Expectation} && \multicolumn{2}{c}{SDP} \\
        \cmidrule(lr){6-7}\cmidrule(lr){9-9}\cmidrule(lr){11-12}
        & & & && lower bound & upper bound && lower bound && lower bound & upper bound\\
        \midrule
        $0.50$ & $-7.21863$ & $-6.66418$ & $0.118$ && $-7.71356$ & $-5.61455$ && $-7.34604$ && $-7.34604$ & $-5.48348$\\
        $0.75$ & $-7.70845$ & $-6.99642$ & $0.120$ && $-8.30129$ & $-5.69055$ && $-7.86084$ && $-7.86084$ & $-5.55101$\\
        $1.00$ & $-7.90403$ & $-7.17549$ & $0.122$ && $-8.49462$ & $-5.85839$ && $-8.05745$ && $-8.05745$ & $-5.71484$\\
        $1.25$ & $-7.97808$ & $-7.27489$ & $0.121$ && $-8.56334$ & $-5.98558$ && $-8.12808$ && $-8.12808$ & $-5.85695$\\
        $1.40$ & $-7.99541$ & $-7.30592$ & $0.107$ && $-8.61254$ & $-6.00004$ && $-8.14492$ && $-8.14492$ & $-5.93376$\\
        $1.50$ & $-8.00062$ & $-7.34297$ & $0.117$ && $-8.56754$ & $-6.12004$ && $-8.14912$ && $-8.14912$ & $-5.99011$\\
        $1.60$ & $-8.00251$ & $-7.37233$ & $0.120$ && $-8.53701$ & $-6.20478$ && $-8.14878$ && $-8.14878$ & $-6.05390$\\
        $1.70$ & $-8.00213$ & $-7.39912$ & $0.120$ && $-8.52712$ & $-6.26860$ && $-8.14473$ && $-8.14473$ & $-6.12434$\\
        $1.75$ & $-8.00127$ & $-7.41376$ & $0.121$ && $-8.51673$ & $-6.31010$ && $-8.14171$ && $-8.14171$ & $-6.16141$\\
        $2.00$ & $-7.99319$ & $-7.47053$ & $0.120$ && $-8.46034$ & $-6.48104$ && $-8.12208$ && $-8.12208$ & $-6.34924$\\
        $2.25$ & $-7.98161$ & $-7.51937$ & $0.120$ && $-8.41918$ & $-6.62024$ && $-8.10091$ && $-8.10091$ & $-6.51810$\\
        $2.50$ & $-7.96941$ & $-7.55501$ & $0.118$ && $-8.38562$ & $-6.72552$ && $-8.08292$ && $-8.08292$ & $-6.65920$\\
        $2.75$ & $-7.95765$ & $-7.58323$ & $0.116$ && $-8.36295$ & $-6.80337$ && $-8.06660$ && $-8.06660$ & $-6.77282$\\
        $3.00$ & $-7.94669$ & $-7.60655$ & $0.116$ && $-8.33990$ & $-6.87168$ && $-8.07915$ && $-8.07915$ & $-6.81481$\\
        $3.25$ & $-7.93706$ & $-7.62472$ & $0.115$ && $-8.32053$ & $-6.92821$ && $-8.11182$ && $-8.11182$ & $-6.82651$\\
        $3.50$ & $-7.92842$ & $-7.64013$ & $0.115$ && $-8.31261$ & $-6.96878$ && $-8.13833$ && $-8.13833$ & $-6.83591$\\
        $3.75$ & $-7.92096$ & $-7.65034$ & $0.103$ && $-8.33392$ & $-6.96726$ && $-8.16116$ && $-8.16116$ & $-6.84353$\\
        $4.00$ & $-7.91490$ & $-7.66562$ & $0.111$ && $-8.31035$ & $-7.02064$ && $-8.18053$ && $-8.18053$ & $-6.85062$\\
        $4.25$ & $-7.90968$ & $-7.68425$ & $0.089$ && $-8.34869$ & $-7.01927$ && $-8.19643$ && $-8.19643$ & $-6.85685$\\
        $4.50$ & $-7.90590$ & $-7.69349$ & $0.089$ && $-8.35523$ & $-7.03175$ && $-8.21024$ && $-8.21024$ & $-6.86264$\\
        $4.75$ & $-7.90306$ & $-7.70511$ & $0.090$ && $-8.35773$ & $-7.05194$ && $-8.22021$ && $-8.22021$ & $-6.86835$\\
        $5.00$ & $-7.90106$ & $-7.70031$ & $0.098$ && $-8.36531$ & $-7.03541$ && $-8.22696$ && $-8.22696$ & $-6.87387$\\
        $5.25$ & $-7.89982$ & $-7.70455$ & $0.077$ && $-8.45740$ & $-6.95222$ && $-8.23073$ && $-8.23073$ & $-6.87932$\\
        \bottomrule
    \end{tabularx}
    \caption{Noisy simulations of VQE for ground state energies of LiH(2, 4) with an UpCCGSD ansatz. The noise model consists of bitflip on single qubit gates and depolarization error on two qubit gates. The error probability for both noise channels is set to $10\%$.}
    \label{tab:lih_upccgsd_increased_noise}
\end{table}

\newpage
\section{Simulations with realistic noise models}
\label{sec:realistic-noise-models}
\begin{figure}[h]
    \centering
    \includegraphics[width=0.5\linewidth]{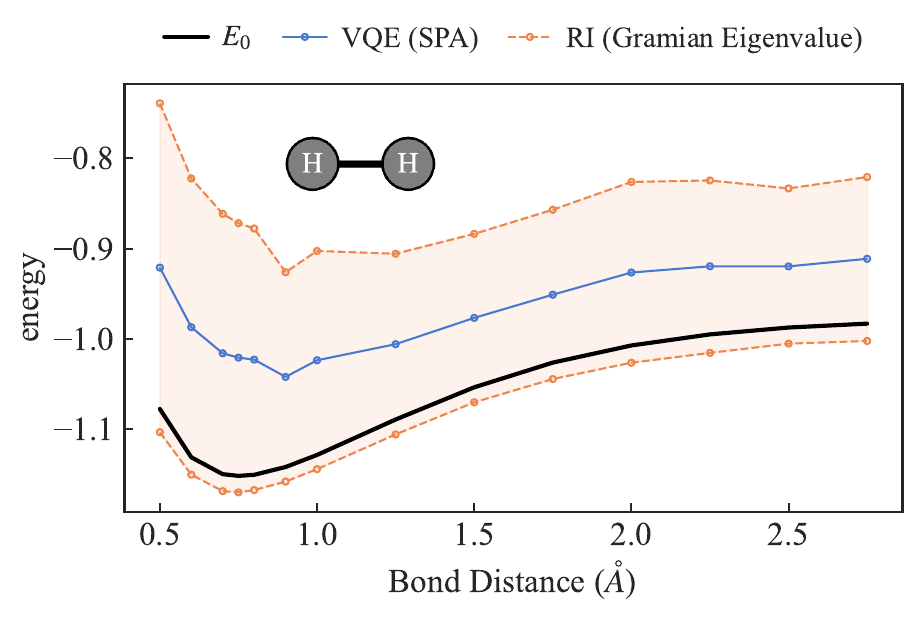}
    \caption{Bond dissociation curves and robustness interval (RI) for H$_2$ in a basis-set-free approach~\cite{kottmann2021reducing,kottmann2020direct}, using the noise model of the 5-Qubit \texttt{ibmq\_vigo} processor, one of the IBM Quantum Canary processors~\cite{ibmq}.}
    \label{fig:h2_vigo}
\end{figure}

\end{document}